\newcommand{\reals}{\mathbb{R}}
\newcommand{\rats}{\mathbb{Q}}
\newcommand{\nums}{\mathbb{N}}
\newcommand{\dyads}{{\rats_2}}
\newcommand{\two}{{\{0,1\}}}
\newcommand{\strs}{{\two^*}}
\newcommand{\seqs}{{\two^{\infty}}}
\newcommand{\map}[3]{{{#1}:{#2}\rightarrow{#3}}}
\newcommand{\prefeq}{\sqsubseteq}
\newcommand{\pref}{\sqsubset}
\newcommand{\emptystr}{\lambda}
\newcommand{\myand}{{\;\mathrel{\wedge}\;}}
\newcommand{\ceiling}[1]{{\lceil{#1}\rceil}}
\newcommand{\floor}[1]{{\lfloor{#1}\rfloor}}
\newcommand{\bigceiling}[1]{{\left\lceil{#1}\right\rceil}}
\newcommand{\assn}{\leftarrow}
\newcommand{\eps}{\varepsilon}
\newcommand{\bigabs}[1]{{\left|{#1}\right|}}
\newcommand{\EXP}{\textbf{EXP}}
\newcommand{\BPP}{\textbf{BPP}}
\newcommand{\tuple}[1]{{\langle{#1}\rangle}}
\newcommand{\union}{\mathop{\cup}}
\newcommand{\intersection}{\mathop{\cap}}
\newtheorem{definition}{Definition}[section]
\newtheorem{theorem}[definition]{Theorem}
\newtheorem{lemma}[definition]{Lemma}
\newtheorem{proposition}[definition]{Proposition}
\newtheorem{corollary}[definition]{Corollary}
\newtheorem{claim}[definition]{Claim}
\newtheorem{observation}[definition]{Observation}
\newtheorem{conjecture}[definition]{Conjecture}
\newtheorem{example}[definition]{Example}
\newenvironment{algo}{\begin{tabbing}
\hspace{0.25in}\=\hspace{0.25in}\=\hspace{0.25in}\=\hspace{0.25in}\=\hspace{0.25in}\=\hspace{0.25in}\=\hspace{0.25in}\=\hspace{0.25in}\=\kill}{\end{tabbing}}
\begin{document}

%\mainmatter

\title{Functions that preserve p-randomness\footnote{An extended abstract of this paper appeared in FCT~2011 \cite{Fenner:p-rand-func}.}}

\author{Stephen A. Fenner\thanks{Partially supported by NSF grants CCF-0515269 and CCF-0915948.}\\%}
%
%\institute{%
Computer Science and Engineering Department\\
University of South Carolina\\
Columbia, SC 29208 USA\\
\mailsaf}

\maketitle

\begin{abstract}
We show that polynomial-time randomness (p-randomness) is preserved under a variety of familiar operations, including addition and multiplication by a nonzero polynomial-time computable real number.  These results follow from a general theorem: If $I\subseteq\reals$ is an open interval, $\map{f}{I}{\reals}$ is a function, and $r\in I$ is p-random, then $f(r)$ is p-random provided
\begin{enumerate}
\item
$f$ is p-computable on the dyadic rational points in $I$, and
\item
$f$ varies sufficiently at $r$, i.e., there exists a real constant $C > 0$ such that either
\[ (\forall x\in I-\{r\})\left[\frac{f(x) - f(r)}{x-r} \ge C\right] \]
or
\[ (\forall x\in I-\{r\})\left[\frac{f(x) - f(r)}{x-r} \le -C\right]\;. \]
\end{enumerate}

Our theorem implies in particular that any analytic function about a p-computable point whose power series has uniformly p-computable coefficients preserves p-randomness in its open interval of absolute convergence.  Such functions include all the familiar functions from first-year calculus.

\bigskip

%\keywords{%
\noindent\textbf{Keywords: } Randomness, p-randomness, complexity, polynomial time, measure, martingale, real analysis%}
\end{abstract}

\noindent\textbf{Subject Classification:} Computational complexity

%\noindent\textbf{Keywords:} Randomness, complexity, polynomial-time, measure, martingale, real analysis

%Here we prove that any sufficiently well-behaved (locally p-computable and locally strictly monotone) function $\reals \rightarrow \reals$ maps p-random numbers to p-random numbers.

\section{Introduction}

Informally, we might call an infinite binary sequence ``random'' if we see no predictable patterns in the sequence.  Put another way, a sequence is random if it looks ``typical,'' that is, it enjoys no easily identifiable properties not shared by almost all other sequences.  Here, the notion of ``almost all'' comes from Lebesgue measure on the unit interval $[0,1]$.  What we mean by ``easily identifiable,'' on the other hand, can vary greatly with the situation.  In statistics, random sequences are useful to avoid bias in sampling or in simulating processes (e.g., queueing systems) that are too complex for us to determine exactly.  In statistics, desirable properties for random sequences include instances of the law of large numbers: a fixed sequence of length $n$ should occur in the sequence asymptotically a $2^{-n}$ fraction of the time, for example.  Other examples include the law of the iterated logarithm.  In cryptography and network security, ``easily identifiable'' must be strengthened to ``unpredictable by an adversary.''  In computer science generally, random sequences should produce successful results most of the time when used in various randomized algorithms.

There is always a trade-off between the amount of randomness possessed by a sequence and the ease with which it can be produced.  Random sequences that can be produced algorithmically (i.e., pseudorandom sequences) are of course desirable, provided they have enough randomness for the task at hand.  The study of algorithmic randomness has a long and rich history (see, for example, \cite{DH:randomness,DHNT:randomness} for references to the literature).  Complexity theoretic notions of randomness were first suggested by Schnorr, and resource-bounded measure and randomness were developed more fully by Lutz (see \cite{Lutz:measure-survey}).  For a survey on the subject, see \cite{AM:randomness-survey}.

A natural trade-off in the context of polynomial-time computation is the notion of polynomial-time randomness, or p-randomness for short (see Definition~\ref{def:p-random}, below), which is closely tied with the notion of p-measure introduced by Lutz \cite{Lutz:thesis,Lutz:measure}.  There are p-random sequences that can be computed in exponential time; in fact, almost all sequences in $\EXP$ (in a resource-bounded measure theoretic sense) are p-random.  Yet p-random sequences are still strong enough for many common tasks, both statistical and computational.  For example, p-random sequences satisfy the laws of large numbers and the iterated logarithm (see \cite{Wang:randomness}), and they provide adequate sources for $\BPP$ computations and have many other desirable computational properties (see \cite{Lutz:measure-survey}).

The current work addresses some geometric aspects of p-random sequences.  Recently, connections between the geometry of Euclidean space and effective and resource-bounded measure and dimension have been found \cite{LM:fractals,LW:connectivity}.  The question of how the complexity or measure theoretic properties of a real number are altered when it is transformed via a real-valued function goes back at least to Wall~\cite{Wall:normal}, who showed that adding or multiplying a nonzero rational number to a real number whose base-$k$ expansion is normal\footnote{An infinite sequence $s$ over a $k$-letter alphabet $\Sigma$ is \emph{normal} iff for any finite string $w\in\Sigma^*$, there are $nk^{-|w|}(1+o(1))$ occurrences of $w$ as a substring among the first $n$ letters of $s$, as $n$ tends to infinity.}  yields another real with a normal base-$k$ expansion.  Doty, Lutz, \& Nandakumar recently extended Wall's result, showing that the finite-state dimension of the base-$k$ expansion of a real number is preserved under addition or multiplication by a nonzero rational number \cite{DLN:arithmetic}.  At the other extreme of the complexity spectrum, it is not hard to show that algorithmic randomness (Martin-L\"of randomness \cite{MartinLoef:random}) is preserved under addition or multiplication by a nonzero computable real, regardless of the base of the expansion.

In this paper we take a middle ground, considering how polynomial-time computable functions mapping reals to reals preserve p-randomness.  We show (Theorem~\ref{thm:main}, below) that such a function $f$ maps a p-random real $r$ to a p-random real $f(r)$ provided $f$ satisfies a kind of anti-Lipschitz condition in some neighborhood of $r$: $f(x)$ varies from $f(r)$ at least linearly in $x-r$.  (This result still holds even if $f$ is not monotone in any neighborhood of $r$, or if $f$ is only polynomial-time computable on dyadic rational inputs, or if $f$ enjoys no particular continuity properties.)

Our result has a number of corollaries: p-randomness is preserved under addition and multiplication by nonzero p-computable reals (complementing the results in \cite{Wall:normal,DLN:arithmetic} and the folklore result about algorithmically random reals); it is also preserved by polynomial and rational functions (with p-computable coefficients) and all the familiar transcendental functions on the reals, e.g., exponential, logarithmic, and trigonometric functions.

The polynomial-time case presents some technical challenges not present with unbounded computational resources.  Roughly speaking, given a polynomial-time approximable function $\map{f}{\reals}{\reals}$, our goal is to define a betting strategy (i.e., a martingale; see Section~\ref{sec:basic}) that bets on the next bit of the binary expansion of a real number $r$, given previous bits.  The strategy is based on the behavior of an assumed strategy $d$ that successfully bets on $f(r)$.  If we had no resource bounds, then we could approximate $f$ at various points as closely as needed to obtain a good sample of $d$'s behavior on $f$ applied to those points, allowing us to mimic $d$ and thus succeed on $r$.  Since we are polynomial-time-bounded, however, we have no such luxury, and we have to settle for rougher approximations of $f$.  For example, $d$ could succeed on $f(0.0111111111\cdots)$ (where there is a long string of $1$'s before the next $0$ in the argument to $f$) but lose everything on $f(0.10000\cdots)$, which is close by.  If we only have a poor approximation to $f$, then we cannot distinguish the two cases above, and so $d$ is no good at telling us how to bet on the first digit after the decimal point.  Fortunately, we may assume that $d$ is conservative---in the sense that it does not bet drastically---so that $d$'s assets are relatively insensitive to slight variations in the real numbers corresponding to the sequences it bets on.  

%the martingales we construct have limited lookahead, and two real numbers may have vastly different binary expansions and yet be very close to each other in real distance.  We must take care to define our martingales to straddle these bifurcations in the expansions.

\bigskip

Section~\ref{sec:basic} has basic definitions, including martingales and p-randomness.  Section~\ref{sec:functions} describes the conditions on real-valued functions sufficient to preserve p-randomness.  Our main results are in Section~\ref{sec:main}, where we prove that these conditions indeed suffice; Theorem~\ref{thm:main} is the main result of that section.  In Section~\ref{sec:apps}, we show that these conditions hold for a variety of familiar functions.  In Section~\ref{sec:tight}, we give evidence that the strongly varying hypothesis in Theorem~\ref{thm:main} is tight.  In Section~\ref{sec:measure}, we provide a result about p-measure that is analogous to our main result about p-randomness.  We suggest further research in Section~\ref{sec:open}.

%This paper is an extended abstract, with most of the proofs omitted.  A complete draft with all proofs can be found in \cite{fullpaper}.

\section{Notation and basic definitions}
\label{sec:basic}

We let $\nums = \{0,1,2,\ldots\}$.  We let $\rats$ be the set of rational numbers.  A \emph{dyadic rational} is some $q\in\rats$ expressible as $\pm a/2^b$ for some $a,b\in\nums$.  We let $\dyads$ denote the set of dyadic rational numbers.
%For $b\in\nums$, two diadic rationals are \emph{$b$-adjacent} if one can be expressed as $a/2^b$ and the other as $(a+1)/2^b$ for some integer $a$.

For real $x>0$, we let $\lg x$ denote $\log_2 x$.

In this paper, we only consider the binary expansions of real numbers.  If need be, all our results can easily be modified to other bases.

Our basic notions and results about p-computability, martingales, and randomness in complexity theory are standard.  See, for example, \cite{Lutz:measure,Lutz:measure-survey,AM:randomness-survey}.

Let $w\in\strs$ and $s\in\seqs$.  We let $|w|$ denote the length of $w$, and for any $0\le i < |w|$ we let $w[i]$ be the $(i+1)$st bit of $w$.  Similarly, for any $i\in\nums$ we let $s[i]$ denote the $(i+1)$st bit of $s$.  For any $m,n\in\nums$ with $m\le n$, we let $s[m\ldots(n-1)] = s[m]s[m+1]\cdots s[n-1] \in\strs$ denote the substring consisting of the $(m+1)$st through the $n$th bit of $s$.  We let $\two^n$ denote the set of strings in $\strs$ of length $n$.  If $v\in\strs\union\seqs$, we let $w\prefeq v$ mean that $w$ is a prefix of $v$, and we let $w\pref v$ mean that $w$ is a proper prefix of $v$.  We denote the empty string by $\emptystr$.

Recall that a \emph{martingale} is a function $\map{d}{\strs}{\reals}$ such that for every $w\in\strs$,
\[ 0 \le d(w) = \frac{d(w0) + d(w1)}{2}\;. \]
We will also assume without loss of generality that $d(\emptystr) \le 1$.  We say that $d$ \emph{succeeds} on a sequence $s\in\seqs$ iff
\[ \limsup_{n\rightarrow\infty} d(s[0\ldots(n-1)]) = \infty\;. \]
We say that $d$ \emph{strongly succeeds} on $s$ iff
\[ \liminf_{n\rightarrow\infty} d(s[0\ldots(n-1)]) = \infty\;. \]

\begin{definition}\rm
Fix any $k\in\nums$.  A function $\map{d}{\strs}{\reals}$ is \emph{$n^k$-computable} if there is a function $\map{\hat{d}}{\strs\times\strs}{\rats}$ such that
\[ \left| d(w) - \hat{d}(w,0^r) \right| \le 2^{-r} \]
for every $w\in\strs$ and $r\in\nums$, and in addition, $\hat d(w,0^r)$ is computable in time $O((|w|+r)^k)$.  We say that $\hat{d}$ is a \emph{$n^k$-approximator} for $d$.  We say that $d$ is \emph{p-computable} if $d$ is $n^k$-computable for some $k$, and that $\hat d$ is a \emph{p-approximator} for $d$ if $\hat d$ is an $n^k$-approximator for $d$, for some $k$.  A real number $c$ is \emph{$n^k$-computable} (respectively, \emph{p-computable}) if the constant function $\strs\rightarrow\{c\}$ is $n^k$-computable (respectively, p-computable), and we may suppress the first argument in a p-approximator for $c$.
\end{definition}

\begin{definition}\label{def:p-random}\rm
Let $s\in\seqs$ be any sequence.
\begin{enumerate}
\item
For any $k\in\nums$, $s$ is \emph{$n^k$-random} if no $n^k$-computable martingale succeeds on $s$.
\item
The sequence $s\in\seqs$ is \emph{p-random} if $s$ is $n^k$-random for all $k$, i.e., no p-computable martingale succeeds on $s$.
\end{enumerate}
\end{definition}

\begin{definition}\rm
We will say that a martingale $d$ is \emph{conservative} iff
\begin{enumerate}
\item
for any $w\in\strs$ and $b\in\two$,
\[ \frac{d(w)}{2} \le d(wb) \le \frac{3d(w)}{2}\;, \]
and
\item
for any $s\in\seqs$, if $d$ succeeds on $s$, then $d$ strongly succeeds on $s$.
\end{enumerate}
\end{definition}

Note that if $d$ is conservative, then $d(w) \le (3/2)^{|w|}$ for all $w$.  It is well-known (and easy to show) that if there is a
p-computable martingale that succeeds on $s$, then there is a conservative p-computable martingale that succeeds on $s$.  Moreover, there is a bound on the running time of the conservative martingale that depends only on the running time of the original martingale (and not on the martingale itself or on $s$).  More precisely,

\begin{proposition}\label{prop:conservative}
For any $k\in\nums$, there exists $\ell\in\nums$ such that, for any $n^k$-computable martingale $d$, there exists a conservative $n^\ell$-computable martingale $d'$ that (strongly) succeeds on every sequence that $d$ succeeds on.
\end{proposition}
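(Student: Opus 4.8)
The plan is to obtain $d'$ from $d$ in two stages. First I convert $d$ into a martingale $d_1$ that bets ``slowly'' (so that it satisfies the first conservativity condition) and still succeeds on every sequence $d$ does; then I apply a savings-account transformation to $d_1$, producing $d_2$ that also satisfies the second condition, and take $d' = d_2$. Throughout, the real work is not the definitions but checking that each new martingale is p-computable with a running-time exponent that depends only on $k$ and on absolute constants, never on $d$ or on the sequence bet upon.

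\emph{Stage 1: slowing the bets.} Since adding the constant martingale $1$ alters neither p-computability nor the class of sequences succeeded on, I first replace $d$ by $e = d+1$, so that $1 \le e(w) \le 2^{|w|+1}$ for every $w\in\strs$. I then define $d_1$ by $d_1(\emptystr) = 1$ and
\[ d_1(wb) = d_1(w)\cdot\frac{e(w)+e(wb)}{2e(w)} \qquad (w\in\strs,\ b\in\two). \]
A one-line check gives $d_1(w0)+d_1(w1) = 2d_1(w)$, so $d_1$ is a martingale, and from $0\le e(wb)\le 2e(w)$ each ratio $d_1(wb)/d_1(w)$ lies in $[1/2,3/2]$, which is the first conservativity condition. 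Expanding the telescoping products of $d_1$ and of $e$ along a sequence $s$ shows $d_1(s[0\ldots(n-1)]) = \prod_{i<n}(1+x_i/2)$ and $e(s[0\ldots(n-1)]) = e(\emptystr)\prod_{i<n}(1+x_i)$ for suitable $x_i\in(-1,1]$; combined with the elementary inequality $\lg(1+x/2) \ge \tfrac12\lg(1+x)$ for $x\in(-1,1]$ (immediate from the sign of the derivative), this yields $\lg d_1(s[0\ldots(n-1)]) \ge \tfrac12\lg e(s[0\ldots(n-1)]) - \tfrac12\lg e(\emptystr)$, so $d_1$ succeeds wherever $e$, and hence $d$, succeeds. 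For p-computability, $d_1(w)$ is a product of $|w|$ factors each in $[1/2,3/2]$, so it lies in $[2^{-|w|},(3/2)^{|w|}]$; since $e$ is bounded below by $1$, approximating each $e(w[0\ldots i])$ to additive precision $2^{-O(|w|+r)}$ with the $n^k$-approximator for $e$ keeps all relative errors under control, and combining the factors gives an $n^{\ell_1}$-approximator for $d_1$ with $\ell_1$ depending only on $k$.

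\emph{Stage 2: the savings account.} For each integer $j\ge1$ let $g_j$ be the martingale that copies the bets of $d_1$ until the first prefix $w'$ of its input at which $\hat{d_1}(w',0^{|w'|+j+2})\ge 2^j$ (using the Stage-1 approximator), and stays constant from then on; put $d' = d_2 = \sum_{j\ge1} 2^{-j}g_j$. Each $g_j$ is a martingale obeying the first conservativity condition, so $d_2$ is too (it is a convex combination), $d_2(\emptystr)=1$, and $d_2(w)\le(3/2)^{|w|}$. The dichotomy that drives everything: if $d_1$ succeeds on $s$, then every $g_j$ eventually freezes on $s$ at a value exceeding $2^{j-1}$, so $\liminf_n d_2(s[0\ldots(n-1)]) \ge \sum_{j=1}^{J} 2^{-j}\cdot 2^{j-1} = J/2$ for every $J$, i.e.\ $d_2$ strongly succeeds on $s$; whereas if $d_1$ is bounded by $M$ on $s$, then each $g_j$ is bounded by $M$ on $s$ (when it freezes, its value is a value taken by $d_1$ along $s$), so $d_2\le M$ on $s$. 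Applying the first alternative to any $s$ on which $d$ --- hence $d_1$ --- succeeds shows $d_2$ succeeds (indeed strongly) there; the two alternatives together show that $d_2$ succeeds on $s$ only if $d_1$ does, and therefore only if $d_2$ strongly succeeds on $s$, which is the second conservativity condition. For p-computability, the tail $\sum_{j>J} 2^{-j}g_j(w)$ is at most $2^{-J}(3/2)^{|w|}$, so it suffices to take $J = |w|+r+O(1)$ and compute each of $g_1(w),\dots,g_J(w)$ to within $2^{-r-2}$; each such value is obtained by scanning the prefixes of $w$ and querying the Stage-1 approximator at precision $2^{-O(|w|+r)}$, all in time polynomial in $|w|+r$ with exponent depending only on $\ell_1$, hence only on $k$. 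Taking $\ell$ to be this final exponent completes the construction.

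\emph{Where the difficulty lies.} The conceptual ingredients are standard; the delicate part is the p-computability bookkeeping, above all making the threshold test ``has $d_1$ reached $2^j$ yet?'' robust when $d_1$ is available only through approximations. I handle this by baking a little slack into the test (precision $2^{-|w'|-j-2}$), enough that a genuine crossing of $2^j$ is always eventually registered while a spurious trigger can never fire so low as to break the bound $g_j > 2^{j-1}$ at the freeze point. The other point that needs care, at both stages, is confirming that the precision parameters passed to the inherited approximators grow only linearly in $|w|+r$, so that the final exponent $\ell$ is a function of $k$ alone, uniform over all $n^k$-computable $d$.
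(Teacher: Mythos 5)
The paper itself offers no proof of this proposition (it is invoked as a well-known fact), and your two-stage construction---first slowing the bets via $e=d+1$ and $d_1(wb)=d_1(w)\,\frac{e(w)+e(wb)}{2e(w)}$, then forming the savings-account mixture $d'=\sum_{j\ge 1}2^{-j}g_j$ with freezing thresholds tested against the rational approximator $\hat{d_1}(w',0^{|w'|+j+2})$---is precisely the standard folklore argument being appealed to, and it is correct: the ratio bounds give condition (1), the inequality $(1+x/2)^2\ge 1+x$ transfers success from $e$ to $d_1$, a trigger forces $d_1(w')\ge 2^j-2^{-(|w'|+j+2)}>2^{j-1}$ while any prefix with $d_1\ge 2^j+1$ must trigger, yielding the success/boundedness dichotomy and hence condition (2), and all precision parameters grow linearly in $|w|+r$ so the exponent $\ell$ depends only on $k$. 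So your proposal is correct and supplies, in essentially the intended way, the proof the paper omits.
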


We identify a sequence $s\in\seqs$ with a real number $0.s\in[0,1]$ via the usual binary expansion: $0.s := \sum_{i=0}^\infty s[i]2^{-(i+1)}$.  This correspondence is one-to-one except on $\dyads$, where it is two-to-one.  For every $x\in\strs$, we define $0.x := 0.x000\cdots$, and we define the \emph{dyadic interval}
\[ \Gamma_x := [0.x,0.x+2^{-|x|}] = \{ 0.s: s\in\seqs \myand x\pref s\}\;. \]
%Note that the endpoints of $\Gamma_x$ are $|x|$-adjacent dyadic rationals.

%We define addition, negation, and subtraction on $[0,1]$ cyclically
%mod $1$.  If $X\subseteq[0,1]$ is any set and $c\in[0,1]$, we define
%$X+c = \{x+c : x\in X\}$ as is customary.  Similarly for subtraction.
%If $\Gamma_y$ is some dyadic interval corresponding to a string
%$y\in\strs$, then we call the set $\Delta_y = \Gamma_y - c$ a
%\emph{shifted dyadic interval}.  Note that due to cyclic addition, $\Delta_y$ may not be an interval; it may be the union of two intervals.  Nonetheless, we will still call $0.y-c$ the \emph{left endpoint} of $\Delta_y$ and $0.y-c+2^{-|y|}$ the \emph{right endpoint} of $\Delta_y$.

For $s\in\seqs$, we define $0.s$ to be p-random (respectively, $n^k$-random) iff $s$ is p-random (respectively, $n^k$-random).  If $x\in\reals$, then we define $x$ to be p-computable (p-random) just as we do for $x - \floor x$, and similarly for $O(n^k)$ computability and $n^k$-randomness.  It is well-known that no p-computable real number is p-random.

\section{Functions of interest}
\label{sec:functions}

%\begin{definition}\rm
%Let $I\subseteq \reals$ be an interval.  A function $\map{f}{I}{\reals}$ \emph{preserves p-randomness} if $f$ maps any p-random point in $I$ to a p-random point.
%\end{definition}

We will restrict our attention to certain types of real-valued functions of a real variable.  We are only interested in the behavior of these functions on p-random inputs.  For simplicity, we will only consider functions with domain $[0,1]$, but this is in no way an essential restriction.  Our functions will possess a certain p-computability property and a certain strong variation property.  Both these properties are \emph{local} in the sense that we only care about them in the vicinity of a p-random number.

\begin{definition}\rm
A function $\map{f}{[0,1]}{\reals}$ is \emph{weakly p-computable} if there exists a polynomial-time computable function $\map{\hat{f}}{\strs\times\strs}{\rats}$ such that for any $w\in\strs$ and $r\in\nums$,
\[ \left| \hat{f}(w,0^r) - f(0.w) \right| \le 2^{-r}\;. \]
Furthermore, for constant $k\in\nums$, if $\hat f(w,0^r)$ is computable in time $O((|w|+r)^k)$, then we say that $f$ is \emph{weakly $n^k$-computable}.
\end{definition}

Note that a weakly p-computable function can behave arbitrarily on $[0,1] - \dyads$.

\begin{definition}\label{def:weak-p-comp-at-x}\rm
Let $\map{f}{[0,1]}{\reals}$ be a function and let $\Gamma_y\subseteq [0,1]$ be some dyadic interval with $y\in\strs$.  We say that $f$ is \emph{weakly p-computable on $\Gamma_y$} iff there exists a ptime computable function $\map{\hat{f}}{\strs\times\strs}{\rats}$ such that for any $w\in\strs$ and $r\in\nums$,
\[ \left| \hat{f}(w,0^r) - f(0.(yw)) \right| \le 2^{-r}\;. \]
If $x\in [0,1]$, then we say that \emph{$f$ is weakly p-computable at $x$} iff $f$ is weakly p-computable on some dyadic interval containing $x$.

All these notions carry over in the obvious way when ``p-computable'' is replaced with ``$n^k$-computable.''

We say that $f$ is \emph{locally weakly p-computable} if $f$ is weakly p-computable at all p-random points in $[0,1]$.
\end{definition}

[Note that $0.(yw) \in \dyads$ is the dyadic rational number corresponding to the string $yw$ (the concatenation of $y$ and $w$).]

%[To add: comparison with other notions of p-computability for real-to-real functions?]

In other words, $f$ is weakly p-computable at $x$ iff we can approximate $f$ on the dyadic rationals in some dyadic interval containing $x$ in polynomial time.  Notice that we are \emph{not} insisting that $f$ have any continuity properties.  This means in particular that $\hat{f}$ may not uniquely determine $f$ on $\Gamma_x$.  Notice also that a function may be locally weakly p-computable but not ``globally'' p-computable, being patched together nonuniformly with various p-computable functions on different dyadic intervals.

%\paragraph{Remark.}
We can extend Definition~\ref{def:weak-p-comp-at-x} to weak p-computability at an arbitrary point $x\in\reals$ in the natural way.

\begin{definition}\label{def:strongly-varies}\rm
Let $I\subseteq \reals$ be an interval, let $\map{f}{I}{\reals}$ be a function, and let $x\in I$ be some point.  We say that $f$ \emph{strongly varies at $x$ on $I$} iff there is some real constant $C>0$ such that either
\begin{enumerate}
\item
for all $z\in I - \{x\}$,
\[ \frac{f(z) - f(x)}{z-x} \ge C\;, \]
or
\item
for all $z\in I$,
\[ \frac{f(z) - f(x)}{z-x} \le -C\;. \]
\end{enumerate}
In case~(1) we say that \emph{$f$ strongly increases at $x$ on $I$}, and in case~(2) \emph{$f$ strongly decreases at $x$ on $I$}.

We say that \emph{$f$ strongly varies at $x$} if $f$ strongly varies at $x$ on $N$ for some open interval $N$ containing $x$.  We define $f$ strongly increasing/decreasing at $x$ analogously.
\end{definition}

%For continuous functions $f$ we have $(2)$ implies $(1)$, but we are not assuming that $f$ is continuous.

%\begin{definition}\rm
%Let $\map{f}{[0,1]}{\reals}$ be a function and let $x\in [0,1]$ be a point in the domain of $f$.  We say that \emph{$f$ is strongly monotone at $x$} if $f$ is strongly monotone on some neighborhood of $x$.

%We say that $f$ is \emph{locally strongly monotone} if $f$ is strongly monotone at all p-random points in $[0,1]$.
%\end{definition}

The notion of strong variation is illustrated in Figure~\ref{fig:strong-variation}.
\begin{figure}
\centering
\input{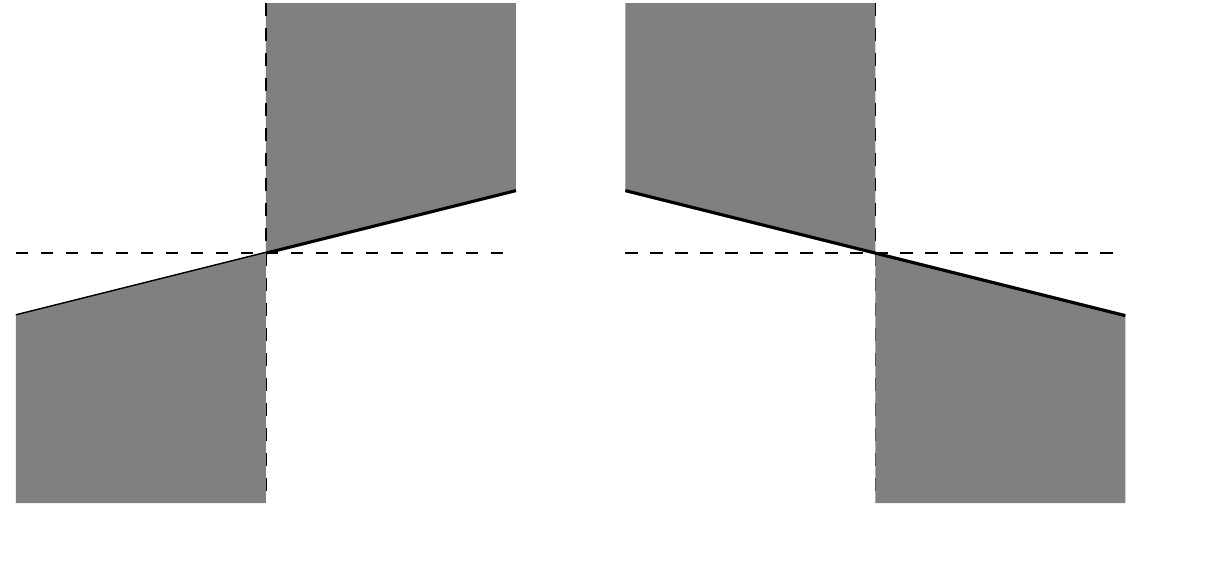_t}
%\begin{picture}(0,0)%
%\includegraphics{strong-variation.pdf}%
%\end{picture}%
%\setlength{\unitlength}{3947sp}%
%%
%\begingroup\makeatletter\ifx\SetFigFont\undefined%
%\gdef\SetFigFont#1#2#3#4#5{%
%  \reset@font\fontsize{#1}{#2pt}%
%  \fontfamily{#3}\fontseries{#4}\fontshape{#5}%
%  \selectfont}%
%\fi\endgroup%
%\begin{picture}(5852,2713)(1125,-4262)
%\put(6526,-2836){\makebox(0,0)[lb]{\smash{{\SetFigFont{10}{12.0}{\rmdefault}{\mddefault}{\updefault}$f(r)$}}}}
%\put(2401,-4186){\makebox(0,0)[b]{\smash{{\SetFigFont{10}{12.0}{\rmdefault}{\mddefault}{\updefault}$r$}}}}
%\put(3601,-2836){\makebox(0,0)[lb]{\smash{{\SetFigFont{10}{12.0}{\rmdefault}{\mddefault}{\updefault}$f(r)$}}}}
%\put(5326,-4186){\makebox(0,0)[b]{\smash{{\SetFigFont{10}{12.0}{\rmdefault}{\mddefault}{\updefault}$r$}}}}
%\end{picture}%
\caption{For $f$ to strongly vary at $r$, its graph must confine itself to the shaded region on the left (if strongly increasing) or the right (if strongly decreasing) in some neighborhood of $r$.  The thick line on the left has slope $C$ (satisfying $y-f(r) = C(x-r)$), and the line on the right has slope $-C$ (satisfying $y-f(r) = -C(x-r)$), for some constant $C>0$.  Both diagrams depict an arbitrarily small neighborhood of $r$.}\label{fig:strong-variation}
\end{figure}

\begin{example}\label{ex:nonzero-derivative}
If $f$ is $C^1$ in a neighborhood of $x$ and $f'(x) \ne 0$, then $f$ strongly varies at $x$.
\end{example}

\section{Main result}
\label{sec:main}

Here is our main technical theorem, from which most of the other results in the paper follow easily.

\begin{theorem}\label{thm:main}
Let $I\subseteq \reals$ be some interval and $\map{f}{I}{\reals}$ some function.  Suppose $r$ is a p-random point in the interior of $I$.  If $f$ is weakly p-computable at $r$ and strongly varies at $r$, then $f(r)$ is p-random.
\end{theorem}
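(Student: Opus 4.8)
I will argue the contrapositive. Suppose some p-computable martingale succeeds on $f(r)$; by Proposition~\ref{prop:conservative} I may take it to be a conservative $n^\ell$-computable martingale $d$, where $\ell$ depends only on the original running time. Thus $\tfrac12 d(u)\le d(ub)\le\tfrac32 d(u)$ for all $u,b$, and $d$ strongly succeeds on $f(r)$. It is convenient to work with the atomless Borel measure $\mu$ on $[0,1]$ defined by $\mu(\Gamma_u)=d(u)2^{-|u|}$; conservativeness of $d$ makes $\mu$ ``doubling''\nobreakdash-like and, crucially (as flagged in the introduction), makes $\mu$ of a rational interval polynomial-time approximable from an approximator to $d$ and insensitive to small movements of that interval's endpoints. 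Let $\Gamma_y$ be a dyadic interval, contained in the interior of $I$, with $r$ in its interior, on which $f$ is weakly p-computable and on which $f$ strongly increases at $r$ (the strongly-decreasing case is entirely symmetric). The hypothesis I will lean on is the anti-Lipschitz inequality together with its sign: for every dyadic $z\in\Gamma_y$, $|f(z)-f(r)|\ge C|z-r|$, and moreover $f(z)<f(r)$ when $z<r$ and $f(z)>f(r)$ when $z>r$.

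From $d$ I build a p-computable martingale $d'$ and show it succeeds on $r$. Write $r=0.ys$ with $s\in\seqs$ and put $w_n := y\,s[0\ldots(n-1)]$. Outside the cylinder of $y$ let $d'$ be the trivial martingale, with $d'(y):=1$. For nodes $w=yv$ extending $y$, $d'$ carries an open rational interval $I_w=(a_w,b_w)$ with $a_w<f(r)<b_w$ (on the path $w_0\prefeq w_1\prefeq\cdots$ it will contain $f(r)$), starting from $I_y:=(f(p^-),f(p^+))$ for fixed dyadic $p^-<r<p^+$ in $\Gamma_y$ of the form $0.(yv^{\pm})$. Arriving at $w=yv$, compute an approximation to $c:=f(m_w)$, where $m_w=0.(yv1)$ is the dyadic midpoint of $\Gamma_w$. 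If $c\in I_w$, split $I_w$ at $c$, setting $I_{w0}:=(a_w,c)$, $I_{w1}:=(c,b_w)$, and bet in proportion to $\mu$: $d'(wb):=2d'(w)\,\mu(I_{wb})/\mu(I_w)$. If instead $c\ge b_w$, set $I_{w0}:=I_w$, $I_{w1}:=\emptyset$, $d'(w0):=2d'(w)$, $d'(w1):=0$ (bet everything left); symmetrically if $c\le a_w$. Since $\{I_{w0},I_{w1}\}$ partitions $I_w$ modulo $\mu$-null sets, $d'$ is a martingale. The sign condition is what ties the two sides together: along $r$'s path, $c=f(m_{w_j})>f(r)$ iff $m_{w_j}>r$ iff $s[j]=0$, so the child of $w_j$ whose interval still contains $f(r)$ is exactly the one on $r$'s path (and when $c\notin I_{w_j}$ this forces $c\ge b_{w_j}$ if $s[j]=0$ and $c\le a_{w_j}$ if $s[j]=1$, so $d'$ doubles in the correct direction). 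Consequently $f(r)\in I_{w_n}$ for all $n$, the $I_{w_n}$ are nested, and the bets telescope:
\[ d'(w_n)\;=\;\frac{2^{\,n}}{\mu(I_y)}\;\mu\bigl(I_{w_n}\bigr)\;. \]

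It remains to show $\limsup_n d'(w_n)=\infty$; this is where the real work lies. If $|I_{w_n}|$ does not tend to $0$ --- for instance whenever the ``bet everything'' case occurs infinitely often, or the computed midpoints $c$ stay bounded away from $f(r)$ --- then $\mu(I_{w_n})$ stays bounded below by a positive constant (since $d$ succeeds on $f(r)$, every open interval about $f(r)$ has positive $\mu$-measure), and $d'(w_n)\to\infty$ outright. Otherwise $|I_{w_n}|\to 0$, and I must compare the depth $n$ to which $r$ has been read with the depth to which $f(r)$ has effectively been located. The endpoints $a_{w_n},b_{w_n}$ are only ever set to values $f(m_{w_j})$, which by the anti-Lipschitz inequality differ from $f(r)$ by at least $C\,|m_{w_j}-r|$; and because $r$ is p-random it does not approach a dyadic endpoint of $\Gamma_{w_j}$ faster than $2^{-|w_j|}/\mathrm{poly}(|w_j|)$ (p-random sequences have runs of length $O(\log n)$). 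Hence $\min\{f(r)-a_{w_n},\,b_{w_n}-f(r)\}\ge 2^{-n}/\mathrm{poly}(n)$, so $I_{w_n}\supseteq\Gamma_{f(r)[0\ldots(k_n-1)]}$ for some $k_n\le n+O(\log n)$, and therefore
\[ d'(w_n)\;\ge\;\frac{2^{\,n}}{\mu(I_y)}\;\mu\bigl(\Gamma_{f(r)[0\ldots(k_n-1)]}\bigr)\;=\;\frac{2^{\,n-k_n}}{\mu(I_y)}\;d\bigl(f(r)[0\ldots(k_n-1)]\bigr)\;. \]

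Since $d$ strongly succeeds on $f(r)$ the last factor tends to infinity; the residual $2^{\,n-k_n}=2^{-O(\log n)}$ loss is the delicate point, and dealing with it --- either by refining the construction so that $k_n\le n+O(1)$ (e.g.\ re-centring $I_w$ and exploiting conservativeness of $d$ to absorb the closeness of $r$ to a dyadic endpoint, or advancing along $r$'s tree in blocks), or by running $d'$ in parallel with auxiliary martingales that win whenever $|I_{w_n}|\not\to0$ or whenever $r$ exhibits a long run --- is the main obstacle I anticipate. I expect the remaining points to be routine: $f$ restricted to the dyadic rationals of $\Gamma_y$ and $\mu$ of rational intervals are both polynomial-time approximable, and conservativeness of $d$ keeps $d'$ a genuine polynomial-time-approximable martingale despite having to decide the cases ``$c\in I_w$ / $c\ge b_w$ / $c\le a_w$'' from approximate data (a small amount of hedging near the thresholds, and the convention $d'(w0)=d'(w1)=d'(w)$ when $\mu(I_w)=0$, suffice). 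Example~\ref{ex:nonzero-derivative} and the applications in Section~\ref{sec:apps} then follow by checking that the relevant functions are weakly p-computable at and strongly vary at every p-random point of their domains.
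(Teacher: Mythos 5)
Your construction is genuinely different from the paper's (you push $d$ forward to a measure $\mu$ and bet adaptively along a shrinking interval $I_w$ pinned to $f(r)$ by images of chosen split points, whereas the paper builds an averaged pullback martingale $d_f$ that samples $d$ over the image $f(\Gamma_x)$ of the whole dyadic interval), but as you yourself flag, the proof is not complete, and the unresolved step is exactly the crux. Your final bound, $d'(w_n)\ge 2^{\,n-k_n}\,d\bigl(f(r)[0\ldots(k_n-1)]\bigr)/\mu(I_y)$ with $k_n=n+O(\log n)$, does not give success: a conservative martingale that strongly succeeds on $f(r)$ may grow arbitrarily slowly, so nothing forces $d(f(r)[0\ldots(k-1)])$ to outgrow a fixed polynomial in $k$, and the $2^{\,n-k_n}=1/\mathrm{poly}(n)$ loss can swamp the gain entirely. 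Moreover the loss is not an atypical event you can hedge against with an auxiliary martingale: a p-random $r$ really does come within $2^{-n-\Theta(\log n)}$ of your split points $m_{w_j}$ infinitely often, and when it does, strong variation gives you no lower bound better than $C|m_{w_j}-r|$ on the width of the surviving side of $I_{w_n}$. None of the proposed patches (re-centring, block advancement, auxiliary martingales) is carried out, so the argument as it stands fails at its main step.

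The paper closes precisely this hole by two devices your scheme lacks. First, its pullback is computed from the image of the \emph{entire} current dyadic interval, $\Delta_x=[f(0.x),\,f(0.x+2^{-|x|})]$, so the relevant distances are from $r$ to the dyadic endpoints of $\Gamma_x$, not to adaptively chosen split points that may hug $r$. Second, since success only requires a $\limsup$, it evaluates only at the infinitely many prefixes $x$ of $r$'s expansion ending just after an occurrence of ``01''; at such prefixes $r$ is at distance at least $2^{-(|x|+2)}$ from both endpoints of $\Gamma_x$, and strong increase with constant $C$ then shows $\Delta_x$ contains a dyadic interval of length $2^{-(|x|+\ell+2)}$ around $f(r)$, where $\ell$ depends only on $C$. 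This yields $d_f(x)\ge 2^{-(\ell+2)}d(y)$ for a prefix $y$ of $f(r)$'s expansion, i.e.\ a \emph{constant} loss rather than $1/\mathrm{poly}(n)$, which is why strong success of $d$ suffices. (The paper pays for this set-up elsewhere: a monotonization lemma so that $f(\Gamma_x)$ is an interval, and a greedy dyadic-decomposition argument for p-computability of the pullback; your threshold-hedging and $\mu$-approximation issues are the analogous chores, but they are secondary to the gap above.)
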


%[Discussion about whether and how the conditions could be relaxed.  For example, if a function is weakly p-computable and monotone but not strictly monotone, then it is constant on some interval with positive measure, and it is clear that it takes on a p-computable value on this interval.  So Theorem~\ref{thm:main} is false for such a function, because the interval contains p-random elements.]

%This next proposition is not used anywhere else, but it illustrates the fact that we don't have to worry about discontinuities at p-random points.

%\begin{proposition}
%If $f$ is weakly p-computable at $c$ and monotone at $c$ but discontinuous at $c$, then $c$ is p-computable (and hence not p-random).
%\end{proposition}

\subsection{Establishing Theorem~\ref{thm:main}}

We start this section with two easy observations which we give without proof.

\begin{observation}\label{obs:scale-shift-random}
Let $j$ and $k$ be integers with $k\ge 0$, and let $a\in\dyads$.  A number $r\in\reals$ is $n^k$-random if and only if $2^j r$ is $n^k$-random, if and only if $r+a$ is $n^k$-random, if and only if $-r$ is $n^k$-random.

The same then obviously holds when ``$n^k$-random'' is replaced with ``p-random.''
\end{observation}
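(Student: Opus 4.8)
The plan is to establish each equivalence by transporting martingales along the corresponding change in the binary expansion, keeping both the success property and the $n^k$ running-time bound. Dispose first of degenerate cases. The case $k=0$ is trivial: an $n^0$-computable martingale can be approximated to within $1/2$ in time $O(1)$ uniformly in $|w|$, so it is bounded and succeeds on no sequence---hence every real is $n^0$-random. Assume $k\ge 1$. If $r$ is a dyadic rational, then so are $2^j r$, $r+a$, and $-r$, and no dyadic rational is $n^k$-random---if $\{r\}:=r-\floor r$ has expansion $v0^\infty$, the martingale that doubles on each correct bit of $v0^\infty$ and drops to $0$ on a wrong bit is $n^1$-computable and succeeds; so the equivalences hold vacuously and we may assume $r$ is not dyadic. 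Then $\{r\}$ has a \emph{unique} binary expansion $s\in\seqs$, the numbers $2^j r$, $r+a$, $-r$ are again non-dyadic, and, since a real is $n^k$-random exactly when its fractional part is, it suffices to see how each operation acts on $s$. Finally, each operation is undone by one of the same kind (by $2^{-j}$; by $+(-a)$ with $-a\in\dyads$; by negation), so it is enough to show that $n^k$-randomness is \emph{preserved}; the converses follow, and the p-randomness statement follows by quantifying over $k$.

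Next, record the action on $s$. For $r\mapsto 2^j r$ with $j\ge 0$, $\{2^j r\}$ has expansion $s[j]s[j+1]\cdots$, i.e.\ $s$ with its first $j$ bits deleted; with $j<0$, $\{2^j r\}$ has expansion $u\,s$ for a fixed $u\in\two^{|j|}$ (the fractional bits of $\floor r/2^{|j|}$ padded to length $|j|$, which is $0^{|j|}$ when $r\in[0,1)$). For $r\mapsto r+a$ with $a=\pm c/2^m$ ($c,m\in\nums$), the key point is that $2^m(r+a)=2^m r\pm c$ alters only the integer part, so carries and borrows propagate toward more significant positions and never reach fractional positions $\ge m$; hence $\{r+a\}$ has expansion $\sigma(s[0\ldots(m-1)])\,s[m]s[m+1]\cdots$, where $\sigma\colon\two^m\to\two^m$ is the fixed bijection ``add $\pm c$ modulo $2^m$'' on $m$-bit integers. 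For $r\mapsto -r$ (with $r\in(0,1)$ after the reductions), $\{-r\}=1-r$, and since $1-0.s=0.\overline s$ the expansion is $\overline s$, the bitwise complement. So in every case the induced map $s\mapsto t$ is one of: prepend a fixed known string; delete a fixed-length prefix; apply a fixed bijection to a fixed-length prefix; complement every bit.

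It remains to push martingales through these maps. From a martingale $d$ succeeding on $u\,x$ (with $u\in\strs$ fixed and known), $d'(w):=d(uw)$ is a martingale succeeding on $x$ (the averaging identity is inherited, and $d'(\emptystr)=d(u)>0$ is rescaled to be $\le 1$ without affecting success). Dually, from $d$ succeeding on $x$, the martingale that equals $d(\emptystr)$ on strings shorter than $\ell$ and equals $d(w[\ell\ldots(|w|-1)])$ on longer ones ignores the first $\ell$ bits and succeeds on $v\,x$ for \emph{every} $v\in\two^\ell$. From $d$, the function $d'(w):=d(\overline w)$ is a martingale (it just swaps the two children of each node) succeeding on $x$ iff $d$ succeeds on $\overline x$, complementation being an involution. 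Finally, given a bijection $\sigma\colon\two^m\to\two^m$ and a martingale $d$, set $d'(w):=d\bigl(\sigma(w[0\ldots(m-1)])\,w[m]\cdots w[|w|-1]\bigr)$ for $|w|\ge m$ and let $d'(w)$ be the average of $d'(wv)$ over $v\in\two^{m-|w|}$ for $|w|<m$; this is the unique martingale with the prescribed depth-$m$ values, the identity $2^{-m}\sum_{v\in\two^m}d(\sigma(v))=2^{-m}\sum_{u\in\two^m}d(u)=d(\emptystr)$ (from $d$'s martingale property and bijectivity of $\sigma$) gives $d'(\emptystr)=d(\emptystr)\le 1$, and for $n\ge m$ we have $d'(s[0\ldots(n-1)])=d\bigl((\sigma(s[0\ldots(m-1)])\,s[m]s[m+1]\cdots)[0\ldots(n-1)]\bigr)$, so $d'$ succeeds on $s$ iff $d$ succeeds on the twisted sequence. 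Matching these four constructions against the four expansion maps---and using, for each operation, the same operation of the opposite sign for the reverse implication---proves each preservation claim, hence each equivalence. In every construction an $n^k$-approximator for $d'$ makes one call to an $n^k$-approximator for $d$ on a string of length $|w|+O(1)$ (plus, in the $\sigma$ case with $|w|<m$, a constant number of calls on constant-length strings) after $O(|w|)$ steps of string manipulation, so it runs in time polynomial in $|w|$ and the requested precision with the same exponent. The only place needing genuine care is the step $r\mapsto r+a$: one must verify that carry and borrow propagation remains confined to the integer part, so that the effect on the fractional bits really is a bijection of the first $m$ of them; everything else is routine bookkeeping around $\floor{\cdot}$ and the non-dyadicity of $r$.
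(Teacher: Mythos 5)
Your proof is correct. The paper gives this statement without proof (it is one of the ``two easy observations which we give without proof''), and your argument --- disposing of the $k=0$ and dyadic cases, reducing to the fractional part with its unique expansion, identifying the action of each operation on that expansion (prefix deletion or prepending a fixed string for $2^j$, a fixed bijection of the first $m$ bits for $+a$, bitwise complement for negation), and transporting $n^k$-computable martingales through these maps while preserving the time exponent --- is precisely the routine martingale-transport argument the paper takes for granted, with the only delicate point (carries staying in the integer part under $r\mapsto r+a$) correctly verified.
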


\begin{observation}\label{obs:scale-shift}
Let $I\subseteq\reals$ be an interval, let $\map{f}{I}{\reals}$ be a function, let $j$ and $k$ be any integers with $k\ge 0$, and let $a\in\dyads$.  Define
\begin{align*}
g(x) &=  2^j f(x)\;, \\
h(x) &=  f(x) + a\;, \\
j(x) &=  f(2^j x)\;, \\
k(x) &=  f(x + a)\;.
\end{align*}
Then $f$ strongly varies at some $x\in I$ on $I$ (respectively, is $n^k$-computable at $x$) if and only if all of $(-f),g,h$ strongly vary (respectively, are $n^k$-computable) at $x$ on $I$, if and only if $j$ strongly varies (respectively, is $n^k$-computable) at $2^{-n} x$ on $2^{-n}I$, if and only if $k$ strongly varies (respectively, is $n^k$-computable) at $x-a$ on $I-a$.  The sense of variation (strongly increasing or strongly decreasing) of $f$ is the same as that of $g,h,j,k$ and opposite that of $(-f)$.

The same then obviously holds when ``$n^k$-computable'' is replaced with ``p-computable.''
\end{observation}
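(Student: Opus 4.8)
The plan is to notice that each of $g$, $h$, $-f$, $j$, $k$ is obtained from $f$ by composition with an affine map $\phi(t)=\alpha t+\beta$ in which $\alpha$ is a nonzero dyadic rational and $\beta\in\dyads$: the functions $g=(t\mapsto 2^j t)\circ f$, $h=(t\mapsto t+a)\circ f$, and $-f=(t\mapsto -t)\circ f$ are \emph{post}-compositions, while $j=f\circ(t\mapsto 2^j t)$ on $2^{-j}I$ and $k=f\circ(t\mapsto t+a)$ on $I-a$ are \emph{pre}-compositions. Since $\phi^{-1}(t)=\alpha^{-1}(t-\beta)$ is again affine with nonzero dyadic slope and dyadic intercept, the inverse transform is of the same kind, so each biconditional relating $f$ to a derived function needs to be checked in only one direction. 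Thus it suffices to prove one statement about post-composition by such a $\phi$ and one about pre-composition, and then read the observation off them by transitivity.

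For the \emph{strong-variation} claims I would just compute difference quotients. For a post-composition, for every $z\in I-\{x\}$,
\[ \frac{\phi(f(z))-\phi(f(x))}{z-x}=\alpha\cdot\frac{f(z)-f(x)}{z-x}\;, \]
so the one-sided difference-quotient bounds of Definition~\ref{def:strongly-varies} get multiplied by $\alpha$, which preserves strong variation and keeps the sense when $\alpha>0$ (the cases $g$, $h$) and reverses it when $\alpha<0$ (the case $-f$), the quantified $z$ still ranging over $I-\{x\}$. For a pre-composition, the substitution $z=\phi(z')$ is a bijection of $I-\{x\}$ with $\phi^{-1}(I)-\{\phi^{-1}(x)\}$, and the same identity (now with the positive factor $\alpha=2^j$ or $\alpha=1$) shows that $f\circ\phi$ strongly varies at $\phi^{-1}(x)$ on $\phi^{-1}(I)$ exactly when $f$ strongly varies at $x$ on $I$, with the same sense. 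Rerunning this on an arbitrary open neighborhood $N$ in place of $I$ gives the ``strongly varies at $x$'' form, and the sign bookkeeping gives the last sentence of the observation.

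For the \emph{$n^k$-computability} claims I would start from an $n^k$-approximator $\hat f$ for $f$ on a dyadic interval $\Gamma_y$ containing $x$. For a post-composition, the new approximator on $\Gamma_y$ is $\alpha\,\hat f(w,0^{r+c})+\beta$, where $c=c(\alpha)$ is a constant large enough that $|\alpha|2^{-(r+c)}\le 2^{-r}$; since multiplying by and adding the fixed dyadic rationals $\alpha,\beta$ costs only $O(|w|+r)$, the running time stays $O((|w|+r)^k)$ (for $k\ge 1$; $k=0$ is degenerate). For a pre-composition, I would use that $t\mapsto\phi(t)$ maps $\dyads$ into $\dyads$ and carries each dyadic interval onto a dyadic interval of the same length: after shrinking $\Gamma_y$ if necessary, choose a dyadic interval $\Gamma_{y'}$ containing $\phi^{-1}(x)$ with $\phi(\Gamma_{y'})\subseteq\Gamma_y$; then for every $w\in\strs$ one has $\phi(0.(y'w))=0.(yzw)$ for a fixed string $z$ obtained from $y'$ by a bounded bit-shift (when $\phi$ is multiplication by $2^j$) or by adding the fixed dyadic $a$ (when $\phi$ is translation by $a$), so $\widehat{f\circ\phi}(w,0^r):=\hat f(zw,0^r)$ is an $n^k$-approximator for $f\circ\phi$ on $\Gamma_{y'}$ --- its length $|zw|$ exceeds $|w|$ only by the constant $|z|$. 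Finally, ``p-computable'' abbreviates ``$n^k$-computable for some $k$'', so quantifying everything over $k$ gives the closing line of the observation.

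I expect the only genuine work to be in the pre-composition case for $n^k$-computability, where one must check carefully that $\phi$ really does carry the chosen dyadic intervals into $\Gamma_y$ without escaping $[0,1]$ and that $\phi(0.(y'w))$ has the stated clean form $0.(yzw)$. The mildly awkward sub-cases --- $\phi^{-1}(x)$ being itself a dyadic rational (hence a boundary point of every dyadic interval containing it), or $\phi(0.(y'w))$ landing on an endpoint of $[0,1]$ --- are dispatched by the ``natural way'' extension mentioned after Definition~\ref{def:weak-p-comp-at-x}; neither arises in the intended application of the observation, where $x=r$ is p-random and therefore not dyadic.
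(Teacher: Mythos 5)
The paper gives no proof of Observation~\ref{obs:scale-shift} at all---it is one of the ``two easy observations which we give without proof''---so there is no official argument to compare against; the question is only whether your write-up is a correct discharge of the claim, and it is. Your reduction is the natural one: every one of $-f,g,h,j,k$ is a pre- or post-composition of $f$ with an affine map $t\mapsto\alpha t+\beta$ whose slope is a nonzero dyadic (a power of $2$ or $\pm 1$) and whose intercept is dyadic; this class is closed under inversion, so each biconditional needs one direction only; difference quotients scale by $\alpha$ (post) or by the positive factor $\alpha$ after the substitution $z=\phi(z')$ (pre), which settles strong variation and the bookkeeping of sense; and for $n^k$-computability the approximators $\alpha\hat{f}(w,0^{r+c})+\beta$ and $\hat{f}(zw,0^r)$ incur only constant additive overhead in $r$ and in the input length, so the exponent $k$ survives. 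The only points needing care are exactly the ones you flag: in the pre-composition case one must choose $\Gamma_{y'}$ with $\phi(\Gamma_{y'})\subseteq\Gamma_y$ (even when the point is dyadic and lies on the boundary of $\Gamma_y$ this can be done by taking $\Gamma_{y'}$ on the appropriate side, since $\phi$ is orientation-preserving), and for translation by $a$ one should take $|y'|$ at least the number of fractional bits of $a$ so that the addition cannot interact with the bits of $w$. Note also that the statement's ``$2^{-n}x$ on $2^{-n}I$'' is evidently a typo for ``$2^{-j}x$ on $2^{-j}I$'', which your reading silently and correctly repairs.
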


Theorem~\ref{thm:main} is a corollary of the next lemma, which gives the theorem its essential technical content.  We prove this lemma later in this section.  For convenience, we will assume that our function $f$ is monotone ascending.  We will show later that this is not an essential restriction.

\begin{lemma}\label{lem:main}
For any $j,k\in\nums$ there exists $q\in\nums$ such that, for any weakly $n^j$-computable, monotone ascending $\map{f}{[0,1]}{\reals}$ and $x_0\in [0,1]$ such that $f$ strongly increases at $x_0$ on $[0,1]$, if $f(x_0)$ is not $n^k$ random, then $x_0$ is not $n^q$-random.
\end{lemma}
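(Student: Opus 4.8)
The plan is to establish the contrapositive. Assume $f(x_0)$ is not $n^k$-random; I will produce, for a value $q$ depending only on $j$ and $k$, an $n^q$-computable martingale that succeeds on $x_0$. By Observation~\ref{obs:scale-shift} I may subtract the integer $\floor{f(x_0)}$ from $f$ (this preserves monotonicity, strong increase at $x_0$ on $[0,1]$, weak $n^j$-computability, and the failure of $n^k$-randomness of $f(x_0)$), so assume $f(x_0)\in[0,1)$; I may also assume $x_0\notin\dyads$, since no dyadic rational is $n^1$-random. If in addition $f(x_0)\in\dyads$, then monotonicity together with strong increase at $x_0$ on all of $[0,1]$ forces $f(z)>f(x_0)$ for every $z\in(x_0,1]$, $f(z)<f(x_0)$ for every $z\in[0,x_0)$, and $|z-x_0|\le|f(z)-f(x_0)|/C$; hence a binary search on $[0,1]$ that uses $\hat f$ to compare $f$ at midpoints against the hard-wired value $f(x_0)$ computes $x_0$ to precision $2^{-n}$ in time polynomial in $n$ of degree depending only on $j$, so $x_0$ is not $n^{q_0(j)}$-random (the effective form of the folklore fact that an $n^c$-computable real is not $n^{q_0(c)}$-random). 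Henceforth $f(x_0)\in(0,1)\setminus\dyads$.

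Fix an $n^k$-computable martingale succeeding on $f(x_0)$ and, by Proposition~\ref{prop:conservative}, replace it by a conservative $n^\ell$-computable martingale $d$ with $\ell=\ell(k)$; being conservative, $d$ \emph{strongly} succeeds on $f(x_0)$, so, writing $0.s=f(x_0)$ (a unique sequence since $f(x_0)\notin\dyads$), $d(s[0\ldots m-1])\to\infty$. Let $\mu_d$ be the finite Borel measure on $[0,1]$ with $\mu_d(\Gamma_u)=d(u)2^{-|u|}$; conservativeness gives $d(u)\le(3/2)^{|u|}$, hence $d(u)2^{-|u|}\le(3/4)^{|u|}$, which makes $\mu_d$ well-defined and forces $\mu_d([c,c+\eta])=O(\eta^{\lg(4/3)})$ as $\eta\to0$. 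For $w\in\strs$ put $I_w:=[f(0.w),f(0.w+2^{-|w|})]$ (an interval by monotonicity, with $I_w=I_{w0}\cup I_{w1}$ a union of two adjacent subintervals) and define
\[ d'(w):=2^{|w|}\,\mu_d\bigl(I_w\cap[0,1]\bigr)\;. \]
Because $\mu_d(\,\cdot\cap[0,1])$ is additive over $I_{w0},I_{w1}$ while $|\Gamma_{w0}|=|\Gamma_{w1}|=\tfrac12|\Gamma_w|$, this $d'$ satisfies the martingale identity exactly and is nonnegative, and $d'(\emptystr)\le\mu_d([0,1])=d(\emptystr)\le1$ (the values at $\emptystr$ and along the all-ones branch, which is not a prefix of $x_0$, are pinned down by the martingale identity and stay nonnegative). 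For the complexity: on input $(w,0^t)$, replace the endpoints of $I_w$ by $\hat f$-approximations to precision $2^{-\rho}$, truncate the dyadic decomposition of the resulting interval at length $L$, and replace each surviving $d(u_i)$ by $\hat d(u_i,0^{t'})$; taking $\rho,L,t'$ all $\Theta(|w|+t)$ keeps the total error below $2^{-t}$ (the endpoint error bounded via $\mu_d([c,c+\eta])=O(\eta^{\lg(4/3)})$, the truncation tail via $\sum_{|u|>L}2(3/4)^{|u|}$), and the running time is polynomial of degree $\max(j,\ell)+O(1)=:q$, depending only on $j$ and $k$.

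It remains to show $d'$ succeeds on $x_0$; this then gives the lemma with $q:=\max\{q_0(j),\,q(j,\ell(k))\}$. Write $w_n=x_0[0\ldots n-1]$. Strong increase at $x_0$ gives $f(x_0)-f(0.w_n)\ge C(x_0-0.w_n)$ and $f(0.w_n+2^{-n})-f(x_0)\ge C(0.w_n+2^{-n}-x_0)$: inside $I_{w_n}$, the point $f(x_0)$ is at distance at least $C$ times its distance to the corresponding endpoint of $\Gamma_{w_n}$. On the set $S=\{\,n:x_0[n]\ne x_0[n+1]\,\}$, which is infinite because $x_0\notin\dyads$, both of those distances inside $\Gamma_{w_n}$ are $\ge 2^{-n-2}$, so $I_{w_n}$ contains the ball $B_n$ of radius $C2^{-n-2}$ about $f(x_0)$. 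Setting $\ell_n:=n+\ceiling{\lg(8/C)}$, the string $s[0\ldots\ell_n-1]$ names a dyadic interval containing $f(x_0)$ of length at most half the radius of $B_n$, so $\Gamma_{s[0\ldots\ell_n-1]}\subseteq B_n\subseteq I_{w_n}$ and hence
\[ d'(w_n)\ \ge\ 2^{n}\mu_d\bigl(\Gamma_{s[0\ldots\ell_n-1]}\bigr)\ =\ 2^{\,n-\ell_n}\,d(s[0\ldots\ell_n-1])\ =\ 2^{-\ceiling{\lg(8/C)}}\,d(s[0\ldots\ell_n-1])\ \longrightarrow\ \infty \]
as $n\to\infty$ through $S$, using that $d$ strongly succeeds on $f(x_0)$. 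Thus $\limsup_n d'(w_n)=\infty$, so $x_0$ is not $n^q$-random.

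The step I expect to demand the most care is the uniformity of $q$: it must depend on $j$ and $k$ only, with $f$, $x_0$, $d$, and the constant $C$ confined to the implied $O$-constants of running times and never appearing in an exponent. This is exactly what the measure-theoretic definition $d'(w)=2^{|w|}\mu_d(I_w\cap[0,1])$ buys us (the martingale identity is automatic, not approximate), and conservativeness of $d$ is what lets us both truncate the infinite dyadic sum and absorb the inaccuracy of $\hat f$ at the endpoints of $I_w$. A secondary subtlety is that $f(x_0)$ can lie arbitrarily close to an endpoint of $I_{w_n}$, which would let $\mu_d(I_{w_n})$ be dominated by a region far from $f(x_0)$ where $d$ is small; restricting the success argument to the subsequence $S$, on which strong variation forces $f(x_0)$ a fixed fraction of $|\Gamma_{w_n}|$ into $I_{w_n}$, is what avoids this.
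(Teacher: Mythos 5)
Your proposal follows, in substance, the same route as the paper's proof: reduce to a conservative martingale $d$ via Proposition~\ref{prop:conservative}; pull $d$ back along the monotone $f$ by sampling its mass on $f(\Gamma_w)$; prove success on $x_0$ by locating, at the infinitely many alternation positions of $x_0$'s expansion, a prefix interval of $f(x_0)$ of length comparable to $C2^{-n}$ inside $f(\Gamma_{w_n})$ (this is exactly the paper's Lemma~\ref{lem:key}, with your ball $B_n$ playing the role of the interval $\Gamma_y$ there); and prove $n^q$-computability by approximating the endpoints of $f(\Gamma_w)$ with $\hat f$, decomposing the resulting interval into $O(|w|+t)$ dyadic pieces, and evaluating $\hat d$ on them, with conservativeness controlling both the boundary and the truncation errors (the paper's Lemma~\ref{lem:p-computable}). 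Your one real departure is presentational but attractive: defining the pullback as $d'(w)=2^{|w|}\mu_d(I_w\cap[0,1])$ for the Borel measure $\mu_d$ induced by $d$ makes the martingale identity exact by additivity plus atomlessness (which conservativeness supplies via $d(u)2^{-|u|}\le(3/4)^{|u|}$), collapsing the paper's upper/lower-sum machinery (Lemmas~\ref{lem:monotone} through \ref{lem:is-a-martingale}) into one step. Your special-casing of $f(x_0)\in\dyads$ is harmless but unnecessary: the success argument only uses $0.s=f(x_0)$ numerically together with $x_0\notin\dyads$, so it works for either binary expansion of $f(x_0)$.

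There is one concrete gap, in the computability step. The approximator $\hat f$ furnished by weak p-computability only approximates $f$ at points $0.v$ with $v\in\strs$, i.e., at dyadic rationals in $[0,1)$; but for $w\in\{1\}^*$ the right endpoint of $I_w$ is $f(0.w+2^{-|w|})=f(1)$, which is not of this form, so the procedure as you describe it cannot approximate $d'(w)$ on the all-ones branch. The paper sidesteps exactly this by rescaling first and declaring $g(1):=1$ (its Lemma~\ref{lem:p-computable} assumes $f(1)=1$ and sets $b:=1$ when the input is $1^n$, noting that all that is really needed is that $f(1)$ be approximable in time $O(n^j)$). In your setup the same patch is available and cheap: replace the right endpoint of $I_w$ by $1$ whenever $w\in\{1\}^*$ (equivalently, let the $[0,1]$-capping you already perform absorb it). The martingale identity survives, since the capped intervals still split additively up to a $\mu_d$-null point, and the success argument is untouched because $x_0\notin\dyads$ forces its prefixes off the all-ones branch eventually. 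With that repair, the rest checks out: the uniformity of $q$ in $j,k$ only, the bound $\mu_d([c,c+\eta])=O(\eta^{\lg(4/3)})$ for the endpoint error, the geometric truncation tail, and the binary-search handling of dyadic $f(x_0)$ via the (standard, uniform) fact that an $n^c$-computable real is not $n^{q_0(c)}$-random.
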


The full strength of Lemma~\ref{lem:main} will only be used in Section~\ref{sec:measure}.  For the rest of the paper, we can content ourselves with the following corollary:

\begin{corollary}\label{cor:main}
Let $\map{f}{[0,1]}{\reals}$ be weakly p-computable and monotone ascending on $[0,1]$.  Suppose that $x_0\in [0,1]$ and that $f$ strongly increases at $x_0$ on $[0,1]$.  Then if $f(x_0)$ is not p-random, then $x_0$ is not p-random.
\end{corollary}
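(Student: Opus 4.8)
The plan is to obtain Corollary~\ref{cor:main} as an immediate consequence of Lemma~\ref{lem:main}, the only work being a routine unwinding of the quantifiers hidden in Definition~\ref{def:p-random} and in ``weakly p-computable.'' Assume the hypotheses of the corollary and suppose, toward the contrapositive conclusion, that $f(x_0)$ is not p-random. First I would extract two fixed exponents: since $f$ is weakly p-computable on $[0,1]$, there is some $j\in\nums$ for which $f$ is weakly $n^j$-computable; and since $f(x_0)$ is not p-random, Definition~\ref{def:p-random} gives some $k\in\nums$ for which $f(x_0)$ is not $n^k$-random.

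Next I would apply Lemma~\ref{lem:main} to this pair $j,k$. The lemma produces a $q\in\nums$ depending only on $j$ and $k$ such that, because $f$ is weakly $n^j$-computable, monotone ascending on $[0,1]$, and strongly increases at $x_0$ on $[0,1]$, and because $f(x_0)$ is not $n^k$-random, the point $x_0$ is not $n^q$-random. Since any p-random real is $n^m$-random for every $m$, the failure of $n^q$-randomness shows that $x_0$ is not p-random, which is exactly the claim.

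I do not expect any real obstacle here, since all the genuine content lives in Lemma~\ref{lem:main} (constructing, from an $n^k$-computable martingale succeeding on $f(x_0)$, an $n^q$-computable martingale succeeding on $x_0$). The one point to be careful about is the quantifier order: the argument only goes through because $q$ in Lemma~\ref{lem:main} depends \emph{solely} on $j$ and $k$ and not on the particular $f$ or $x_0$; the corollary is precisely the non-uniform ($\exists j$, $\exists k$) shadow of that uniform statement, so there is no circularity. I would also note in passing, for context, that combining Corollary~\ref{cor:main} with Observations~\ref{obs:scale-shift-random} and~\ref{obs:scale-shift} — rescaling and shifting an arbitrary interior point of $I$ into $[0,1]$, reducing the strongly decreasing case to the strongly increasing one via $f\mapsto -f$, and passing to a small dyadic subinterval on which $f$ is monotone near $r$ — is what ultimately yields Theorem~\ref{thm:main}, but that packaging is separate from the proof of the corollary itself.
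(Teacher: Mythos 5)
Your proposal is correct and matches the paper's intent exactly: the paper states Corollary~\ref{cor:main} as an immediate, non-uniform consequence of Lemma~\ref{lem:main}, and your unwinding of the quantifiers (fix $j$ from weak p-computability, $k$ from non-p-randomness of $f(x_0)$, conclude $x_0$ is not $n^q$-random, hence not p-random) is precisely that argument. One tiny quibble: the uniformity of $q$ in $j,k$ is not actually needed for the corollary (it matters only for the p-measure application in Section~\ref{sec:measure}), so your remark that the argument ``only goes through'' because of it overstates the dependence, but this does not affect correctness.
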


To prove Lemma~\ref{lem:main}, we need to construct an $n^q$-computable martingale $d_f$ that succeeds on $x_0$, given an $n^k$-computable one that succeeds on $f(x_0)$.  If martingale $d$ succeeds on $f(x_0)$, then we can define $d_f(w)$ (for a given string $w$) to sample the values of $d$ on points in $f(\Gamma_w)$.  We do this by sandwiching $d_f(w)$ between a lower bound $d^-(w;n)$ and an upper bound $d^+(w;n)$.  We get $d^+(w;n)$ by overestimating $d$'s total contribution in an interval around $f(0.w)$ (Equation~(\ref{eqn:upper-f-shift}), below), and we get $d^-(w;n)$ by underestimating it (Equation~(\ref{eqn:lower-f-shift})).  These estimates become more refined as $n$ increases, and, provided $d$ is conservative, they reach a common limit as $n$ goes to infinity, yielding a well-defined martingale $d_f$.

\begin{definition}\label{def:f-shifts}\rm
Let $\map{f}{[0,1]}{[0,1]}$ be monotone ascending on $[0,1]$ and let $d$ be a martingale.  For every $x\in\strs$, let $\Delta_x$ denote the interval $f(\Gamma_x) = [f(0.x),f(0.x+2^{-|x|})]$, and for every $n\in\nums$, define
\begin{equation}\label{eqn:upper-f-shift}
d^+(x;n) = 2^{|x|-n} \sum_{y\in\two^n\;:\;\Gamma_y \intersection \Delta_x\ne\emptyset} d(y)\;,
\end{equation}
and define
\begin{equation}\label{eqn:lower-f-shift}
d^-(x;n) := 2^{|x|-n} \sum_{y\in\two^n\;:\;\Gamma_y \subseteq \Delta_x} d(y)\;.
\end{equation}
\end{definition}

The only differences between the sums in Equations~(\ref{eqn:upper-f-shift}) and (\ref{eqn:lower-f-shift}) are at most two terms $d(y)$ where $\Gamma_y$ straddles the boundary of $\Delta_x$.  The assumption that $d$ is conservative is needed to ensure that these terms are not too large, and thus that $d^+(x;n)$ and $d^-(x;n)$ are close to each other.  The following lemma is routine and easy to check.

\begin{lemma}\label{lem:monotone}
Let $f$ and $d$ be as in Definition~\ref{def:f-shifts}.  For any $x\in\strs$, if $\Gamma_y$ is any dyadic interval contained in $\Delta_x$ (that is, $\Gamma_y \subseteq \Delta_x$), then letting $n = |y|$,
\begin{align*}
2^{|x|-n} d(y) \le d^-(x;n) &\le d^-(x;n+1) \le d^-(x;n+2) \le \cdots \le d^-(x;n+i) \le \cdots \\
 \cdots &\le d^+(x;i) \le \cdots \le d^+(x;2) \le d^+(x;1) \le d^+(x;0)\;.
\end{align*}
\end{lemma}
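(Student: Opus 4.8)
The plan is to verify the displayed chain link by link, since every link is an elementary consequence of just two facts: the defining martingale identity $2d(w) = d(w0)+d(w1)$ together with non-negativity of $d$ (which iterates to $\sum_{z\in\two^i} d(wz) = 2^i d(w)$ for all $w\in\strs$, $i\in\nums$), and the dyadic nesting $\Gamma_{wb}\subseteq\Gamma_w$ for $b\in\two$. Monotone ascension of $f$ is used only to guarantee that $\Delta_x = [f(0.x),\,f(0.x+2^{-|x|})]$ is a genuine interval (left endpoint no larger than the right); beyond that, the statement is purely about sums of $d$ over dyadic intervals that are contained in, or that meet, the fixed interval $\Delta_x$.

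First I would dispatch the outermost link and the ``$d^-\le d^+$'' crossover. The bound $2^{|x|-n}d(y)\le d^-(x;n)$ is immediate: with $\Gamma_y\subseteq\Delta_x$ and $|y|=n$, the string $y$ is one of the indices in the sum~(\ref{eqn:lower-f-shift}), and every summand is non-negative. For the crossover, observe that at a common level $m$ the index set of~(\ref{eqn:lower-f-shift}) is contained in that of~(\ref{eqn:upper-f-shift}), since $\Gamma_y\subseteq\Delta_x$ forces $\Gamma_y\intersection\Delta_x\ne\emptyset$ (the interval $\Gamma_y$ is nonempty); non-negativity of $d$ then gives $d^-(x;m)\le d^+(x;m)$. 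Together with the two monotonicities proved below, this yields $d^-(x;a)\le d^+(x;b)$ for all $a,b\in\nums$ --- take $c=\max\{a,b\}$ and chain $d^-(x;a)\le d^-(x;c)\le d^+(x;c)\le d^+(x;b)$ --- which covers in particular every link of the form $d^-(x;n+i)\le d^+(x;i)$ in the displayed chain.

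Next I would establish the two monotonicities. For the ascending chain of the $d^-(x;\cdot)$: fix $m$; for every $y\in\two^m$ with $\Gamma_y\subseteq\Delta_x$, both children satisfy $\Gamma_{yb}\subseteq\Gamma_y\subseteq\Delta_x$, so they occur among the indices of $d^-(x;m+1)$, and $2^{|x|-m}d(y) = 2^{|x|-(m+1)}\bigl(d(y0)+d(y1)\bigr)$. Summing over such $y$ recovers a sub-collection of the terms defining $d^-(x;m+1)$; the remaining terms are non-negative, so $d^-(x;m)\le d^-(x;m+1)$. For the descending chain of the $d^+(x;\cdot)$: every $z\in\two^{m+1}$ with $\Gamma_z\intersection\Delta_x\ne\emptyset$ has its length-$m$ parent $y$ satisfying $\Gamma_y\intersection\Delta_x\supseteq\Gamma_z\intersection\Delta_x\ne\emptyset$, because $\Gamma_z\subseteq\Gamma_y$; hence $\sum_{z}d(z)$ over such $z$ is at most $\sum_{y\,:\,\Gamma_y\intersection\Delta_x\ne\emptyset}\bigl(d(y0)+d(y1)\bigr)$, and after the rescaling $2^{|x|-m}d(y)=2^{|x|-(m+1)}(d(y0)+d(y1))$ the right-hand side is exactly $d^+(x;m)$, giving $d^+(x;m+1)\le d^+(x;m)$. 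Iterating these two facts fills in every intermediate link of the chain.

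I do not expect a deep obstacle; the delicate point --- and really the crux of the lemma --- is the bookkeeping near the two endpoints of $\Delta_x$. When a dyadic interval straddling $\partial\Delta_x$ is bisected, one child may land inside $\Delta_x$ while the other meets $\Delta_x$ in at most a point, so membership in the two index sets can change; but in every case passing to children can only enlarge the ``contained in $\Delta_x$'' family and can only shrink the ``meets $\Delta_x$'' family, and since all values $d(\cdot)$ are non-negative, dropping an index can only decrease the corresponding sum --- which is precisely the direction needed. (The sharper observation quoted just before the lemma, that $d^+(x;n)$ and $d^-(x;n)$ differ by at most two boundary terms, is not needed for the monotonicity claim itself; it is what will later force the increasing and decreasing chains to a common limit, thereby defining $d_f$.)
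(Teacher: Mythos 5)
Your proof is correct and follows essentially the same route as the paper's: the first inequality via non-negativity and membership of $y$ in the index set, the $d^-$ chain by splitting each term $2^{|x|-m}d(y)$ into $2^{|x|-(m+1)}(d(y0)+d(y1))$ using $\Gamma_{yb}\subseteq\Gamma_y\subseteq\Delta_x$, the crossover by containment of index sets at a common level, and the $d^+$ chain by noting every level-$(m+1)$ interval meeting $\Delta_x$ has a parent that meets it. Your extra remark chaining $d^-(x;a)\le d^+(x;b)$ via $c=\max\{a,b\}$ just makes explicit what the paper leaves implicit in the displayed chain.
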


\begin{proof}%[Proof of Lemma~\ref{lem:monotone}]
The first inequality holds because $\Gamma_y \subseteq \Delta_x$, and hence $2^{|x|-n} d(y)$ is one of the terms in the sum defining $d^-(x;n)$.  To see the other inequalities on the top line, notice that each term $2^{|x|-(n+i)}d(y)$ in the expression for $d^-(x;n+i)$ (for some $i\in\nums$) is equal to the sum $2^{|x|-(n+i+1)} d(y0) + 2^{|x|-(n+i+1)} d(y1)$ of two terms occurring in the expression for $d^-(x;n+i+1)$.  This follows from the fact that any $\Gamma_y\subseteq \Delta_x$ contains both $\Gamma_{y0}$ and $\Gamma_{y1}$, and so the latter two intervals are also subsets of $\Delta_x$.

Clearly, all terms in the sum for $d^-(x;n+i)$ are included in the sum for $d^+(x;n+i)$, and so every quantity on the top line is less than or equal to the corresponding quantity on the bottom line.

Finally, the inequalities on the bottom line all hold: If we split each term $2^{|x|-i}d(y)$ in the expression for $d^+(x;i)$ into the equivalent sum
\[ 2^{|x|-(i+1)} d(y0) + 2^{|x|-(i+1)} d(y1)\;, \]
then this accounts for all the terms in $d^+(x;i+1)$ (and possibly more).
\end{proof}

\begin{definition}\rm
Let $f$ and $d$ be as in Definition~\ref{def:f-shifts}.  We define the \emph{upper $f$-shift} of $d$ to be the function defined for all $x\in\strs$ as
\[ d^+(x) := \lim_{n\rightarrow\infty} d^+(x;n)\;. \]
Similarly, we define the \emph{lower $f$-shift} of $d$ to be
\[ d^-(x) := \lim_{n\rightarrow\infty} d^-(x;n)\;. \]
\end{definition}

Since for any fixed $x\in\strs$, $d^+(x;n)$ and $d^-(x;n)$ are both monotone functions of $n$ (decreasing and increasing, respectively) by Lemma~\ref{lem:monotone}, the limits in the definition above clearly exist, and
\[ d^-(x;n) \le d^-(x) \le d^+(x) \le d^+(x;n) \]
for all $n$.

For some martingales, the upper and lower $f$-shifts may differ, but they coincide for conservative martingales.

%\begin{lemma}\label{lem:squeeze}
%Fix $f$ and $d$ as in Definition~\ref{def:f-shifts}.  Suppose further that $d$ is conservative.  For any $x\in\strs$ and $n\in\nums$,
%\[ d^+(x;n) - d^-(x;n) \le
%2\left(\frac{3}{2}\right)^{|x|}\left(\frac{3}{4}\right)^n\;. \]
%\end{lemma}
%
%\begin{proof}%[Proof of Lemma~\ref{lem:squeeze}]
%Here we only use Property~(1) of being conservative.  All the terms in the
%two sums on the left-hand side cancel except for at most two dyadic intervals---one containing the
%left endpoint of $\Delta_x$ and the other containing the right
%endpoint.  These two intervals correspond to two strings $y$ and $z$
%respectively, both of length $|x|+n$.  Thus we get
%\begin{align*}
%d^+(x;n) - d^-(x;n) &\le  2^{-n}(d(y) + d(z)) \le  2^{-n}[(3/2)^{|x|+n} + (3/2)^{|x|+n}] \\
%&=  2\left(\frac{3}{2}\right)^{|x|}\left(\frac{3}{4}\right)^n\;.
%\end{align*}
%\end{proof}

\begin{lemma}\label{lem:squeeze}
Fix $f$ and $d$ as in Definition~\ref{def:f-shifts}.  Suppose further that $d$ is conservative.  For any $x\in\strs$ and $n\in\nums$,
\begin{equation}\label{eqn:squeeze}
d^+(x;n) - d^-(x;n) \le
2^{|x|+1}\left(\frac{3}{4}\right)^n\;.
\end{equation}
\end{lemma}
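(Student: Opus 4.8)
The plan is to observe that the difference $d^+(x;n)-d^-(x;n)$ measures exactly the contribution of the ``boundary'' strings: by Definition~\ref{def:f-shifts}, every term of the sum defining $d^-(x;n)$ also appears in the sum defining $d^+(x;n)$, so
\[ d^+(x;n) - d^-(x;n) = 2^{|x|-n}\sum_{\substack{y\in\two^n\;:\;\Gamma_y\intersection\Delta_x\ne\emptyset\\ \Gamma_y\not\subseteq\Delta_x}} d(y)\;. \]
The first step is therefore to bound the number of summands. Since the intervals $\{\Gamma_y : y\in\two^n\}$ partition $[0,1]$ (overlapping only at dyadic endpoints) and occur in the natural left-to-right order of the values $0.y$, and since $\Delta_x=[f(0.x),f(0.x+2^{-|x|})]$ is by definition a closed interval, the strings $y$ with $\Gamma_y\intersection\Delta_x\ne\emptyset$ form a consecutive block $\Gamma_{y_1},\dots,\Gamma_{y_m}$ in that order. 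I would then check that every interval strictly interior to this block is in fact contained in $\Delta_x$: if $1<i<m$, then $\Gamma_{y_{i-1}}$ meets $\Delta_x$ and lies to the left of $\Gamma_{y_i}$, forcing the left endpoint of $\Delta_x$ to be at most the left endpoint of $\Gamma_{y_i}$; symmetrically $\Gamma_{y_{i+1}}$ forces the right endpoint of $\Delta_x$ to be at least the right endpoint of $\Gamma_{y_i}$; hence $\Gamma_{y_i}\subseteq\Delta_x$. Consequently only $y_1$ and $y_m$ can fail to satisfy $\Gamma_{y}\subseteq\Delta_x$, so there are at most two summands (and fewer in the degenerate cases where $\Delta_x$ is a single point, lies within one $\Gamma_y$, or where the sum defining $d^-(x;n)$ is empty).

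The second step is purely quantitative: because $d$ is conservative, the remark following the definition of conservative gives $d(y)\le(3/2)^{|y|}=(3/2)^n$ for each of the (at most two) surviving strings $y\in\two^n$. Plugging this in,
\[ d^+(x;n)-d^-(x;n) \le 2^{|x|-n}\cdot 2\cdot\left(\frac{3}{2}\right)^n = 2^{|x|+1}\cdot\frac{(3/2)^n}{2^n} = 2^{|x|+1}\left(\frac{3}{4}\right)^n\;, \]
which is exactly~(\ref{eqn:squeeze}).

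I do not expect a genuine obstacle here; the content is elementary. The one place that needs care is the ``at most two boundary intervals'' claim—specifically making the ordering/tiling argument precise and checking the degenerate configurations of $\Delta_x$—together with noting that the bound still holds when $d^-(x;n)=0$ (in which case the whole consecutive block has length at most two). Everything else is the routine substitution $2^{-n}(3/2)^n=(3/4)^n$.
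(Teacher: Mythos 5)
Your proof is correct and follows essentially the same route as the paper: identify the at most two boundary intervals (those meeting $\Delta_x$ but not contained in it), then bound each term by $(3/2)^n$ via Property~(1) of conservativeness and the normalization $d(\emptystr)\le 1$. You merely spell out the ``at most two'' claim in more detail than the paper does; the quantitative step is identical.
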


\begin{proof}%[Proof of Lemma~\ref{lem:squeeze}]
Here we only use Property~(1) of being conservative.  All the terms in the two sums on the left-hand side of the inequality~(\ref{eqn:squeeze}) cancel except for at most two dyadic intervals $\Gamma_{y_\textup{left}}$ and $\Gamma_{y_\textup{right}}$---the former containing the left endpoint of $\Delta_x$ and the latter containing the right endpoint.  Thus we get
\begin{align*}
d^+(x;n) - d^-(x;n) &\le 2^{|x|-n}(d(y_\textup{left}) + d(y_\textup{right})) \le 2^{|x|-n}\left[\left(\frac{3}{2}\right)^n + \left(\frac{3}{2}\right)^n\right] \\
&= 2^{|x|+1}\left(\frac{3}{4}\right)^n\;.
\end{align*}
\end{proof}

\begin{corollary}\label{cor:c-shift}
Let $f$ and $d$ be as in Definition~\ref{def:f-shifts}.  If $d$ is conservative, then $d^+(x) = d^-(x)$ for all $x\in\strs$.
\end{corollary}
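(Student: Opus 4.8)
The plan is to observe that this is an immediate squeeze argument combining Lemma~\ref{lem:squeeze} with the monotonicity already established after the definition of the upper and lower $f$-shifts. Fix any $x\in\strs$. Recall that for every $n\in\nums$ we have the chain
\[ d^-(x;n) \le d^-(x) \le d^+(x) \le d^+(x;n)\;, \]
so in particular
\[ 0 \le d^+(x) - d^-(x) \le d^+(x;n) - d^-(x;n)\;. \]

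Now I would apply Lemma~\ref{lem:squeeze}, which gives $d^+(x;n) - d^-(x;n) \le 2^{|x|+1}(3/4)^n$ for every $n$. Since $|x|$ is fixed and $(3/4)^n \to 0$ as $n\to\infty$, the right-hand side tends to $0$, so $0 \le d^+(x) - d^-(x) \le 2^{|x|+1}(3/4)^n$ for all $n$ forces $d^+(x) - d^-(x) = 0$, i.e., $d^+(x) = d^-(x)$. There is no real obstacle here — the only ingredient that required work, namely bounding the two boundary terms using Property~(1) of conservativeness, has already been done in Lemma~\ref{lem:squeeze} — so the corollary follows in a couple of lines.
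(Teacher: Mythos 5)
Your proposal is correct and follows exactly the paper's route: the paper's proof is simply ``Immediate from Lemma~\ref{lem:squeeze},'' and you have just spelled out the squeeze using the already-noted chain $d^-(x;n) \le d^-(x) \le d^+(x) \le d^+(x;n)$ together with the bound $2^{|x|+1}(3/4)^n \to 0$.
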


\begin{proof}
Immediate from Lemma~\ref{lem:squeeze}.
\end{proof}

\begin{definition}\rm
If $f$ and $d$ are as in Definition~\ref{def:f-shifts} and $d$ is conservative, then we let $d_f(x)$ denote the common value $d^+(x) = d^-(x)$, and we call $d_f$ the \emph{$f$-pullback} of $d$.
\end{definition}

%The $f$-shift $d_f$ of the martingale $d$ is a kind of ``pull-back'' of $d$ through the function $f$.
On input string $x$, $d_f(x)$ merely samples $d$ over the the interval $\Delta_x = f(\Gamma_x)$.

%\begin{lemma}\label{lem:is-a-martingale}
%If $f$ and $d$ are as in Definition~\ref{def:f-shifts} and $d$ is conservative, then its $f$-pullback $d_f$ is a martingale.
%\end{lemma}
%
%\begin{proof}%[Proof of Lemma~\ref{lem:is-a-martingale}]
%To see that $d_f$ is a martingale, first we notice that
%\[ d_f(\emptystr) \le d^+(\emptystr;0) = d(\emptystr) \le 1\;. \]
%Next, by examining terms in the sums and using Lemma~\ref{lem:monotone}, we notice that for any $x\in\strs$ and $n\in\nums$,
%\[ d^-(x0;n) + d^-(x1;n) \le 2d^-(x;n+1) \le 2d^+(x;n+1) \le d^+(x0;n) + d^+(x1;n)\;. \]
%Taking the limit of both sides as $n\rightarrow\infty$, we get
%\[ d^-(x0) + d^-(x1) \le 2d^-(x) \le 2d^+(x) \le d^+(x0) + d^+(x1)\;. \]
%All these quantities are equal, since the two extremes are equal.  Thus
%\[ d_f(x) = \frac{d_f(x0) + d_f(x1)}{2}\;. \]
%\end{proof}

\begin{lemma}\label{lem:is-a-martingale}
If $f$ and $d$ are as in Definition~\ref{def:f-shifts} and $d$ is conservative, then its $f$-pullback $d_f$ is a martingale.
\end{lemma}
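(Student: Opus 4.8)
The plan is to verify the two defining properties of a martingale for $d_f$: nonnegativity, i.e. $d_f(x)\ge 0$ for all $x\in\strs$, and the averaging identity $d_f(x) = (d_f(x0)+d_f(x1))/2$. Nonnegativity is immediate: each $d^-(x;n)$ is a nonnegative combination of values $d(y)\ge 0$, so $d^-(x) = d_f(x)\ge 0$. The content of the lemma is the averaging identity, and the natural route is to pass it through the finite approximants $d^\pm(x;n)$ and take limits, using Corollary~\ref{cor:c-shift} to identify all four relevant limits.

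First I would record the key geometric fact underlying everything: since $f$ is monotone ascending, $\Gamma_{x0}$ and $\Gamma_{x1}$ are the left and right halves of $\Gamma_x$, so $\Delta_{x0} = f(\Gamma_{x0})$ and $\Delta_{x1} = f(\Gamma_{x1})$ are subintervals of $\Delta_x = f(\Gamma_x)$ that together cover $\Delta_x$ and overlap in at most the single point $f(0.x + 2^{-|x|-1})$. Now fix $x$ and $n\ge 1$, and compare $d^+(x0;n)+d^+(x1;n)$ with $2\,d^+(x;n)$. A string $y\in\two^n$ contributes $2^{|x0|-n}d(y) = 2^{|x|+1-n}d(y)$ to the first expression once for each of $\Delta_{x0},\Delta_{x1}$ that $\Gamma_y$ meets, and contributes $2\cdot 2^{|x|-n}d(y) = 2^{|x|+1-n}d(y)$ to the second expression iff $\Gamma_y$ meets $\Delta_x$. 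Since $\Delta_{x0}\cup\Delta_{x1} = \Delta_x$, every $\Gamma_y$ meeting $\Delta_x$ meets at least one of $\Delta_{x0},\Delta_{x1}$, so $d^+(x0;n)+d^+(x1;n)\ge 2\,d^+(x;n)$; the only discrepancy comes from the (at most two) intervals $\Gamma_y$ straddling the common point $f(0.x+2^{-|x|-1})$, each counted an extra time, contributing an excess of at most $2\cdot 2^{|x|+1-n}(3/2)^n = 2^{|x|+2}(3/4)^n$ using Property~(1) of conservativeness as in Lemma~\ref{lem:squeeze}. Symmetrically, comparing the lower shifts, a string $\Gamma_y\subseteq\Delta_{x0}$ or $\Gamma_y\subseteq\Delta_{x1}$ has $\Gamma_y\subseteq\Delta_x$, so $d^-(x0;n)+d^-(x1;n)\le 2\,d^-(x;n)$, with the deficit again bounded by the straddling intervals, hence at most $2^{|x|+2}(3/4)^n$.

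Finally I would take $n\to\infty$. By Corollary~\ref{cor:c-shift}, $d^+(x0;n)+d^+(x1;n)\to d_f(x0)+d_f(x1)$ and $2\,d^+(x;n)\to 2\,d_f(x)$, and the error terms $2^{|x|+2}(3/4)^n$ vanish, giving $d_f(x0)+d_f(x1)\ge 2\,d_f(x)$; the lower-shift inequalities give $d_f(x0)+d_f(x1)\le 2\,d_f(x)$ in the limit, so the averaging identity holds exactly. I expect the main obstacle to be purely bookkeeping: carefully counting how many times each $\Gamma_y$ is charged in the split sums and confirming that the only mismatch is localized at the boundary point shared by $\Delta_{x0}$ and $\Delta_{x1}$ — the analytic input (the $(3/4)^n$ bound from conservativeness) is already supplied by Lemma~\ref{lem:squeeze} and its proof, so no new estimate is needed. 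One should also note the degenerate case where $f$ is constant on $\Gamma_x$, so that $\Delta_x$ is a single point; there the sums are trivial and the identity is easily checked directly.
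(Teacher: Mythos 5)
Your proposal is correct and follows essentially the same route as the paper: establish $d^-(x0;n)+d^-(x1;n)\le 2d^-(x;n)\le 2d^+(x;n)\le d^+(x0;n)+d^+(x1;n)$ by comparing terms, take $n\to\infty$, and use Corollary~\ref{cor:c-shift} to conclude the averaging identity (your $(3/4)^n$ error estimates are harmless but unnecessary, since the two one-sided inequalities already squeeze everything once the upper and lower shifts coincide). The only cosmetic difference is that the paper also records $d_f(\emptystr)\le d(\emptystr)\le 1$, matching its standing normalization of martingales.
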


\begin{proof}%[Proof of Lemma~\ref{lem:is-a-martingale}]
To see that $d_f$ is a martingale, first we notice that
\[ 0 \le d_f(\emptystr) \le d^+(\emptystr;0) = d(\emptystr) \le 1\;. \]
Next, by examining terms in the sums and using Lemma~\ref{lem:monotone}, we notice that for any $x\in\strs$ and $n\in\nums$,
\[ d^-(x0;n) + d^-(x1;n) \le 2d^-(x;n) \le 2d^+(x;n) \le d^+(x0;n) + d^+(x1;n)\;. \]
Taking the limit of all sides as $n\rightarrow\infty$, we get
\[ d^-(x0) + d^-(x1) \le 2d^-(x) \le 2d^+(x) \le d^+(x0) + d^+(x1)\;. \]
All these quantities are equal, since the two extremes are equal.  Thus
\[ d_f(x) = \frac{d_f(x0) + d_f(x1)}{2}\;. \]
\end{proof}

The next lemma is key.  Here is where we make essential use of the strongly increasing property of $f$.  (The hypothesis here is slightly weaker, though).

\begin{lemma}\label{lem:key}
Let $f$ and $d$ be as in Definition~\ref{def:f-shifts} with $d$ being conservative.  Suppose that there exist $r,s\in\seqs$ and a real $C>0$ such that
\begin{equation}\label{eqn:strong-increase}
\frac{f(x) - 0.r}{x-0.s} \ge C
\end{equation}
for all $x\in [0,1]-\{0.s\}$.  If $d$ succeeds on $r$ and $0.s\notin\dyads$, then $d_f$ succeeds on $s$.
\end{lemma}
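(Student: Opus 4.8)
The plan is to show that $\limsup_{n\to\infty} d_f(s_n) = \infty$, where $s_n := s[0\ldots(n-1)]$, by locating, for infinitely many $n$, a prefix of $r$ whose dyadic interval sits inside $\Delta_{s_n}$ and is only $O(1)$ bits longer than $s_n$. Two ingredients drive this. First, since $d$ is conservative and succeeds on $r$, it strongly succeeds on $r$, so writing $r_m := r[0\ldots(m-1)]$ we have $d(r_m)\to\infty$ as $m\to\infty$. Second, there is an elementary lower bound on the pullback: whenever $\Gamma_v\subseteq\Delta_{s_n}$ for some string $v$, then $d_f(s_n)\ge 2^{n-|v|}d(v)$. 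To see this, note that for every $N\ge|v|$ the length-$N$ extensions $y\prefeq$-above $v$ all satisfy $\Gamma_y\subseteq\Gamma_v\subseteq\Delta_{s_n}$, so they occur in the sum defining $d^-(s_n;N)$ and, by the martingale identity, contribute $2^{n-N}\sum_{v\prefeq y,\,|y|=N}d(y) = 2^{n-N}\cdot 2^{N-|v|}d(v) = 2^{n-|v|}d(v)$; since the remaining terms are nonnegative, $d^-(s_n;N)\ge 2^{n-|v|}d(v)$, and letting $N\to\infty$ and invoking Corollary~\ref{cor:c-shift} ($d_f=d^-$) gives the claim. So it suffices to find, for infinitely many $n$, a prefix $r_m$ with $\Gamma_{r_m}\subseteq\Delta_{s_n}$ and $m=n+O(1)$.

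For the geometric step, put $a_n := 0.s - 0.s_n$ and $b_n := (0.s_n+2^{-n}) - 0.s$, so $a_n+b_n = 2^{-n}$ and, since $0.s\notin\dyads$, both are strictly positive. Applying the hypothesis~(\ref{eqn:strong-increase}) at $x = 0.s_n$ (which lies strictly below $0.s$) gives $f(0.s_n)\le 0.r - Ca_n$, and applying it at $x = 0.s_n + 2^{-n}$ (strictly above $0.s$) gives $f(0.s_n+2^{-n})\ge 0.r + Cb_n$; hence $[0.r - Ca_n,\ 0.r + Cb_n]\subseteq\Delta_{s_n}$. This interval need not be symmetric about $0.r$, and $\min(a_n,b_n)$ can be tiny relative to $2^{-n}$, which would force $m$ far above $n$ and destroy the factor $2^{n-m}$. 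The remedy is to pass to a subsequence of \emph{good} levels: since $0.s\notin\dyads$, the sequence $s$ is not eventually constant, so $s[n]\ne s[n+1]$ for infinitely many $n$; for each such $n$ the point $0.s$ lies in the second or third quarter of $\Gamma_{s_n}$, so $a_n,b_n\ge 2^{-n-2}$ and therefore $[0.r - C2^{-n-2},\ 0.r + C2^{-n-2}]\subseteq\Delta_{s_n}$.

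To finish, fix $t\in\nums$ (depending only on $C$) with $2^{-t}\le C/4$, and for each good level $n$ set $m := n+t$. Then $\Gamma_{r_m}$ has width $2^{-m}\le C2^{-n-2}$ and contains $0.r$, so $\Gamma_{r_m}\subseteq[0.r - C2^{-n-2},\ 0.r + C2^{-n-2}]\subseteq\Delta_{s_n}$, and the lower bound (with $v=r_m$) yields $d_f(s_n)\ge 2^{n-m}d(r_m) = 2^{-t}d(r_{n+t})$. As $n$ ranges over the infinitely many good levels, $n+t\to\infty$, so $d(r_{n+t})\to\infty$, hence $d_f(s_n)\to\infty$ along this subsequence and $\limsup_{n\to\infty}d_f(s_n)=\infty$, i.e., $d_f$ succeeds on $s$. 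The one genuinely delicate point is the good-levels observation: mimicking $d$ on a prefix of $r$ at \emph{every} level $n$ fails precisely when $0.s$ hugs an endpoint of $\Gamma_{s_n}$ (a long run of equal bits in $s$), and the argument is rescued by noticing that $0.s\notin\dyads$ forces infinitely many levels at which $s$ changes bit, where $0.s$ is safely in the middle half of $\Gamma_{s_n}$.
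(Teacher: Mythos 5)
Your proposal is correct and follows essentially the same route as the paper: it isolates levels $n$ where adjacent bits of $s$ differ (the paper uses the ``01'' occurrences) so that $0.s$ sits at distance at least $2^{-(n+2)}$ from both endpoints of $\Gamma_{s_n}$, converts this via the strong-increase hypothesis into a margin of $C2^{-(n+2)}$ around $0.r$ inside $\Delta_{s_n}$, takes the prefix of $r$ of length $n+O(1)$ (your $t$ plays the role of the paper's $\ell+2$), and lower-bounds $d_f(s_n)$ by a constant multiple of $d$ on that prefix, finishing with strong success of the conservative $d$ on $r$. The only cosmetic differences are that you reprove the lower bound $2^{|x|-|y|}d(y)\le d_f(x)$ rather than citing Lemma~\ref{lem:monotone}, and you phrase the conclusion as a limit along the good subsequence instead of the paper's arbitrary-$M$ argument.
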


\begin{proof}%[Proof of Lemma~\ref{lem:key}]
Note that Equation~(\ref{eqn:strong-increase}) implies $f(x) < 0.r$ if $x<0.s$ and $f(x) > 0.r$ if $x>0.s$.

Set $\ell := \max(0,\ceiling{\lg(1/C)})$.  We then have $C \ge 2^{-\ell}$.
%For any $x\in\strs$, the dyadic interval $\Gamma_x$ has width $2^{-|x|}$.  If $0.s \in \Delta_x$, then Equation~\ref{eqn:strong-increase} implies that $\Delta_x$ has width
%\begin{eqnarray*}
%f(0.x + 2^{-|x|}) - f(0.x) & = & [f(0.x + 2^{-|x|}) - f(0.s)] + [f(0.s) - f(0.x)] \\
%& \ge & C(0.x + 2^{-|x|} - 0.s) + C(0.s - 0.x) \\
%& = & C2^{-|x|} \ge 2^{-(|x|+\ell)}.
%\end{eqnarray*}
%%(set $a := 0.x$ and $b := 0.x + 2^{-|x|}$).

Since $0.s\notin\dyads$, $s$ has infinitely many $0$'s and infinitely many $1$'s.  This implies that $s$ has infinitely many occurrences of ``$01$'' as a substring, that is, there are infinitely many $n\in\nums$ such that $s[n]s[n+1] = 01$.  Fix any real $M > 0$.  Since $d$ succeeds on $r$ and is conservative, $d$ strongly succeeds on $r$, and so there is some $n_0\in\nums$ such that $d(r[0\ldots(n-1)]) \ge M$ for all $n\ge n_0$.  Fix some $n\ge n_0$ such that $s[n]s[n+1] = 01$.  Let $x = s[0\ldots(n-1)]$.  We have $|x|=n$ and $x01\pref s$.  Let $y = r[0\ldots(n+\ell+1)]$ be the first $n+\ell+2$ bits of $r$, noting that $d(y) \ge M$.  Here is the situation at $0.r$:
\begin{center}
\begin{picture}(0,0)%
\includegraphics{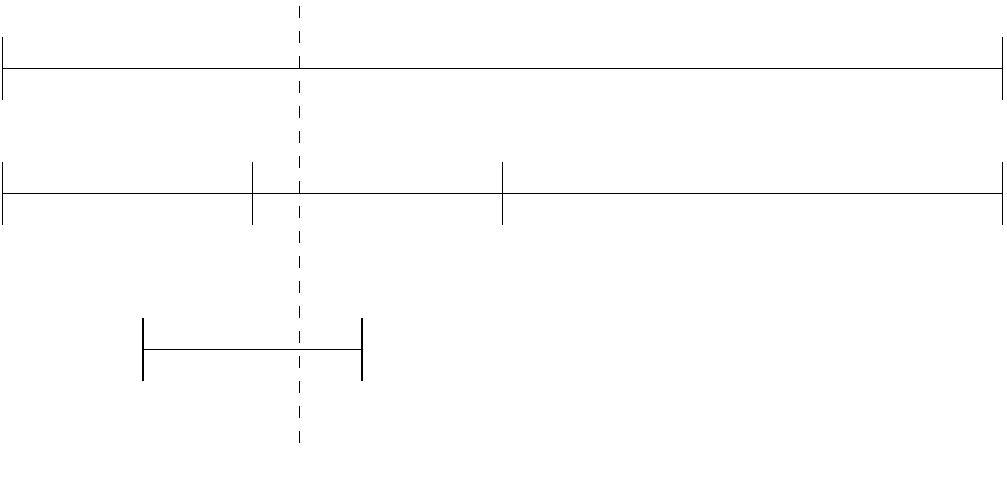}%
\end{picture}%
\setlength{\unitlength}{3947sp}%
\begingroup\makeatletter\ifx\SetFigFont\undefined%
\gdef\SetFigFont#1#2#3#4#5{%
  \reset@font\fontsize{#1}{#2pt}%
  \fontfamily{#3}\fontseries{#4}\fontshape{#5}%
  \selectfont}%
\fi\endgroup%
\begin{picture}(4824,2317)(1189,-4166)
\put(4801,-2686){\makebox(0,0)[b]{\smash{{\SetFigFont{10}{12.0}{\rmdefault}{\mddefault}{\updefault}$\Delta_{x1}$}}}}
\put(3001,-2686){\makebox(0,0)[b]{\smash{{\SetFigFont{10}{12.0}{\rmdefault}{\mddefault}{\updefault}$\Delta_{x01}$}}}}
\put(3601,-2086){\makebox(0,0)[b]{\smash{{\SetFigFont{10}{12.0}{\rmdefault}{\mddefault}{\updefault}$\Delta_x$}}}}
\put(2626,-4111){\makebox(0,0)[b]{\smash{{\SetFigFont{10}{12.0}{\rmdefault}{\mddefault}{\updefault}$0.r$}}}}
\put(2401,-3436){\makebox(0,0)[b]{\smash{{\SetFigFont{10}{12.0}{\rmdefault}{\mddefault}{\updefault}$\Gamma_y$}}}}
\put(1801,-2686){\makebox(0,0)[b]{\smash{{\SetFigFont{10}{12.0}{\rmdefault}{\mddefault}{\updefault}$\Delta_{x00}$}}}}
\end{picture}%

\end{center}
We have $0.r\in\Gamma_y$.  Also, since $0.s \in \Gamma_{x01} - \dyads$, we have $0.x01 < 0.s < 0.x01 + 2^{-|x01|}$, which implies $f(0.x01) \le 0.r \le f(0.x01+2^{-|x01|})$, as noted above.  It follows immediately that $0.r \in \Delta_{x01}$.

\begin{claim}\label{clm:subset}
$\Gamma_y \subseteq \Delta_x$.
\end{claim}

\begin{proof}[Proof of Claim~\ref{clm:subset}]
By Equation~(\ref{eqn:strong-increase}) we have
\[ 0.r - f(0.x) \ge C(0.s - 0.x) \ge C(0.x01 - 0.x) = C 2^{-(n+2)} \ge 2^{-(n+\ell+2)}\;. \]
Since $0.r \in \Gamma_y$, we have
\[ 0.r - 0.y \le 2^{-|y|} = 2^{-(n+\ell+2)}\;. \]
Combining these two inequalities gives $0.r - 0.y \le 0.r - f(0.x)$, or equivalently, $f(0.x) \le 0.y$.  Similarly, we have
\begin{align*}
f(0.x+2^{-n}) - 0.r & \ge C((0.x+2^{-n})-0.s) \ge C(0.x11 - 0.x1)  \\
&= C 2^{-(n+2)} \ge 2^{-(n+\ell+2)} =  2^{-|y|} \ge  0.y + 2^{-|y|} - 0.r\;,
\end{align*}
whence $0.y + 2^{-|y|} \le f(0.x + 2^{-n})$.  Thus
\[ \Gamma_y = [0.y,0.y+2^{-|y|}] \subseteq [f(0.x),f(0.x+2^{-n})] = \Delta_x \]
as claimed.  This concludes the proof of Claim~\ref{clm:subset}.
%We see that $0.r \in \Gamma_y \intersection \Delta_{x01}$, and so the intervals $\Gamma_y$ and $\Delta_{x01}$ overlap.  Then it is clear from the diagram that it suffices to show that $\Gamma_y$ is no wider than $\Delta_{x00}$ or $\Delta_{x1}$.  By our choice of $\ell$, we see that the width of $\Delta_{x00}$ is
%\[ f(0.x00 + 2^{-(n+2)}) - f(0.x00) \ge C2^{-(n+2)} \ge 2^{-(n+\ell+2)}\;, \]
%which is the width of $\Gamma_y$, since $|y| = n+\ell+2$.  A similar argument works for the width of $\Delta_{x1}$.
\end{proof}

Continuing with the proof of Lemma~\ref{lem:key}, we use Lemma~\ref{lem:monotone} again, noting that $|y| = |x| + \ell+2$, to get
\[ 2^{-(\ell+2)} d(y) = 2^{|x| - (|x|+\ell+2)} d(y) \le d^-(x;|x|+\ell+2) \le d_f(x)\;. \]
Since $d(y) \ge M$, we then get
\[ d_f(x) \ge 2^{-(\ell+2)}M\;. \]
Since $M$ is arbitrary, $x\sqsubseteq s$, and $\ell$ is a constant independent of $x$, this means that $d_f$ succeeds on $s$.
\end{proof}

Finally, we need a lemma regarding the computation time of $d_f$.  The challenge in the proof is in finding an easy (i.e., polynomial-time) way to approximate the $d^-(x;n)$ and $d^+(x;n)$ values.

\begin{lemma}\label{lem:p-computable}
For any $j,k\in\nums$ there exists $q\in\nums$ such that, for any conservative $n^k$-computable martingale $d$ and $n^j$-computable function $\map{f}{[0,1]}{[0,1]}$ monotone ascending on $[0,1]$ with $f(1)=1$, the $f$-pullback martingale $d_f$ of $d$ is $n^q$-computable.  (As a consequence, if $d$ is p-computable and $f$ is weakly p-computable on $[0,1]$, then $d_f$ is p-computable.)
\end{lemma}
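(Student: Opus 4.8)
The plan is to exhibit an $n^q$-approximator $\hat{d}_f$ for $d_f$ and bound its running time. The basic idea is that $d_f(x)$ is squeezed between $d^-(x;n)$ and $d^+(x;n)$, which themselves differ by at most $2^{|x|+1}(3/4)^n$ by Lemma~\ref{lem:squeeze}; so to approximate $d_f(x)$ to within $2^{-r}$, it suffices to take $n$ roughly $\Theta(|x| + r)$ (say $n = \lceil c(|x|+r)\rceil$ for a suitable constant $c$ so that $2^{|x|+1}(3/4)^n \le 2^{-r-1}$), then estimate the sum defining $d^-(x;n)$ (or $d^+(x;n)$) sufficiently accurately. The problem is that the sum in Equation~(\ref{eqn:lower-f-shift}) ranges over all $y\in\two^n$ with $\Gamma_y\subseteq\Delta_x$, and there are up to $2^n$ such $y$; we cannot afford to enumerate them individually. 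The key observation that makes this polynomial-time is that because $f$ is monotone ascending, the set $\{y\in\two^n:\Gamma_y\subseteq\Delta_x\}$ consists of all dyadic intervals of level $n$ lying between two ``boundary'' strings — namely the leftmost such $y$ (just to the right of $f(0.x)$) and the rightmost such $y$ (just to the left of $f(0.x+2^{-|x|})$) — so we are summing $d(y)$ over a contiguous block of level-$n$ dyadic intervals.

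First I would introduce the ``cumulative martingale'' $D(z) := \sum_{y\in\two^n,\;0.y<0.z}d(y)$ for $z\in\two^n$, which equals $2^n$ times the average of $d$ over $[0,0.z)$; this telescopes because $d$ is a martingale, so $D(z) = 2^{n}\cdot$(value of a suitable partial sum) and, crucially, $D(z)$ can be written as $\sum_{i\,:\,z[i]=1} 2^{i}\, d(z[0\ldots(i-1)]0)$ — a sum of at most $n$ terms, each a single value of $d$. Hence $d^-(x;n) = 2^{|x|-n}\bigl(D(y_{\mathrm{right}}) - D(y_{\mathrm{left}})\bigr) + (\text{the boundary term }d(y_{\mathrm{left}})\text{, if }\Gamma_{y_{\mathrm{left}}}\subseteq\Delta_x)$, computable from $O(n)$ evaluations of $d$ once we know the two boundary strings $y_{\mathrm{left}},y_{\mathrm{right}}$. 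Since $d$ is $n^k$-computable, each evaluation to precision $2^{-(r+O(\log n))}$ costs $O((n+r)^k)$ time, for a total of $O(n\cdot(n+r)^k) = O((|x|+r)^{k+1})$ arithmetic, plus the bit-complexity overhead of adding $O(n)$ rationals, which is polynomial; so taking $q$ a bit larger than $k+1$ (and absorbing the dependence on $j$) suffices.

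The main obstacle — and the step I expect to require the most care — is locating the boundary strings $y_{\mathrm{left}}$ and $y_{\mathrm{right}}$, i.e., identifying which level-$n$ dyadic intervals are the extreme ones contained in $\Delta_x = [f(0.x),\,f(0.x+2^{-|x|})]$, given only that $f$ is $n^j$-computable \emph{on the dyadics}. We can compute $\hat{f}(x,0^m)$ and $\hat{f}(x',0^m)$ (where $0.x' = 0.x+2^{-|x|}$, i.e., $x'$ is $x$ with a $1$ appended or the appropriate incremented string) to precision $2^{-m}$ in time $O((|x|+m)^j)$, giving rational approximations to the endpoints of $\Delta_x$; from these we read off $\lceil 2^n \hat{f}(x,0^m)\rceil$ and $\lfloor 2^n \hat{f}(x',0^m)\rfloor$ as candidates for the boundary indices. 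The subtlety is that if an endpoint of $\Delta_x$ happens to be extremely close to a level-$n$ grid point, our approximation may misidentify the boundary string by one; but a one-interval error at each end changes $d^-(x;n)$ by at most $2^{|x|-n}\cdot 2\cdot(3/2)^n = 2^{|x|+1}(3/4)^n$, which is already within our target error budget (it is the same quantity that appears in Lemma~\ref{lem:squeeze}, and we chose $n$ so that it is $\le 2^{-r-1}$). So we never actually need to resolve the boundary exactly: any answer consistent with the approximations $\hat{f}(x,0^m),\hat{f}(x',0^m)$ for a large enough $m = \Theta(n)$ yields a value lying between $d^-(x;n) - 2^{-r-1}$ and $d^+(x;n)+2^{-r-1}$, hence within $2^{-r}$ of $d_f(x)$. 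Assembling these pieces — the choice of $n$ and $m$ as linear functions of $|x|+r$, the telescoping formula for $D$, the $O(n)$ calls to $\hat{d}$ and $\hat{f}$, and the rational arithmetic — gives the $n^q$-approximator and the running-time bound, establishing the lemma; the parenthetical p-computability consequence then follows by taking unions over $j$ and $k$, after reducing the general weakly-p-computable $f$ on $[0,1]$ to one with $f(1)=1$ via the scalings in Observation~\ref{obs:scale-shift} (which is how Lemma~\ref{lem:main} will ultimately be derived).
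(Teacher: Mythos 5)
Your proposal is correct and follows essentially the same route as the paper's proof: approximate the two endpoints of $\Delta_x$ via $\hat f$, collapse the level-$m$ sum into polynomially many evaluations of $d$ using the martingale averaging property (your prefix-sum telescoping at the two boundary strings is exactly the paper's greedy decomposition of $[a,b]$ into $\sqsubseteq$-minimal dyadic intervals), absorb any one-interval boundary misidentification with the conservativeness bound from Lemma~\ref{lem:squeeze}, and take the level linear in $|x|+r$. The only corrections are bookkeeping: the telescoped weights are $2^{n-i-1}$ (not $2^{i}$), so after the $2^{|x|-n}$ normalization individual terms carry weight up to $2^{|x|-1}$ and the calls to $\hat d$ need precision $\Theta(|x|+r)$ rather than $r+O(\log n)$ --- adjustments that leave the polynomial running time, and hence the lemma, intact.
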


\begin{proof}%[Proof of Lemma~\ref{lem:p-computable}]
The idea is that, given input $x\in\strs$ of length $n$ and accuracy parameter $r\in\nums$, we will approximate some number between $d^-(x;m)$ and $d^+(x;m)$ for some sufficiently large $m\ge n$ (but still polynomial in $n$ and $r$).  We have no hope of computing the sum of Equations~(\ref{eqn:upper-f-shift}) or (\ref{eqn:lower-f-shift}) directly, as there are exponentially many terms.  Fortunately, large blocks of the sum can be computed all at once by evaluating $d$ on shorter inputs.  The condition that $f(1)=1$ is only for technical convenience and is not necessary; it is only required that $f(1)$ be computable in time $O(n^j)$.

Given conservative $d$ and monotone $f$ as above, fix approximators
\[
\map{\hat{d}}{\strs\times\strs}{\rats} \hspace{0.25in}\mbox{and}\hspace{0.25in}
\map{\hat{f}}{\strs\times\strs}{\rats} \]
computable in time $O(n^k)$ and $O(n^j)$, respectively, such that for all $w\in\strs$ and $r\in\nums$,
\[ \left| \hat{d}(w,0^r) - d(w) \right| \le 2^{-r} \hspace{0.25in}\mbox{and}\hspace{0.25in} \left| \hat{f}(w,0^r) - f(0.w) \right| \le 2^{-r}\;. \]

Fix an input $x\in\strs$ and let $n = |x|$.

Fix an $r\in\nums$.  We will choose $m$ to be a sufficiently large integer (depending on $n$ and $r$) to be determined later.  We prove the lemma by describing a procedure (running time polynomial in $m$) to compute a number $v\in\rats$ such that $\left|d_f(x)-v\right| \le 2^{-r}$.

%It will help to identify strings in $\strs$ with nodes on the infinite, full binary tree, with $\emptystr$ as the root and $w0$ and $w1$ the left and right children of parent $w$, respectively, for every $w\in\strs$.  Infinite paths in the tree then correspond to reals in $[0,1]$ by the standard binary representation.

Here is the procedure:
\begin{enumerate}
\item
Compute dyadic rationals $0\le a\le b \le 1$, both with denominator $2^m$, so that $[a,b]$ approximates $\Delta_x$ to within less than $2^{-m}$ for each endpoint:
\begin{enumerate}
\item
Compute $c_0 = \hat{f}(x,0^{m+2})$ and round $c_0$ to the nearest $a\in\dyads$ with denominator $2^m$ so that $|c_0 - a| \le 2^{-(m+1)}$.  Notice that
\[ |f(0.x) - a| \le |f(0.x) - c_0| + |c_0 - a| \le 2^{-(m+2)} + 2^{-(m+1)} < 2^{-m}\;. \]
In other words,
\[ a - 2^{-m} < f(0.x) < a + 2^{-m}\;. \]
(Note that $f(0.x)$ is the left endpoint of $\Delta_x$.)
\item
If $x = 1^n$, then let $b := 1$.  Otherwise, let $x'$ be the lexicographical successor of $x$ in $\two^n$, and compute $c_1 = \hat{f}(x',0^{m+2})$.  Let $b$ be the dyadic rational with denominator $2^m$ closest to $c_1$.  Similarly to $a$, we have
\[ b - 2^{-m} < f(0.x') = f(0.x+2^{-n}) < b + 2^{-m}\;. \]
(Note that $f(0.x+2^{-n})$ is the right endpoint of $\Delta_x$.)
\item
Without loss of generality, we can assume that $0\le a\le b\le 1$: if necessary, reset $a := \min(\max(a,0),1)$ then $b := \min(\max(a,b),1)$.  These adjustments don't affect the inequalities above.
\end{enumerate}
\item
Let $S$ be the set of all $\sqsubseteq$-minimal strings $w$ such that $\Gamma_w \subseteq [a,b]$.
\item
Finally, compute
\[ v := 2^n \sum_{w\in S} 2^{-|w|} \hat{d}(w,0^m)\;. \]
\end{enumerate}

Notice that no string in $S$ is a proper prefix of any other string in $S$; hence the sets $\Gamma_w$ for $w\in S$ are pairwise disjoint except for endpoints.  Further, it is clear that $\bigcup_{w\in S} \Gamma_w = [a,b]$ if $a<b$ (otherwise, $S = \emptyset$).

We claim that $S$, and hence $v$, can be computed in time polynomial in $m$, with a polynomial time bound exponent depending only on $k$ and $j$.  This follows from three facts:
\begin{enumerate}
\item
$a$ and $b$ can be computed in time $O(m^j)$.
\item
Every string in $S$ has length at most $m$.
\item
There are at most two strings in $S$ of any given length.
\end{enumerate}
Fact~1 is clear from the procedure description.  For Fact~2, notice that, since $a$ and $b$ have denominator $2^m$, if $w$ is any string such that $\Gamma_w\subseteq [a,b]$ and $|w|>m$, then removing the last bit of $w$ yields a proper prefix $w'\sqsubset w$ such that $\Gamma_{w'}\subseteq [a,b]$ as well, and so $w$ is not $\sqsubseteq$-minimal, and thus $w\notin S$.

Similarly for Fact~3, if $w_1,w_2,w_3\in S$ are any three distinct strings given in lexicographical order, and $|w_1| = |w_3|$, then $|w_2| < |w_1|$.  To see this, suppose $|w_2| \ge |w_1|$.  Let $w'$ be the result of removing the last bit of $w_2$.  Then $\Gamma_{w'}$ includes $\Gamma_{w_2}$ and another dyadic interval of length $2^{-|w_2|}$ immediately to the left or right of $\Gamma_{w_2}$.  In either case, the left end point of $\Gamma_{w'}$ is not to the left of that of $\Gamma_{w_1}$, and the right endpoint of $\Gamma_{w'}$ is not to the right of that of $\Gamma_{w_3}$.  So $\Gamma_{w'} \subseteq [a,b]$, which means that $w_2$ is not $\sqsubseteq$-minimal, and thus $w_2\notin S$.

Thus $S$ has at most $2m+1$ strings, each of length at most $m$, and so the following greedy algorithm for computing $S$ (given $a$ and $b$) runs in time $O(m^2)$:
\begin{algo}
$S \assn \emptyset$ \\
$z \assn a$ \\
WHILE $z<b$ DO \\
\>Let $w\in\strs$ be shortest such that $z=0.w$ and $z+2^{-|w|}\le b$ \\
\>$S \assn S \union \{w\}$ \\
\>$z \assn z+2^{-|w|}$ \\
END-WHILE \\
return $S$
\end{algo}
It then follows that $v$ can be computed in from $S$ (in Step~3, above) in time $O(m^{k+1})$.

It remains to show that $m$ can be chosen so that $v$ is sufficiently close to $d_f(x)$.  First, note that, due to the closeness of our approximations to the endpoints of $\Delta_x$,
\begin{equation}\label{eqn:sandwich}
d^-(x;m) \le 2^{n-m} \sum_{y\in\two^m\;\colon\; \Gamma_y\subseteq [a,b]} d(y) \le d^+(x;m)\;.
\end{equation}
(The sum in the middle includes all the terms of the sum on the left, and the sum on the right includes all the terms of the sum in the middle.)

Since $[a,b] = \bigcup_{w\in S} \Gamma_w$, and the intervals $\Gamma_w$ intersect only at endpoints, we can rewrite the sum in the middle of (\ref{eqn:sandwich}) as
\[ 2^{n-m} \sum_{w\in S} \left(\sum_{y\in\two^m\;\colon\; w\sqsubseteq y} d(y) \right) = 2^{n-m} \sum_{w\in S} 2^{m-|w|} d(w) = 2^n \sum_{w\in S} 2^{-|w|} d(w)\;, \]
the first equality owing to the fact that $d$ is a martingale.  So Equation~(\ref{eqn:sandwich}) becomes
\begin{equation}\label{eqn:sandwich-again}
d^-(x;m) \le 2^n \sum_{w\in S} 2^{-|w|} d(w) \le d^+(x;m)\;.
\end{equation}

Now we use the fact that $\left|\hat{d}(w,0^m) - d(w)\right| \le 2^{-m}$ for all $w\in S$.  From (\ref{eqn:sandwich-again}) we get
\begin{align*}
d^-(x;m) - 2^{n-m}\sum_{w\in S} 2^{-|w|} &\le 2^n \sum_{w\in S}2^{-|w|}[d(w) - 2^{-m}] \le 2^n \sum_{w\in S}2^{-|w|}\hat{d}(w,0^m) = v \\
&\le 2^n \sum_{w\in S}2^{-|w|}[d(w) + 2^{-m}] \le d^+(x;m) + 2^{n-m}\sum_{w\in S} 2^{-|w|}\;.
\end{align*}
We have $\sum_{w\in S} 2^{-|w|} = b - a \le 1$, so the above inequality implies
\[ d^-(x;m) - 2^{n-m} \le v \le d^+(x;m) + 2^{n-m}\;. \]

Since $d^-(x;m) \le d_f(x) \le d^+(x;m)$ (as follows from Lemma~\ref{lem:monotone}), it is clear then that
\[ \left| d_f(x)  - v \right| \le d^+(x;m) - d^-(x;m) + 2^{n-m} \le 2^{n-m} + 2^{n+1} \left(\frac{3}{4}\right)^m = 2^{n-m} + 2^{n-2m+1} 3^m \]
by Lemma~\ref{lem:squeeze}.  To bound $\left| d_f(x) - v \right|$ above by $2^{-r}$, it suffices that $2^{n-m} \le 2^{-(r+1)}$ and that $2^{n-2m+1} 3^m \le 2^{-(r+1)}$.  That is,
\[ m \ge n+r+1 \hspace{0.25in}\mbox{and}\hspace{0.25in} m \ge \frac{n+r+2}{2 - \log_2 3}\;. \]
So it suffices to set $m := 4(n+r+2) = O(n+r)$.  This makes the entire computation time for $v$ polynomial in $n$ and $r$, and in fact, $v$ can be computed in time $O((n+r)^q)$, where $q := \max(j,2,k+1)$.
\end{proof}

\begin{proof}[Proof of Lemma~\ref{lem:main}]
Let $f$ and $x_0$ be as in Lemma~\ref{lem:main}, and suppose $f$ is weakly $n^j$-computable for some $j$.  If $x_0\in\dyads$, then it is clearly not $n^1$-random, and we are done.  Otherwise, fix $k\in\nums$, and assume that $f(x_0)$ is not $n^k$-random.  Let $\ell = \floor{f(0)}$, let $h = \ceiling{f(1)}$, and let $m\ge 0$ be the least natural number such that $2^m \ge h-\ell$.  For all $x\in [0,1)$, define
\[ g(x) := 2^{-m}(f(x) - \ell)\;, \]
and define $g(1) := 1$.  Then $\map{g}{[0,1]}{[0,1]}$ is monotone ascending, weakly $n^j$-computable on $[0,1]$, and strongly increasing at $x_0$ on $[0,1]$ by Observation~\ref{obs:scale-shift}.  Further, since $f(x_0)$ is not $n^k$-random, it follows from Observation~\ref{obs:scale-shift-random} that $g(x_0) = 2^{-m}(f(x_0)-\ell)$ is not $n^k$-random, either.  Thus by Proposition~\ref{prop:conservative} there exists a $t\in\nums$---depending only on $k$---and a conservative, $n^t$-computable martingale $d$ that succeeds on $g(x_0)$.  By Lemmata~\ref{lem:key} and \ref{lem:p-computable} (letting $0.s$ be $x_0$), the $g$-pullback $d_g$ of $d$ succeeds on $x_0$ and is $n^q$-computable for some $q$ depending only on $j$ and $t$, with the latter depending only on $k$.  Thus $x_0$ is not $n^q$-random.
\end{proof}

To prove Theorem~\ref{thm:main}, we first show that the monotonicity assumption in Corollary~\ref{cor:main} is dispensible.  We do this by tweaking a nonmonotone function into a monotone one with the same desirable properties.

\begin{lemma}\label{lem:discharge-monotone}
Let $\map{f}{[0,1]}{\reals}$ be weakly p-computable on $[0,1]$.  Suppose that there exists $x_0\in [0,1]$ such that $f$ strongly increases at $x_0$ on $[0,1]$.  Then there exists a monotone ascending function $\map{g}{[0,1]}{\reals}$ that is weakly p-computable on $[0,1]$, strongly increases at $x_0$ on $[0,1]$, and satisfies $g(x_0) = f(x_0)$.
\end{lemma}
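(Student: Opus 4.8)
The plan is to repair monotonicity (a \emph{global} property) by a top-down ``binary-search clipping'' of a truncated copy of $f$, arranged so that each value $g(q)$ depends on $f$ at only logarithmically many dyadic points (the dyadic ``ancestors'' of $q$); this locality is exactly what lets $g$ stay weakly p-computable. First, replacing $C$ by a positive dyadic lower bound, I assume $C\in\dyads$. If $x_0\in\dyads\cap(0,1)$, then $f(x_0)$ is p-computable (apply a p-approximator of $f$ at the fixed argument $x_0$), and $g(x):=f(x_0)+C(x-x_0)$ works. The endpoint cases $x_0\in\{0,1\}$ are handled similarly by the affine function $g(x):=f(0)+Cx$ on $[0,1)$, with the fiat value $g(1):=f(1)$ when $x_0=1$ (a point value invisible to weak p-computability, which constrains only dyadics in $[0,1)$); monotonicity and strong increase at the endpoint then follow routinely from the strong-increase inequality at $z=0$. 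So from now on assume $x_0\in(0,1)\setminus\dyads$.

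\textbf{Truncation and construction.} Choose a string $y$ with $0\le 0.y<x_0<0.y+2^{-|y|}<1$, and set $\alpha:=0.y$, $\beta:=0.y+2^{-|y|}$, $A:=f(\alpha)$, $B:=f(\beta)$; both $A$ and $B$ are p-computable (fixed dyadic arguments), and the hypothesis gives $A\le f(x_0)+C(\alpha-x_0)<f(x_0)<f(x_0)+C(\beta-x_0)\le B$. Truncate $f$ into the band $[A,B]$: put $f^*:=\max(A,\min(B,f))$. Then $f^*$ is weakly p-computable, takes values in $[A,B]$, satisfies $f^*(x_0)=f(x_0)$, and still obeys $f^*(z)\ge f(x_0)+C(z-x_0)$ for $z\in(x_0,\beta]$ and $f^*(z)\le f(x_0)+C(z-x_0)$ for $z\in[\alpha,x_0)$ (when the truncation is active it clamps at level $B$ or $A$, still on the correct side of the line $y=f(x_0)+C(x-x_0)$). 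Now define $g$ recursively on the dyadic subintervals of $\Gamma_y$: set $g(\alpha):=A$, $g(\beta):=B$; whenever $g$ has been given endpoint values $L\le R$ on a dyadic subinterval $\Gamma_v\subseteq\Gamma_y$, give its midpoint $m_v$ the value $g(m_v):=\max(L,\min(R,f^*(m_v)))\in[L,R]$, so $\Gamma_{v0}$ inherits endpoint values $(L,g(m_v))$ and $\Gamma_{v1}$ inherits $(g(m_v),R)$. The endpoint values are non-decreasing along the subdivision, so extending $g$ to non-dyadic $z\in\Gamma_y$ by $g(z):=\sup\{g(q):q\in\dyads\cap\Gamma_y,\ q\le z\}$ and overriding $g(x_0):=f(x_0)$ yields a monotone ascending $g$ on $\Gamma_y$. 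Finally extend to $[0,1]$ affinely, $g(x):=A+C(x-\alpha)$ for $x<\alpha$ and $g(x):=B+C(x-\beta)$ for $x>\beta$; these match the values at $\alpha,\beta$ and keep $g$ monotone on $[0,1]$.

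\textbf{Verification.} That $g(x_0)=f(x_0)$ is immediate once one notes $\sup\{g(q):q\in\dyads,\ q\le x_0\}\le f(x_0)\le\inf\{g(q):q\in\dyads,\ q\ge x_0\}$, so the override is consistent with monotonicity. For strong increase at $x_0$ with constant $C$: by induction on the denominator level of the dyadics of $\Gamma_y$, every dyadic $q\in\Gamma_y$ with $q>x_0$ satisfies $g(q)\ge f(x_0)+C(q-x_0)$ — writing $g(q)=\max(L,\min(R,f^*(q)))$, the right endpoint value $R$ comes from a coarser dyadic $q'>q$ with $R=g(q')\ge f(x_0)+C(q'-x_0)\ge f(x_0)+C(q-x_0)$ by the inductive hypothesis, and $f^*(q)\ge f(x_0)+C(q-x_0)$ as well, so the truncated value clears the bound — and symmetrically $g(q)\le f(x_0)+C(q-x_0)$ for dyadic $q<x_0$; passing to suprema/infima extends both inequalities to all of $\Gamma_y$, and on each affine piece the secant from $(x_0,f(x_0))$ has slope $\ge C$ because that point lies on the correct side of the piece. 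For weak p-computability: to approximate $g(0.w)$, walk the length-$O(|w|)$ path of dyadic subintervals of $\Gamma_y$ down to the one with left endpoint $0.w$, maintaining the inherited range and doing one approximate evaluation of $f^*$ and one clamping $\max(L,\min(R,\cdot))$ per step; since $\max(L,\min(R,\cdot))$ is $1$-Lipschitz jointly in all three arguments, errors do not amplify, so computing each $f^*$-value (and $A,B$) to accuracy $2^{-r}$ yields $g(0.w)$ to accuracy $2^{-r}$ in time polynomial in $|w|$ and $r$; outside $\Gamma_y$ the affine formulas are trivially weakly p-computable.

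\textbf{Main obstacle.} The crux is the conflict between monotonicity and the polynomial-time bound: the naive way to monotonize is the running supremum $\sup_{q\le x}f(q)$, which no polynomial-time machine can evaluate, and which may even be infinite, since a weakly p-computable $f$ can be unbounded on dyadics and arbitrarily discontinuous off them. Truncation removes the unboundedness, and the binary-search-clipping recursion replaces the running supremum by a computation touching only logarithmically many points; the real content of the proof is checking that this purely \emph{local} repair is nonetheless \emph{globally} monotone, pins $g$ to $f(x_0)$ at $x_0$, and keeps $g$ trapped on the correct side of the line $y=f(x_0)+C(x-x_0)$ throughout a neighborhood of $x_0$.
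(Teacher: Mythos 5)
Your proposal is correct and follows essentially the same route as the paper: your binary-subdivision clipping with inherited endpoint values $(L,R)$ is exactly the paper's recursion $g(q):=\max(g(q^-),\min(g(q^+),f(q)))$, verified by the same inductions (monotonicity via nested ranges, the slope-$C$ bound on each side of $x_0$) and made polynomial-time by the same ancestor-path walk with $1$-Lipschitz clamping. The only deviations---pre-truncating $f$ to $[A,B]$, restricting to a dyadic subinterval $\Gamma_y$ with affine patches outside, and the separate affine treatment of dyadic or endpoint $x_0$---are cosmetic.
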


\begin{proof}[Proof of Lemma~\ref{lem:discharge-monotone}]
We first define $g$ on $\dyads\intersection [0,1]$ to be monotone.  Extending $g$ to domain $[0,1]$ will then be trivial.

The idea is that we give priority to dyadic rationals with smaller denominators, and for any point $x\in\dyads$, we let $g(x) := f(x)$ unless this violates monotonicity with a neighboring point of higher priority (i.e., lower denominator).  If so, we adjust $g(x)$ just enough to avoid the violation.
%The proof is complicated by the fact that, since we know $f(x)$ only approximately, we might not be certain whether a violation occurs.

Here we give a recursive definition of $g$ restricted to $\dyads\intersection [0,1]$ based on $f$.  For any $q\in \dyads\intersection (0,1)$, let
$y_q\in\strs$ be the unique string such that $q = 0.y_q1$.  We define $e_q := |y_q|+1$ and call this the \emph{exponent} of $q$.  By convention, the exponents $e_0$ of $0$ and $e_1$ of $1$ are both $0$.  Define
\[ q^- := 0.y_q\;, \]
and define
\[ q^+ := \left\{ \begin{array}{ll}
1 & \mbox{if $y_q\in\{1\}^*$\;,} \\
0.z1 & \mbox{if $(\exists z\in\strs)(\exists w\in\{1\}^*)[y_q = z0w]$\;.}
\end{array} \right. \]
(Note that $z$ and $w$ are unique if they exist.)  The points $q^-$ and $q^+$ are the dyadic rationals closest to $q$ (on the left and right side, respectively) whose exponents are less than that of $q$.

Now we define $g(0) := f(0)$, $g(1) := f(1)$, and for each $q\in \dyads \intersection (0,1)$,
\begin{equation}\label{eqn:g-recursive}
g(q) := \max(g(q^-),\min(g(q^+),f(q)))\;.
\end{equation}
The recursion is well-founded because $q^-$ and $q^+$ have smaller exponents than $q$.

\begin{claim}\label{clm:monotone}
The function $g$ is monotone ascending on $\dyads \intersection [0,1]$.
\end{claim}

\begin{proof}[Proof of Claim~\ref{clm:monotone}]
Let $p$ and $q$ be dyadic rationals with $0\le p < q\le 1$.  We proceed by induction on $e := \max(e_p,e_q)$ to show that $g(p) \le g(q)$.  If $e = 0$, then we have $p=0$ and $q=1$, and clearly, $g(0) = f(0) < f(1) = g(1)$ by the constraints on $f$.  If $e>0$, we have three cases:
\begin{enumerate}
\item
If $e_p < e_q$, then we have $p \le q^-$ by the maximality of $q^-$, and so by the inductive hypothesis, $g(p) \le g(q^-)$.  Then by the recursive definition of $g(q)$, we have $g(q^-) \le g(q)$, hence $g(p) \le g(q)$.
\item
If $e_p > e_q$, then $p^+ \le q$, and so by the inductive hypothesis, $g(p^-) \le g(p^+) \le g(q)$.  By the recursive definition of $g(p)$ (and the fact that $g(p^-) \le g(p^+)$), we have $g(p) \le g(p^+)$, whence $g(p) \le g(q)$.
\item
If $e_p = e_q > 0$, then $|y_p| = |y_q|$.  Let $y$ be the longest common prefix of $y_p$ and $y_q$.  Then clearly, $y0\sqsubseteq y_p$ and $y1\sqsubseteq y_q$.  Let $r = 0.y1$.  Since $y$ is shorter than $y_p$ and $y_q$, we have $e_r < e_p$ and $e_r < e_q$, and in addition, $p < r < q$.  Thus $p^+ \le r\le q^-$, and so by the inductive hypothesis, $g(p^+) \le g(r) \le g(q^-)$.  By an argument similar to the other two cases, we have $g(p) \le g(p^+)$ and $g(q^-) \le g(q)$.  Thus $g(p) \le g(q)$.
\end{enumerate}
This ends the proof of Claim~\ref{clm:monotone}.
\end{proof}

\begin{claim}\label{clm:p-computable}
The function $g$ is p-computable on $\dyads \intersection [0,1]$.
\end{claim}

\begin{proof}[Proof of Claim~\ref{clm:p-computable}]
First $g(0) = f(0)$ and $g(1) = f(1)$, so $g$ is p-computable at $0$ and at $1$.  For any $q \in \dyads \intersection (0,1)$, we have
\[ g(q) = \max(g(q^-), \min(g(q^+),f(q)))\;. \]
First notice that for any $r\in\nums$, if $a$, $b$, and $c$ are such that $|a - g(q^-)| \le 2^{-r}$, $|b - g(q^+)| \le 2^{-r}$, and $|c-f(q)| \le 2^{-r}$ then it is not hard to see that
\[ \left| \max(a,\min(b,c)) - g(q) \right| \le 2^{-r}\;. \]
Thus to approximate $g(q)$ to within $2^{-r}$, it suffices to approximate $g(q^-)$, $g(q^+)$, and $f(q)$ each to within $2^{-r}$.

Let $q = 0.y_q1$ where $y_q$ is as above.  Unwinding the recursion of Equation~\ref{eqn:g-recursive}, it becomes apparent that $g(q)$ only depends on $f$ at $1$ and at points of the form $0.y$ and $0.y1$ for $y\sqsubseteq y_q$.  So to approximate $g(q)$ we only need to approximate $f$ on these points.  Here is a nonrecursive polynomial-time algorithm, equivalent to Equation~\ref{eqn:g-recursive}, to approximate $g$ on $\dyads\intersection [0,1)$.  It assumes a p-approximator $\hat{f}$ for $f$ on $[0,1)$ and a p-approximator $\hat{f}_1$ for $f(1)$.
\begin{algo}
Algorithm for $\hat{g}(x,0^r)$ \\
// $x\in\strs$ and $r\in\nums$ \\
// Outputs a $y\in\rats$ such that $|y - g(0.x)| \le 2^{-r}$ \\
\>Remove any trailing zeros from $x$ \\
\>$\ell \assn \hat{f}(\emptystr,0^r)$ \\
\>$h \assn \hat{f}_1(0^r)$ \\
\>$s \assn \emptystr$ \\
\>FOR $i\assn 0$ TO $|x|-1$ DO \\
\>\>$b \assn x[i]$ \\
\>\>IF $b = 0$ THEN \\
\>\>\>$h \assn \max(\ell,\min(h,\hat{f}(s1,0^r)))$ \\
\>\>ELSE // IF $b = 1$ THEN \\
\>\>\>$\ell \assn \max(\ell,\min(h,\hat{f}(s1,0^r)))$ \\
\>\>$s \assn sb$ \\
\>END-FOR \\
\>OUTPUT $\ell$ and STOP
\end{algo}

The algorithm above clearly runs in polynomial time.  The proof that it correctly approximates $g(0.x)$ uses the following key loop invariant: At the start of each iteration of the FOR-loop, $s\sqsubset x$, and in addition,
\[ |\ell - g((0.s1)^-)| \le 2^{-r} \; \mbox{ and } \; |h - g((0.s1)^+)| \le 2^{-r}\;. \]
We omit the details.  This ends the proof of Claim~\ref{clm:p-computable}.
\end{proof}

\begin{claim}\label{clm:g-strongly-increasing}
There exists a $C>0$ such that for all $x\in [0,1]\intersection\dyads - \{x_0\}$,
\begin{equation}\label{eqn:g-strongly-increasing}
\frac{g(x) - f(x_0)}{x-x_0} \ge C\;.
\end{equation}
\end{claim}

\begin{proof}[Proof of Claim~\ref{clm:g-strongly-increasing}]
We can let $C$ be any constant witnessing the strong increase of $f$ at $x_0$ on $[0,1]$.  We proceed by induction on the exponent $e_x$ of $x$.  This is clear when $e_x = 0$.  If $e_x > 0$, then
\[ g(x) = \max(g(x^-),\min(g(x^+),f(x))) \]
by Equation~\ref{eqn:g-recursive}.  If $g(x) = f(x)$, then were are clearly done.  Suppose $g(x) < f(x)$.  Then Equation~\ref{eqn:g-strongly-increasing} is still satisfied if $x<x_0$, so suppose that $x>x_0$.  We have $g(x) = g(x^+)$, and so, using the inductive hypothesis,
\[ \frac{g(x) - f(x_0)}{x-x_0} = \frac{g(x^+) - f(x_0)}{x-x_0} \ge \frac{g(x^+) - f(x_0)}{x^+ - x_0} \ge C\;. \]
A similar argument using $g(x^-)$ works if $g(x) > f(x)$.  This ends the proof of Claim~\ref{clm:g-strongly-increasing}.
\end{proof}

We now extend the definition of $g$ to all of $[0,1]$ by
\[ g(x) := \sup\{g(y) \mid y\in \dyads\intersection[0,x]\}\;, \]
except that we define $g(x_0) := f(x_0)$.  (If $x_0\in\dyads$, then we already have $g(x_0) = f(x_0)$, because $g(x_0^-) < f(x_0) < g(x_0^+)$ by Equation~\ref{eqn:g-strongly-increasing}.)  The claims imply that $g$ has all the requisite properties.
\end{proof}

\begin{proof}[Proof of Theorem~\ref{thm:main}]
Let $I$, $f$, and $r$ be as in the statement of the theorem.  We can assume that $f$ strongly increases at $x$, for otherwise we apply the foregoing argument to $-f$, using Observations~\ref{obs:scale-shift-random} and \ref{obs:scale-shift} to get that $f(r)$ is p-random.  We can choose some dyadic interval $\Gamma_w = [0.w,0.w+2^{-|w|}]\subseteq I$ containing $r$ on which $f$ is weakly p-computable and strongly increases at $x$.  For all $x\in [0,1]$, define
\[ g(x) := f(0.w + 2^{-|w|}x)\;. \]
By Observation~\ref{obs:scale-shift}, $g$ is weakly p-computable on $[0,1]$ and strongly increases at the point $s := 2^{|w|}(r-0.w)$ on $[0,1]$.  By Lemma~\ref{lem:discharge-monotone}, there is a monotone ascending function $h$ that is weakly p-computable on $[0,1]$, is strongly increasing at $s$ on $[0,1]$, and satisfies $h(s) = g(s)$.  By Observation~\ref{obs:scale-shift-random}, $s$ is p-random.  By Corollary~\ref{cor:main}, $h(s)$ is p-random, and clearly, $h(s) = g(s) = f(r)$, which proves the theorem.
\end{proof}

\section{Some p-randomness-preserving functions}
\label{sec:apps}

Here is the class of functions we will consider:

\begin{definition}\rm
Let $I\subseteq\reals$ be an open interval.  A function $\map{f}{I}{\reals}$ is \emph{well-behaved on $I$} if $f$ is locally weakly p-computable and strongly varying at each of the p-random points in $I$.
\end{definition}

Theorem~\ref{thm:main} gives us the following corollary:

\begin{corollary}
If a function $f$ is well-behaved on an interval $I$, then $f$ preserves p-randomness, i.e., $f$ maps p-random points in $I$ to p-random points.
\end{corollary}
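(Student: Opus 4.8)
The plan is to derive this directly from Theorem~\ref{thm:main} by unwinding the definition of ``well-behaved.''  There is essentially no new content here; the work is entirely in checking that the local hypotheses of Theorem~\ref{thm:main} are met at each p-random point of $I$.

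First I would fix an arbitrary p-random point $r\in I$, with the goal of showing that $f(r)$ is p-random.  Since $f$ is well-behaved on $I$, it is locally weakly p-computable, so by definition $f$ is weakly p-computable at every p-random point of $I$, in particular at $r$.  Likewise, being well-behaved means $f$ strongly varies at each p-random point of $I$, hence at $r$.  Because $I$ is an \emph{open} interval, $r$ lies in the interior of $I$, so all the hypotheses of Theorem~\ref{thm:main} are satisfied: $I$ is an interval, $\map{f}{I}{\reals}$ is a function, $r$ is a p-random point in the interior of $I$, $f$ is weakly p-computable at $r$, and $f$ strongly varies at $r$.  Theorem~\ref{thm:main} then yields that $f(r)$ is p-random.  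Since $r$ was an arbitrary p-random point of $I$, this shows $f$ maps p-random points of $I$ to p-random points, which is the claim.

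The one place to be mildly careful---and the closest thing to an ``obstacle''---is the implicit requirement that the dyadic interval witnessing weak p-computability at $r$ (Definition~\ref{def:weak-p-comp-at-x}) and the open interval witnessing strong variation at $r$ (Definition~\ref{def:strongly-varies}) both sit inside the domain $I$, since those definitions quantify over such a neighborhood of $r$.  This is automatic: since $I$ is open and $r\in I$, any sufficiently small neighborhood of $r$ is contained in $I$, and one may shrink the witnessing intervals accordingly without disturbing the strong-variation inequality or the approximation bound.  Note also that the definition of well-behaved constrains $f$ only at the p-random points of $I$, which is exactly the set of inputs the corollary's conclusion concerns, so no hypothesis is wasted and none is missing.
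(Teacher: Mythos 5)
Your proof is correct and is essentially the paper's own argument: the paper offers no separate proof for this corollary, treating it as an immediate consequence of Theorem~\ref{thm:main} by unwinding the definition of well-behaved at each p-random point, exactly as you do. Your remark about shrinking the witnessing neighborhoods inside the open interval $I$ is a fine (if unneeded in the paper) clarification and does not change the argument.
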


A wide variety of functions are well-behaved and hence preserve p-random\-ness, including addition and multiplication by nonzero p-computable numbers, nonconstant polynomial and rational functions with p-computable coefficients, and all the familiar transcendental functions---exponential, logarithmic, trigonometric, etc.  (Define a function to be $0$ where it would otherwise be undefined.)  Although these functions may not be strongly varying at all points, they are strongly varying at all p-random points.

%[Discussion: How the strongly-varying condition (2) is tight.  Proof
%  that any nonconstant analytic function at a p-computable point with
%  uniformly p-computable coefficients is well-behaved.  Proof that the
%  product (and composition?) of two well-behaved functions that vary
%  in the same direction is well-behaved.]

\begin{definition}\rm
A sequence $c_0,c_1,c_2,\ldots \in \reals$ is \emph{uniformly p-computable} if there exists a polynomial-time function $\map{\hat{c}}{\strs\times\strs}{\rats}$ such that for all $n,r\in\nums$,
\[ \left| \hat{c}(0^n,0^r) - c_n \right| \le 2^{-r}\;. \]
\end{definition}

\begin{definition}\rm
Let $I\subseteq\reals$ be an open interval.  We say that a function $\map{f}{I}{\reals}$ is \emph{p-analytic on $I$} if there exists a p-computable point $x_0\in I$ and a uniformly p-computable sequence $c_0,c_1,c_2,\ldots$ such that for all $x\in I$,
\[ f(x) = \sum_{n=0}^{\infty} c_n(x-x_0)^n\;, \]
and the power series on the right converges absolutely for all $x\in I$.
\end{definition}

Note that if $f$ is p-analytic on $I$, then $f$ is $C^1$ on $I$.  In this section we prove the following theorem:

\begin{theorem}\label{thm:p-analytic}
Let $I\subseteq\reals$ be an open interval.  If $\map{f}{I}{\reals}$ is nonconstant and p-analytic on $I$, then $f$ is well-behaved on $I$.
\end{theorem}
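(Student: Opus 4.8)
Write $f(x)=\sum_{n\ge0}c_n(x-x_0)^n$ with $x_0\in I$ p-computable, $(c_n)$ uniformly p-computable, and the series absolutely convergent on $I$. The plan is to verify the two halves of ``well-behaved on $I$''—local weak p-computability, and strong variation at every p-random point of $I$—separately.

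For local weak p-computability I would first prove the more general statement that every p-analytic function $g$ on $I$ is weakly p-computable on every dyadic interval $\Gamma_w\subseteq I$; since $I$ is open and p-random reals are non-dyadic, this yields weak p-computability at every p-random point, i.e., local weak p-computability. The key point is that, as $\Gamma_w$ is compact and lies in the open interval $I$, there is a $\sigma>\rho:=\sup_{x\in\Gamma_w}|x-x_0|$ with $x_0+\sigma\in I$ or $x_0-\sigma\in I$ (on whichever side $\rho$ is attained), and absolute convergence there forces $|c_n|\le B\sigma^{-n}$ for a constant $B$; hence the tail $\sum_{n\ge N}|c_n|\,|x-x_0|^n$ is $O((\rho/\sigma)^N)$ uniformly over $x\in\Gamma_w$, decaying geometrically. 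Thus approximating $g(0.(wv))$ to within $2^{-t}$ needs only the first $N=O(t+|v|)$ terms, each computed from the uniform approximator of the $c_n$ and a sufficiently precise dyadic approximation of $x_0$; since $\rho$ is a constant, all intermediate rationals have polynomially many bits and the whole computation runs in time polynomial in $t$ and $|v|$. Applying this with $g=f$ gives the first half.

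For strong variation, recall that $f$ is $C^1$ on $I$ (it is p-analytic), so by Example~\ref{ex:nonzero-derivative} it suffices to show $f'(r)\ne0$ at every p-random $r\in I$. Note that differentiating a p-analytic function term by term again gives a p-analytic function on $I$—the new coefficients $(n+1)c_{n+1}$ are still uniformly p-computable, and the differentiated series still converges absolutely on $I$—so every derivative $f^{(m)}$ is p-analytic on $I$ and, by the previous paragraph, weakly p-computable on every dyadic subinterval of $I$. Suppose toward a contradiction that $f'(z)=0$ for some p-random $z\in I$. Since $f$ is nonconstant and analytic, $f'\not\equiv0$, so $z$ is a zero of $f'$ of some finite order $\nu\ge1$; setting $G:=f^{(\nu)}$ we get $G(z)=0$ but $G'(z)=f^{(\nu+1)}(z)\ne0$, i.e., $z$ is a \emph{simple} zero of the p-analytic function $G$.

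It then remains to show that any simple zero $z$ of a p-analytic function $G$ is p-computable, which contradicts the p-randomness of $z$ (no p-computable real is p-random). Fix, non-uniformly, a dyadic interval $\Gamma_w\ni z$ with $\Gamma_w\subseteq I$ so small that $z$ is the only zero of $G$ in $\Gamma_w$ and $|G(x)|\ge\kappa|x-z|$ throughout $\Gamma_w$ for some constant $\kappa>0$ (possible since $G'(z)\ne0$, by continuity and the mean value theorem), with, say, $G$ passing from negative to positive across $z$. To compute $z$ within $2^{-r}$, run $r$ rounds of bisection on $\Gamma_w$, at each step evaluating the approximator of $G$ at the dyadic ($O(r)$-bit) midpoint $c$ to precision $2^{-s}$ with $s=r+O(1)$: if the value is confidently positive or negative, recurse on the appropriate half; if it is within $2^{-s}$ of $0$, then $|G(c)|\le2^{-s+1}$, hence $|c-z|\le2^{-s+1}/\kappa\le2^{-r}$ and we output $c$. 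After $r$ rounds the surviving interval has length $\le2^{-r}$ anyway, so in every case we obtain a dyadic rational within $2^{-r}$ of $z$ in time polynomial in $r$; thus $z$ is p-computable, the desired contradiction, and so $f'(r)\ne0$ for every p-random $r\in I$. The main obstacle I foresee is precisely this last step—turning ``isolated zero of a p-analytic function'' into ``p-computable real'': it requires passing from $f'$ to the higher derivative $G=f^{(\nu)}$ so that an exploitable sign change exists, together with a careful (non-uniform) choice of the precision parameter in the bisection so as to dominate the constant $\kappa$ governing how fast $G$ departs from zero near $z$.
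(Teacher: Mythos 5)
Your proposal is correct and follows essentially the same route as the paper: the tail-bound argument for weak p-computability on dyadic subintervals is the paper's Lemma~\ref{lem:p-analytic}, and the reduction of a p-random critical point to a p-computable real---by passing to the higher derivative $G=f^{(\nu)}$ with a simple zero and locating that zero by sign-testing bisection---is exactly the content of Lemma~\ref{lem:root-of-f} and the paper's proof of Theorem~\ref{thm:p-analytic}. Your explicit handling of the ``approximate value near zero'' case in the bisection is a slightly more careful rendering of what the paper only sketches, but it is the same argument.
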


Theorem~\ref{thm:p-analytic} follows from the two lemmas below:

\begin{lemma}\label{lem:p-analytic}
Let $J\subseteq \reals$ be an open interval and let $I$ be a dyadic interval such that $I\subseteq J$.  If $f$ is p-analytic on $J$, then $f$ is weakly p-computable on $I$.
\end{lemma}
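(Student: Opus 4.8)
The plan is to approximate $f$ at a dyadic point of $I$ by truncating its power series after $O(r)$ terms and evaluating the truncation with sufficiently accurate rational approximations to the center $x_0$ and to the coefficients $c_n$. Write $I=\Gamma_y$ with $y\in\strs$, so that for any $w\in\strs$ the point $0.(yw)$ lies in $I\subseteq J$ and $f(0.(yw))$ equals the (absolutely convergent) series. The one step that requires a genuine idea is a uniform geometric bound on the terms of the series over all of $I$: let $\rho:=\sup_{x\in I}\lvert x-x_0\rvert$, which by convexity of $\lvert x-x_0\rvert$ is attained at an endpoint $p$ of $I$, and note $\rho>0$. Then $p=x_0+\rho$ or $p=x_0-\rho$; say $p=x_0+\rho$ (the other case is symmetric). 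Since $p\in I\subseteq J$ and $J$ is open, there is a fixed $\eps>0$ (depending only on $f$, $J$, $I$) with $x_0+\rho+\eps\in J$. Absolute convergence of the series at that point gives $T:=\sum_{n\ge 0}\lvert c_n\rvert(\rho+\eps)^n<\infty$, hence $\lvert c_n\rvert\le T(\rho+\eps)^{-n}$ for every $n$, and therefore $\lvert c_n(x-x_0)^n\rvert\le T\theta^n$ for all $x\in I$, where $\theta:=\rho/(\rho+\eps)\in(0,1)$. Consequently the tail $\sum_{n>N}\lvert c_n(x-x_0)^n\rvert$ is at most $T\theta^{N+1}/(1-\theta)$, which falls below $2^{-(r+2)}$ as soon as $N$ exceeds a fixed affine function of $r$; so we may take $N=N(r)=O(r)$.

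Next I fix a p-approximator $\hat{x}_0$ for $x_0$ and a uniform p-approximator $\hat c$ for the sequence $(c_n)$. On input $(w,0^r)$ the algorithm sets $x:=0.(yw)$, chooses $s=s(r)=O(r)$ (again with constants depending only on $f$, $J$, $I$), computes $\tilde x_0:=\hat x_0(0^s)$ and $\tilde c_n:=\hat c(0^n,0^s)$ for $0\le n\le N$, and outputs $v:=\sum_{n=0}^N \tilde c_n (x-\tilde x_0)^n$ in exact rational arithmetic. A routine error analysis---using $\lvert x-x_0\rvert\le\rho$, $\lvert\tilde x_0-x_0\rvert\le 2^{-s}$, $\lvert\tilde c_n-c_n\rvert\le 2^{-s}$, the bound $\lvert c_n\rvert\le T(\rho+\eps)^{-n}$, and the elementary estimate $\lvert u^n-\tilde u^n\rvert\le n\max(\lvert u\rvert,\lvert\tilde u\rvert)^{n-1}\lvert u-\tilde u\rvert$---shows that each of the $N+1$ terms of $v$ differs from the corresponding term of $\sum_{n=0}^N c_n(x-x_0)^n$ by at most $2^{-s}\cdot 2^{O(N)}$; since $N=O(r)$, taking $s$ a large enough multiple of $r$ makes the total error in the partial sum at most $2^{-(r+2)}$. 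Combined with the tail bound this gives $\lvert v-f(0.(yw))\rvert\le 2^{-(r+1)}<2^{-r}$, as required.

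Finally, the complexity bookkeeping: with $N,s=O(r)$, evaluating $\hat x_0$ and the $N+1$ values $\hat c(0^n,0^s)$ takes time polynomial in $r$; the point $x=0.(yw)$ has $\lvert y\rvert+\lvert w\rvert=O(\lvert w\rvert)$ bits (recall $\lvert y\rvert$ is fixed); and each $\tilde c_n(x-\tilde x_0)^n$ is a rational whose numerator and denominator have $O\!\big(n(\lvert y\rvert+\lvert w\rvert+s)\big)=O\!\big(r(\lvert w\rvert+r)\big)$ bits, so all $N+1$ products and their sum can be formed in time polynomial in $\lvert w\rvert+r$. Hence $\hat f(w,0^r):=v$ is polynomial-time computable and witnesses that $f$ is weakly p-computable on $I$. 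The main obstacle is the first paragraph---turning ``$I$ is compact and sits inside the open interval of absolute convergence'' into an explicit ratio $\theta<1$ and constant $T$ that make the truncation length linear in $r$; once those are in hand, the remainder is standard numerical error propagation and bit-length accounting.
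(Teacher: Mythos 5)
Your proposal is correct and follows essentially the same route as the paper: truncate the power series after $O(r)$ terms using a uniform geometric bound $|c_n(x-x_0)^n|\le T\theta^n$ obtained from absolute convergence at a point of $J$ slightly beyond $\sup_{x\in I}|x-x_0|$, then propagate the $2^{-s}$ errors in $\hat{x}_0$ and $\hat{c}$ through the truncated sum (your power-difference estimate plays the role of the paper's telescoping-product identity) and do the bit-length accounting. The only cosmetic difference is that the paper tracks explicit constants ($k$, $\ell$, $b_s$, $m_s$) where you argue with $O(\cdot)$ bounds, which changes nothing of substance.
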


\begin{proof}[Proof of Lemma~\ref{lem:p-analytic}]
Our proof mirrors standard results from calculus.  Let $c_0,c_1,c_2,\ldots$ be a uniformly p-computable sequence witnessed by $\hat{c}$, and let $x_0\in J$ be such that $f(x) = \sum_{n=0}^{\infty} c_n(x-x_0)^n$, with the right-hand side converging absolutely, for all $x\in J$.  Let $r = \sup\{|x-x_0| \mathrel{:} x\in I\}$, and let $x\in I$ with $|x-x_0| = r$.  Since $x\in J$ and $J$ is open, there must be an $\eps > 0$ such that $\sum_{n=0}^{\infty} |c_n|(r+\eps)^n < \infty$.  Hence all the terms $|c_n|(r+\eps)^n$ are upper bounded by some constant $C\ge 1$ independent of $n$.  This implies in turn that for all $-r\le z \le r$ and $m \ge 0$, we can bound the tail of the series:
\begin{equation}\label{eqn:tail-bound}
\left|\sum_{n=m}^{\infty} c_n z^n\right| \le \sum_{n=m}^{\infty} |c_n| r^n \le \sum_{n=m}^\infty C\left(\frac{r}{r+\eps}\right)^n = C\left(\frac{r}{r+\eps}\right)^m \frac{r+\eps}{\eps} \le 2^{k-m/\ell}\;,
\end{equation}
where $k = \ceiling{\lg(C(r+\eps)/\eps)}$ and $\ell := \ceiling{1/\lg((r+\eps)/r)}$.

Let $\hat{x}$ be a p-approximator for $x_0$.  Fix $w\in\two^*$ such that $I = \Gamma_w$.  For $a\in\two^*$ and $s\in\nums$, define
\begin{align*}
m_s &:= \ell(s+k+1)\;, \\
b_s &:= \bigceiling{\lg\left(2+\max_{n<m_s}\left\{|\hat{c}(0^n,\emptystr)|,|0.wa-\hat{x}(\lambda)|\right\}\right)}\;, \\
\hat{f}(a,0^s) &:= \sum_{n=0}^{m_s-1} \hat{c}(0^n,0^{s+b_s n + 2m_s + 1})\left[0.wa-\hat{x}(0^{s+b_s n + 2m_s + 1})\right]^n\;.
\end{align*}
Clearly, $\hat{f}$ is polynomial-time computable.   We then have, for all $a\in\two^*$ and $s\in\nums$, letting $e(n,s)$ denote $s+b_s n + 2m_s + 1$,
\begin{align*}
&\bigabs{\hat{f}(a,0^s) - f(0.wa)} \\
&= \bigabs{\sum_{n=0}^{m_s-1} \hat{c}(0^n,0^{e(n,s)})\left[0.wa-\hat{x}(0^{e(n,s)})\right]^n - \sum_{n=0}^{\infty} c_n(0.wa-x_0)^n} \\
&\le \bigabs{\sum_{n=0}^{m_s-1}\left[\hat{c}(0^n,0^{e(n,s)})\left[0.wa-\hat{x}(0^{e(n,s)})\right]^n - c_n(0.wa-x_0)^n\right]} \\
&\mbox{}\hspace{0.5in} + \bigabs{\sum_{n=m_s}^\infty c_n (0.wa - x_0)^n} \\
&\le \sum_{n=0}^{m_s-1} \bigabs{\hat{c}(0^n,0^{e(n,s)})\left[0.wa-\hat{x}(0^{e(n,s)})\right]^n - c_n(0.wa-x_0)^n} \\
&\mbox{}\hspace{0.5in} + 2^{k-m_s/\ell}\;,
\end{align*}
by Equation~(\ref{eqn:tail-bound}) because $|0.wa - x_0| \le r$.  By our choice of $m_s$, we have $2^{k-m_s/\ell} = 2^{-s-1}$, which bounds the tail term.  For the term being summed, we use the formula for the difference of two products as a telescoping sum:
\[ \alpha_1\cdots\alpha_n - \beta_1\cdots \beta_n = \sum_{i=1}^n \alpha_1\cdots\alpha_{i-1}(\alpha_i - \beta_i)\beta_{i+1}\cdots\beta_n\;. \]
Our choice of $b_s$ ensures that
\[ 2^{b_s} \ge \max_{n<m_s}\left\{|\hat{c}(0^n,0^{e(n,s)})|,|c_n|,|0.wa-\hat{x}(0^{e(n,s)})|,|0.wa-x_0|\right\}\;. \]
Combining these gives
\begin{align*}
&\bigabs{\hat{c}(0^n,0^{e(n,s)})\left[0.wa-\hat{x}(0^{e(n,s)})\right]^n - c_n(0.wa-x_0)^n} \\
&\le 2^{b_s n}\left(\bigabs{\hat{c}(0^n,0^{e(n,s)}) - c_n} + n\bigabs{x_0 - \hat{x}(0^{e(n,s)})}\right) \\
&\le 2^{b_s n}(n+1)2^{-e(n,s)} = (n+1)2^{-s-2m_s-1}
\end{align*}
for all $n<m_s$.  Thus
\[ \bigabs{\hat{f}(a,0^s) - f(0.wa)} \le (m_s)^2 2^{-2m_s} 2^{-s-1} + 2^{-s-1} \le 2^{-s}\;, \]
and so $\hat{f}(a,0^s)$ approximates $f(0.wa)$ closely enough.
\end{proof}

\begin{lemma}\label{lem:root-of-f}
Suppose $f$ is p-analytic and nonconstant in some open interval $I$.  If $r\in I$ satisfies $f(r)=0$, then $r$ is p-computable.
\end{lemma}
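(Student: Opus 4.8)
The plan is to show directly that $r$ admits a polynomial-time approximator: given $N$, we will compute a dyadic rational within $2^{-N}$ of $r$ in time polynomial in $N$, by a bisection search whose only delicate step is a sign test that the strong-variation hypothesis makes cheap.

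First some reductions. If $r\in\dyads$ then $r$ is trivially p-computable, so assume $r\notin\dyads$; and after a harmless integer shift (Observations~\ref{obs:scale-shift-random} and \ref{obs:scale-shift}) we may assume $r\in(0,1)$. Since $f$ is real-analytic and nonconstant on the interval $I$ and $f(r)=0$, the order $k\ge 1$ of the zero of $f$ at $r$ is finite---otherwise every derivative of $f$ vanishes at $r$, forcing $f\equiv 0$ on $I$ by the identity theorem. Put $g:=f^{(k-1)}$, so that $g(r)=0$ while $g'(r)=f^{(k)}(r)\ne 0$. A routine argument---term-by-term differentiation of the defining power series, the exact polynomial-time computability of the multipliers $(m+1)(m+2)\cdots(m+k-1)$, and the standard fact that differentiation preserves the open interval of absolute convergence---shows that $g$ is again p-analytic on $I$ (with the same, p-computable, center $x_0$). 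Hence by Lemma~\ref{lem:p-analytic}, $g$ is weakly p-computable on every dyadic interval contained in $I$. Since $g$ is $C^1$ with $g'(r)\ne 0$, Example~\ref{ex:nonzero-derivative} yields a constant $C>0$ and an open neighborhood of $r$ on which $|g(y)|=|g(y)-g(r)|\ge C|y-r|$; shrinking it, we may also assume $g$ is strictly monotone there. Because $r\notin\dyads$ and $r\in(0,1)$, the level-$n$ dyadic interval containing $r$ eventually lies inside this neighborhood and inside $I$; fix such a $\Gamma_w=[a,b]$, so $r$ is in its interior, $g$ is weakly p-computable on $\Gamma_w$, and (replacing $g$ by $-g$ if necessary, which is again p-analytic) $g$ strictly increases on $[a,b]$, whence $g(a)<g(r)=0<g(b)$.

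Now the search. To approximate $r$ within $2^{-N}$, set $\eps:=C2^{-N-2}$ and run bisection from $[a_0,b_0]:=[a,b]$, maintaining $r\in[a_j,b_j]$ with $b_j-a_j=(b-a)2^{-j}$. At step $j$ let $m_j=(a_j+b_j)/2$; this is a dyadic rational of bit-length $O(N)$ lying strictly inside $\Gamma_w$, hence equal to $0.(wv_j)$ for some $v_j\in\strs$. Compute $\tilde v_j\in\rats$ approximating $g(m_j)$ to within $\eps$ using the weak p-computability of $g$ on $\Gamma_w$; since the required accuracy parameter is $O(N+\lg(1/C))=O(N)$, this costs time polynomial in $N$. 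If $|\tilde v_j|>2\eps$, then $g(m_j)$ is nonzero with the same sign as $\tilde v_j$, which (as $g$ is increasing and $g(r)=0$) tells us whether $r<m_j$ or $r>m_j$, so we bisect accordingly. If instead $|\tilde v_j|\le 2\eps$, then $|g(m_j)|\le 3\eps<C2^{-N}$, so the strong-variation bound gives $|m_j-r|<2^{-N}$ and we output $m_j$ and halt. If we never halt this way we perform a genuine bisection at every step, so after $O(N)$ steps $b_j-a_j\le 2^{-N}$ and we output $a_j$, which is within $2^{-N}$ of $r$. Either way we output a dyadic rational within $2^{-N}$ of $r$ in time polynomial in $N$, so $r$ is p-computable.

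The main obstacle is precisely the sign test at the midpoints: a priori $m_j$ could be extremely close to $r$, so that $g(m_j)$ is too small to have its sign resolved with only a polynomial amount of precision. The anti-Lipschitz inequality $|g(m_j)|\ge C|m_j-r|$ is exactly what resolves this, converting "$g(m_j)$ is small" into "$m_j$ is already a good approximation of $r$''---so the only case in which the sign test fails is one in which the search can terminate. Everything else (finiteness of the order of the zero, p-analyticity of $g=f^{(k-1)}$, and the bookkeeping on bit-lengths and running times) is routine.
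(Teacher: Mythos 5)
Your proof is correct and follows essentially the same route as the paper's (sketched) argument: pass to $g := f^{(k-1)}$ where $k$ is the finite order of the zero, note $g$ is p-analytic (hence weakly p-computable by Lemma~\ref{lem:p-analytic}) with $g(r)=0$ and $g'(r)\ne 0$, and locate $r$ by binary search on sign tests of $g$. Your additional detail---using the strong-variation bound $|g(m_j)|\ge C|m_j-r|$ to handle midpoints where the sign cannot be resolved with polynomial precision---is a correct and welcome filling-in of a step the paper's sketch leaves implicit.
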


\begin{proof}[Proof sketch of Lemma~\ref{lem:root-of-f}]
Let $f(x) = \sum_{n=0}^\infty c_n(x - x_0)^n$, where $x_0$ is p-computable, the $c_n$ are uniformly p-computable, and the sum converges absolutely on $I$.  Let $r$ be such that $f(r) = 0$.  Expressing $f(x)$ as a power series about $r$ gives $f(x) = \sum_{n=1}^\infty c_n'(x-r)^n$ for some constants $c_n'$.  Since $f$ is nonconstant, there is a least $m>0$ such that $c_m' \ne 0$.  Then $f(r) = f'(r) = f''(r) = \cdots =f^{(n-1)}(r) = 0$, but $f^{(n)}(r) \ne 0$.

It is easy to observe that if a function $g$ is p-analytic on $I$, then so is its derivative $g'$.  Letting $g := f^{(n-1)}$, we see that: (i) $g$ is p-analytic and thus weakly p-computable; (ii) $g(r) = 0$; and (iii) $g'(r) \ne 0$.  Hence there is a neighborhood $N$ of $r$ such that $g(x)$ changes sign at $x=r$ and nowhere else.  This allows us to find $r$ quickly using binary search, testing the sign of $g(x)$ for various $x\in N$.
\end{proof}

\begin{proof}[Proof of Theorem~\ref{thm:p-analytic}]
We know already that, since $f$ has a continuous derivative, it strongly varies at any point $r$ such that $f'(r) \ne 0$ (hence if $r$ is p-random then so is $f(r)$).  If $f'(r) = 0$, then $r$ is p-computable by Lemma~\ref{lem:root-of-f}, and thus not p-random.
\end{proof}

\begin{corollary}
Let $r$ be p-random.  Then so are $e^r$, $\sin r$, $\cos r$, and $\tan r$.  If $r>0$, then $\ln r$ is p-random.  If $f$ is any fixed rational function whose numerator and denominator have p-computable coefficients, and $f$ is defined at $r$, then $f(r)$ is p-random.  If $c\ne 0$ is p-computable, then $cr$ and $c+r$ are p-random.
\end{corollary}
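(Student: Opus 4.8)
The plan is to deduce the corollary entirely from the corollary to Theorem~\ref{thm:main} (well-behaved functions preserve p-randomness) together with Theorem~\ref{thm:p-analytic}: for each function $g$ in the list it suffices to produce an open interval containing $r$ on which $g$ is nonconstant and p-analytic, or else to check directly that $g$ is weakly p-computable at $r$ and strongly varies at $r$ and invoke Theorem~\ref{thm:main}. Since $r$ is p-random it is not p-computable, and I would use this repeatedly to place $r$ away from the ``bad'' points of each function.

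\emph{Global and affine cases.} The functions $\exp$, $\sin$, $\cos$ are nonconstant and p-analytic on all of $\reals$, with Maclaurin coefficients $1/n!$, etc.\ --- rational, hence trivially uniformly p-computable --- and the series converging absolutely everywhere; so by Theorem~\ref{thm:p-analytic} and the corollary to Theorem~\ref{thm:main} they map the p-random $r$ to p-random values $e^r$, $\sin r$, $\cos r$. The affine maps $x\mapsto cx$ and $x\mapsto x+c$ (for p-computable $c\ne 0$) are likewise p-analytic on $\reals$: about $x_0=0$ they have the finite coefficient sequences $0,c,0,0,\dots$ and $c,1,0,0,\dots$, which are uniformly p-computable because $c$ is p-computable, and they are nonconstant since $c\ne0$; hence $cr$ and $c+r$ are p-random.

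\emph{Local cases.} For $\tan$, $\ln$, and a rational function $f=P/Q$ (I read the statement as tacitly assuming $f$ nonconstant, since a constant rational function sends $r$ to a p-computable real), the function is only locally p-analytic, so I would localize around $r$. First, $r$ avoids the singular points: $\pi$ is p-computable, so the poles $\pi/2+k\pi$ of $\tan$ are p-computable and hence $\ne r$; $r>0$ is given for $\ln$; and $Q(r)\ne0$ is given for $f$. Thus $r$ sits strictly inside an interval on which the function is analytic. For $\ln$ I would fix a dyadic $x_0>r/2$ and use the series $\ln x_0+\sum_{n\ge1}\frac{(-1)^{n+1}}{n x_0^n}(x-x_0)^n$: it is p-analytic on the open interval $(0,2x_0)\ni r$ --- $\ln x_0$ is p-computable since $x_0$ is a positive dyadic, and the remaining coefficients are patently uniformly p-computable --- and nonconstant, so $\ln r$ is p-random by Theorem~\ref{thm:p-analytic}. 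For $\tan$ and for $f$ I would bypass p-analyticity and apply Theorem~\ref{thm:main} directly: on a small enough dyadic interval $\Gamma_w\ni r$ that stays inside the relevant analyticity interval, $\cos$ (resp.\ $Q$) is bounded away from $0$, so $\tan=\sin/\cos$ (resp.\ $f=P/Q$) is weakly p-computable on $\Gamma_w$ by Lemma~\ref{lem:p-analytic} applied to $\sin,\cos$ (resp.\ to the polynomials $P,Q$, which are p-analytic on $\reals$) followed by a stable division; and the derivative is nonzero at $r$, since $\tan'=1+\tan^2\ge1$ everywhere in the domain of $\tan$, while for $f$ the derivative $f'$ is a nonzero rational function with p-computable coefficients whose numerator is therefore a nonzero polynomial with p-computable coefficients --- p-analytic --- so by Lemma~\ref{lem:root-of-f} each of its finitely many real roots is p-computable, and $f'(r)=0$ would make $r$ p-computable, a contradiction. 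With the derivative nonzero at $r$, Example~\ref{ex:nonzero-derivative} gives strong variation at $r$, and Theorem~\ref{thm:main} yields that $\tan r$ and $f(r)$ are p-random. (Alternatively one can verify p-analyticity on a small interval about a dyadic point near $r$ by noting that the Taylor coefficients there obey a linear recurrence with p-computable coefficients coming from $\cos\cdot\tan=\sin$, respectively from $Q\cdot f=P$.)

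The genuine obstacle is precisely this localization. The definition of p-analytic demands one power series, about one p-computable point, converging on the whole interval, which $\tan$, $\ln$, and rational functions do not satisfy globally; the work lies in choosing, for the given $r$, a dyadic expansion point and exhibiting the resulting coefficient sequence as uniformly p-computable --- routine in each case via an explicit linear recurrence --- or, more cheaply, in checking by hand the two local hypotheses of Theorem~\ref{thm:main}: weak p-computability on a dyadic interval via Lemma~\ref{lem:p-analytic}, and nonvanishing of the derivative at the p-random $r$ (the latter using Lemma~\ref{lem:root-of-f} to exclude roots of $f'$). Everything else is bookkeeping.
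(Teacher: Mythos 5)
Your proposal is correct, and at the top level it follows the same route as the paper; the paper's entire proof, however, is a single sentence --- ``all these functions are p-analytic in some neighborhood of any point in their domains'' --- i.e.\ every case is discharged through Theorem~\ref{thm:p-analytic}, including $\tan$ and rational functions, whose local p-analyticity about a nearby dyadic point (with \emph{uniformly} p-computable Taylor coefficients) is asserted rather than verified. You treat $\exp$, $\sin$, $\cos$, $\ln$, $cx$, and $c+x$ exactly as the paper does, but for $\tan$ and $P/Q$ you bypass p-analyticity and check the two hypotheses of Theorem~\ref{thm:main} directly: weak p-computability near $r$ via Lemma~\ref{lem:p-analytic} applied to numerator and denominator plus a division stable because the denominator is bounded away from $0$ on a small dyadic interval, and strong variation via Example~\ref{ex:nonzero-derivative} after excluding $f'(r)=0$ (resp.\ $\tan'(r)=0$, impossible) by applying Lemma~\ref{lem:root-of-f} to the numerator of the derivative and using that $r$ is not p-computable. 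This buys you an argument that stays entirely within the lemmas the paper actually proves, at the cost of a longer case analysis; the paper's version is shorter but leaves the uniform p-computability of the local Taylor coefficients of $\tan$ and of rational functions as an unproved (true) claim, which you only sketch parenthetically via a linear recurrence. You also correctly flag the implicit nonconstancy assumption for rational functions (a constant rational function with p-computable coefficients sends $r$ to a p-computable, hence non-p-random, value), a caveat the paper's statement and one-line proof gloss over since Theorem~\ref{thm:p-analytic} requires nonconstancy.
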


\begin{proof}
All these functions are p-analytic in some neighborhood of any point in their domains.
\end{proof}

\section{The tightness of Theorem~\ref{thm:main}}
\label{sec:tight}

In this section, we give evidence that the strongly varying property of $f$ in Theorem~\ref{thm:main} is essentially tight. To this end, we concoct monotone functions that deviate only slightly from strongly varying, but none of whose outputs are p-random.  For example, one could have $f(0.\sigma) = 0.\tau$, where the sequence $\tau$ results from the sequence $\sigma$ by inserting zeros into $\sigma$ very sparsely but infinitely often, in places that are easy for a martingale to find and bet on.

\begin{definition}\rm
Fix $Z\subseteq\nums$, and define its \emph{census function} $c(i) := |Z\intersection\{0,\ldots,i\}|$ for all $i\in\nums$.
\begin{enumerate}
\item
For every $s\in\seqs$, define $s_Z \in\seqs$ such that, for all $i\in\nums$,
\[ s_Z[i] = \left\{ \begin{array}{ll}
s[i-c(i)] & \mbox{if $i\notin Z$,} \\
0 & \mbox{if $i\in Z$.}
\end{array} \right. \]
\item
Let $\map{f_Z}{[0,1)}{\reals}$ be the function mapping $0.s$ to $0.(s_Z)$ for every $s\in\seqs$ with infinitely many zeros.
\end{enumerate}
\end{definition}

Note that $s_Z$ results from $s$ by inserting zeros at the positions $i\in Z$, shifting bits of $s$ to the right to make room.

\begin{observation}\label{obs:tight}
Let $Z\subseteq\nums$ be arbitrary, and let $c$ be its census function.
\begin{enumerate}
\item
For any $s\in\seqs$ with infinitely many zeros,
\begin{equation}\label{eqn:fZ}
f_Z(0.s) = 0.(s_Z) = \sum_{i=0}^\infty s[i]2^{-(i+c(i)+1)}\;.
\end{equation}
\item\label{item:fZ-monotone}
The function $f_Z$ is monotone ascending, and if $\nums - Z$ is infinite, then $f_Z$ is one-to-one.
\item\label{item:fZ-computable}
If the predicate, ``$n\in Z$'' is computable in time polynomial in $n$, then $f_Z$ is weakly p-computable.
\item\label{item:tight}
If $Z$ is infinite and the predicate, ``$n\in Z$'' is computable in time polynomial in $n$, then $f(x)$ is never p-random for any $x\in [0,1)$.
\end{enumerate}
\end{observation}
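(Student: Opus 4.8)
Parts~1--3 are routine unwinding of the ``insert a zero at every position of $Z$'' map $s\mapsto s_Z$, and I expect the only mild difficulty to be in part~(\ref{item:tight}), which carries the real content.

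For part~1 I would begin from the defining series $0.(s_Z)=\sum_{i\ge 0}s_Z[i]2^{-(i+1)}$, discard the terms with $i\in Z$ (each contributing $0$), and re-index the surviving sum along the map $i\mapsto i-c(i)$ restricted to $\nums\setminus Z$. That map sends the smallest element of $\nums\setminus Z$ to $0$ and sends consecutive elements of $\nums\setminus Z$ to consecutive naturals (every integer strictly between two consecutive non-$Z$ elements lies in $Z$), so it is a bijection onto $\nums$; re-indexing along it collects the contribution of each bit $s[i]$ into a single term and yields the series in~(\ref{eqn:fZ}). For part~(\ref{item:fZ-monotone}), since $s_Z$ and $s'_Z$ place their inserted zeros at the same positions and otherwise display the bits of $s$ and of $s'$ in their original order, $s_Z$ and $s'_Z$ first disagree (if at all) at the slot occupied by the first bit on which $s$ and $s'$ disagree, and there the smaller original bit sits below the larger; hence $0.s\le 0.s'$ (taking the representation with infinitely many zeros) forces $0.(s_Z)\le 0.(s'_Z)$, which is monotone ascent. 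For injectivity, note that $s_Z$ retains all the zeros of $s$, hence has infinitely many zeros, so distinct $s_Z$'s name distinct reals; and when $\nums\setminus Z$ is infinite the map $s\mapsto s_Z$ is itself one-to-one, since $s$ is recovered by reading off the bits of $s_Z$ at the (infinitely many) non-$Z$ positions. For part~(\ref{item:fZ-computable}), on input $w\in\strs$ and $0^r$ I would output the dyadic rational whose binary digits are the first $r$ bits of $(w000\cdots)_Z$; this is within $2^{-r}$ of $f_Z(0.w)=0.((w000\cdots)_Z)$, and bit $k$ of $(w000\cdots)_Z$ equals $0$ when $k\in Z$ and equals $w[k-c(k)]$ (or $0$ if $k-c(k)\ge|w|$) otherwise, where both deciding ``$k\in Z$'' and computing $c(k)=|Z\cap\{0,\dots,k\}|$ take time polynomial in $k<r$ once membership in $Z$ is polynomial-time decidable, so the approximator runs in time polynomial in $|w|+r$.

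Part~(\ref{item:tight}) is the heart of the matter. The point is that every sequence in the range of $f_Z$ has a $0$ in every position of $Z$, and a single polynomial-time martingale exploits this. Define $d(\emptystr):=1$ and, for $|w|=n$, let $d$ stake its entire capital on the outcome $0$ whenever $n\in Z$ (so $d(w0)=2d(w)$ and $d(w1)=0$) and pass whenever $n\notin Z$ (so $d(w0)=d(w1)=d(w)$); explicitly, $d(w)=2^{|Z\cap\{0,\dots,|w|-1\}|}$ when $w[i]=0$ for every $i\in Z$ with $i<|w|$, and $d(w)=0$ otherwise. I would check the averaging identity $d(w0)+d(w1)=2d(w)$ in the two cases $n\in Z$ and $n\notin Z$, so that $d$ is a nonnegative martingale with $d(\emptystr)\le 1$; and $d(w)$ is a power of $2$ or $0$ that is computed exactly in time polynomial in $|w|$ by scanning $i=0,\dots,|w|-1$, testing ``$i\in Z$'', and maintaining both a flag for a forced position carrying a $1$ and the running count of $Z$-positions seen, so $d$ is p-computable. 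Finally, on any sequence $t$ with $t[i]=0$ for all $i\in Z$ we have $d(t[0\ldots(n-1)])=2^{|Z\cap\{0,\dots,n-1\}|}$, which tends to $\infty$ because $Z$ is infinite, so $d$ (strongly) succeeds on $t$. Since $f_Z(x)=0.(s_Z)$ for the representation $x=0.s$ having infinitely many zeros and $s_Z[i]=0$ for every $i\in Z$, the martingale $d$ succeeds on $f_Z(x)$, so $f_Z(x)$ is not p-random.

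\textbf{Main obstacle.} None of this is deep; the only point needing care is in part~(\ref{item:tight}): verifying that the ``bet everything on $0$ at the $Z$-positions, pass elsewhere'' strategy really is a martingale (the averaging identity in both cases, plus $d(\emptystr)\le 1$), that it is genuinely polynomial-time computable---which is exactly where the hypothesis that membership in $Z$ is polynomial-time decidable is used---and that its capital is unbounded along the relevant sequences, which is exactly where the hypothesis that $Z$ is infinite is used.
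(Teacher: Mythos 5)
Your proposal is correct and follows essentially the same route as the paper's (sketched) proof: the same re-indexing/truncation arguments for parts 1--3, and for part 4 the identical martingale that bets all its capital on the next bit being $0$ exactly when the current length lies in $Z$ and passes otherwise, succeeding on every sequence of the form $s_Z$ because $Z$ is infinite. You simply supply the routine details (the averaging identity, exact p-computability of $d$, and the unbounded capital along $s_Z$) that the paper leaves implicit.
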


\begin{proof}[Proof sketch]
Equation~(\ref{eqn:fZ}) is a routine application of the definition of $s_Z$.  Point~(\ref{item:fZ-monotone}.) is obvious.  For Point~(\ref{item:fZ-computable}.), note that the list $\tuple{c(0),c(1),\ldots,c(n)}$ is computable in time polynomial in $n$, which makes the sum in Equation~(\ref{eqn:fZ}) easy to approximate to polynomially many terms.  For Point~(\ref{item:tight}.), consider a martingale that bets on a string $w$ iff $|w| \in Z$, in which case it puts all its money on the next bit being $0$.  This martingale will succeed on any sequence of the form $s_Z$.
\end{proof}

\begin{theorem}\label{thm:tight}
Let $Z\in\nums$ be arbitrary, and let $c$ be its census function.  For any real $x$ and $y$ such that $0\le x < y < 1$,
\begin{equation}\label{eqn:fZ-ratio}
\frac{f_Z(y) - f_Z(x)}{y-x} > 2^{-c(\ceiling{-\lg(y-x)})-1}\;.
\end{equation}
\end{theorem}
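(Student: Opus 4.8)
The plan is to estimate the difference $f_Z(y) - f_Z(x)$ from below by isolating a single carefully chosen bit position where $x$ and $y$ differ, and to track how much that bit contributes to $f_Z$ after the zero-insertion. Write $x = 0.s$ and $y = 0.t$ with $s,t \in \seqs$ (choosing, for dyadic points, the expansions that make the analysis go through, i.e.\ the terminating expansion for $x$ and some expansion with infinitely many zeros for $y$). Since $x < y$, there is a least index $k$ at which $s[k] \ne t[k]$; necessarily $s[k] = 0$ and $t[k] = 1$. Let $\delta := y - x > 0$. The first step is the elementary bound $2^{-(k+1)} \le \delta$, i.e.\ $k \ge \ceiling{-\lg \delta} - 1$; more precisely I want $k + 1 \ge -\lg\delta$, so $k \ge \ceiling{-\lg\delta} - 1$. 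Actually the cleanest route is: since $s$ and $t$ agree on bits $0,\dots,k-1$ and differ at $k$, we have $\delta = y - x \le 2^{-k}$, hence $2^k \le 1/\delta$, hence $k \le \ceiling{-\lg\delta}$ (being careful with the ceiling). This gives $c(k) \le c(\ceiling{-\lg\delta})$ since $c$ is nondecreasing.

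The second step is to bound $f_Z(y) - f_Z(x)$ below using Equation~(\ref{eqn:fZ}). Applying that formula to both,
\[
f_Z(y) - f_Z(x) = \sum_{i=0}^\infty (t[i] - s[i]) 2^{-(i+c(i)+1)}.
\]
The terms with $i < k$ vanish since $s$ and $t$ agree there. The term $i = k$ contributes exactly $+2^{-(k+c(k)+1)}$ since $t[k] - s[k] = 1$. For $i > k$ the terms $(t[i]-s[i])2^{-(i+c(i)+1)}$ can be negative, but each is bounded below by $-2^{-(i+c(i)+1)}$, and since $c$ is nondecreasing, $c(i) \ge c(k)$ for $i > k$, so $2^{-(i+c(i)+1)} \le 2^{-(i+c(k)+1)}$. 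Hence the tail satisfies
\[
\sum_{i=k+1}^\infty (t[i]-s[i])2^{-(i+c(i)+1)} \ge -\sum_{i=k+1}^\infty 2^{-(i+c(k)+1)} = -2^{-(k+c(k)+1)}.
\]
Here the strictness is the delicate point: this last inequality would only be an equality if $t[i] - s[i] = -1$ and $c(i) = c(k)$ for all $i > k$, which would force $t$ to be eventually all $0$'s and $s$ eventually all $1$'s past position $k$ — but $y$ is represented by an expansion with infinitely many zeros, ruling out $t$ being eventually $0$ only if we are careful, so instead I will argue that the inequality $f_Z(y) - f_Z(x) > 0$ combined with the lower bound gives what I need after a small loss. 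Combining the $i=k$ term with the tail bound gives $f_Z(y) - f_Z(x) \ge 2^{-(k+c(k)+1)} - 2^{-(k+c(k)+1)} = 0$, which is not yet strong enough; I need to be cleverer about extracting a strictly positive contribution of the right magnitude.

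The right fix — and this is the main obstacle — is to not cancel the $i=k$ term against the full worst-case tail, but to observe that if the tail were as negative as $-2^{-(k+c(k)+1)}$, that would force $t[k+1]t[k+2]\cdots = 000\cdots$, making $y = x + 2^{-(k+1)}$ exactly, a dyadic point; and then one re-chooses the expansion of $x$ (using its nonterminating form $s[k-1]\cdots$, or rather exploiting that $x$'s expansion before position $k$ can be taken so that there is slack), so that in fact the tail is strictly greater than $-2^{-(k+c(k)+1)}$, giving $f_Z(y) - f_Z(x) > 0$ — but to get the claimed bound I actually want to compare against $\delta$, not just show positivity. So the cleaner plan: show $f_Z(y) - f_Z(x) \ge 2^{-(k+c(k)+1)}\cdot(\text{something})$ by noting the tail is at least $-\,2^{-(k+1+c(k)+1)} = -\tfrac12 2^{-(k+c(k)+1)}$ whenever $t$ has a zero somewhere past $k+1$ (which it does, infinitely often), hence $f_Z(y)-f_Z(x) \ge \tfrac12 2^{-(k+c(k)+1)} = 2^{-(k+c(k)+2)}$. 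Then divide by $\delta \le 2^{-k}$:
\[
\frac{f_Z(y)-f_Z(x)}{y-x} \ge \frac{2^{-(k+c(k)+2)}}{2^{-k}} = 2^{-(c(k)+2)},
\]
and since $k \le \ceiling{-\lg\delta}$ gives $c(k) \le c(\ceiling{-\lg(y-x)})$, we would land at $2^{-(c(\ceiling{-\lg(y-x)})+2)}$. To sharpen the exponent from $+2$ to $+1$ as in~(\ref{eqn:fZ-ratio}) I will redo the counting slightly more tightly: choose $k$ to be the \emph{first} differing position, note $2^{-(k+1)} \le y - x$ (since $t[k]=1$, $s[k]=0$, and the preceding bits agree, so $y \ge 0.s[0..k-1]1 = x' $ and $x \le 0.s[0..k-1]0111\cdots \le x' = 0.s[0..k-1]1$, hence $y - x \le 2^{-k}$; and $y - x \ge 2^{-(k+1)}$ by looking one more bit down), so $k+1 \ge -\lg(y-x)$, i.e.\ $k \ge -\lg(y-x) - 1$, giving $2^{-k} \le 2 \cdot 2^{\lg(y-x)} = 2(y-x)$; combined with the improved tail bound $f_Z(y)-f_Z(x) \ge 2^{-(k+c(k)+1)} - 2^{-(k+c(k)+2)}= 2^{-(k+c(k)+2)}$ this yields the ratio $> 2^{-(c(k)+2)}/ (2\cdot 2^{-k} / 2^{-k}) $... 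I will reconcile the constant by pushing the tail estimate one step further (using that $t$ has infinitely many zeros to guarantee an extra factor $\tfrac12$ of slack), arriving at the stated strict inequality $> 2^{-c(\ceiling{-\lg(y-x)})-1}$. The strictness throughout comes precisely from the fact that $t$ (the expansion of $y$) has infinitely many zeros, so the geometric tail is never fully realized as a negative sum; that observation is the crux, and everything else is bookkeeping with the ceiling and the monotonicity of $c$.
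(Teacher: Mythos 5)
There is a genuine gap, and it sits exactly where you flagged "the main obstacle." Your rescue of the tail estimate is false: it is not true that the tail $\sum_{i>k}(t[i]-s[i])2^{-(i+c(i)+1)}$ is bounded below by $-\tfrac12\,2^{-(k+c(k)+1)}$, and hence the intermediate claim $f_Z(y)-f_Z(x)\ge 2^{-(k+c(k)+2)}$ fails. Take $x=0.u0\,1^{N-k}\,000\cdots$ and $y=0.u1000\cdots$ with $|u|=k$ and $N$ large. The first differing position is $k$, both canonical expansions have infinitely many zeros, and $y-x=2^{-(N+1)}$; if $c$ is constant on $\{k,\ldots,N\}$ then $f_Z(y)-f_Z(x)=2^{-c(k)-(N+1)}$, which is far smaller than $2^{-(k+c(k)+2)}$ once $N\ge k+2$. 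Your appeal to the expansion of $y$ having infinitely many zeros targets the wrong sequence: the dangerous negative tail comes from positions with $s[i]=1$, $t[i]=0$, i.e.\ from a long run of ones in $x$'s expansion matched against a run of zeros in $y$'s, which is entirely compatible with both expansions being canonical. For the same reason your auxiliary claim $y-x\ge 2^{-(k+1)}$ (used to sharpen the constant from $+2$ to $+1$) is false, and the final "reconciliation" step is not an argument. The conceptual error is working at the scale $2^{-k}$ of the first differing bit: because of borrow/carry cancellation, the true lower bound on $f_Z(y)-f_Z(x)$ lives only at the scale of $y-x$ itself (times a $2^{-c}$ factor), not at scale $2^{-k}$.

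The repair is the paper's two-step route. First reduce, using monotonicity of $f_Z$, to the special case $y=x+2^{-n}$ with $n:=\ceiling{-\lg(y-x)}$, so that $x+2^{-n}\le y<x+2^{1-n}$ and it suffices to prove $f_Z(x+2^{-n})-f_Z(x)\ge 2^{-c(n)-n}$. In that case the differing bits form a single finite carry block: with $k<n$ the largest index below $n$ where $s[k]=0$, one has $s[k\ldots(n-1)]=011\cdots1$ and $t[k\ldots(n-1)]=100\cdots0$, so the difference is the exact finite sum $2^{-(k+c(k)+1)}-\sum_{i=k+1}^{n-1}2^{-(i+c(i)+1)}$, which monotonicity of $c$ bounds below by $2^{-c(k)-n}\ge 2^{-c(n)-n}$. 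Dividing by $y-x<2^{1-n}$ then gives the strict inequality $\frac{f_Z(y)-f_Z(x)}{y-x}>2^{-c(n)-1}$ as stated; note the strictness comes from $y<x+2^{1-n}$, not from any property of the expansions.
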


If $Z$ is finite, then its census function $c$ is bounded from above, whence Theorem~\ref{thm:tight} says that $f_Z$ is strongly increasing everywhere.  The strength of Theorem~\ref{thm:tight} comes when $Z$ is infinite but extremely sparse, e.g., $Z$ is the range of the one-argument Ackermann function.  Then the theorem implies that $f_Z$ comes very close to being strongly increasing, because the function $c$ grows very slowly.  If, in addition, $Z$ satisfies Observation~\ref{obs:tight}(\ref{item:tight}.), then we get a weakly p-computable, monotone function $f_Z$ that is extremely close to being strongly increasing everywhere, but none of whose outputs is p-random.

\begin{proof}[Proof of Theorem~\ref{thm:tight}]
We first consider the case where $n$ is a positive integer and $y = x + 2^{-n}$.  In this case, we prove that
\begin{equation}\label{eqn:fZ-strong-ratio}
f_Z(x+2^{-n}) - f_Z(x) \ge 2^{-c(n)-n}\;.
\end{equation}
Once Equation~(\ref{eqn:fZ-strong-ratio}) is established, Equation~(\ref{eqn:fZ-ratio}) follows easily by the monotonicity of $f_Z$: setting $n := \ceiling{-\lg(y-x)}$ and noting that $x+ 2^{-n} \le y < x + 2^{1-n}$, we have
\[ \frac{f_Z(y) - f_Z(x)}{y-x} > 2^{n-1}[f_Z(y) - f_Z(x)] \ge 2^{n-1}[f_Z(x+2^{-n}) - f_Z(x)] \ge 2^{-c(n)-1}\;. \]

To establish Equation~(\ref{eqn:fZ-strong-ratio}), we let $s\in\seqs$ be such that $x = 0.s$ (and $s$ has infinitely many zeros).  Similarly, let $x + 2^{-n} = 0.t$ for some $t\in\seqs$ with infinitely many zeros.  It is not too hard to see that $t$ results from $s$ by adding $1$ to $s$ in the $(n-1)$th position, then carrying $1$'s to the left until a zero is reached: Let $k\in\nums$ be largest such that $k<n$ and $s[k] = 0$.  Such a $k$ must exist because $x+2^{-n} < 1$ by assumption.  Then $s$ and $t$ differ only in positions $k$ through $n-1$, where
\begin{align*}
s[k\ldots(n-1)] &= 011\cdots 1, \\
t[k\ldots(n-1)] &= 100\cdots 0.
\end{align*}
Using Equation~(\ref{eqn:fZ})---and noting that $c$ is monotone ascending---we then get
\begin{align*}
f_Z(x+2^{-n}) - f_Z(x) &= f_Z(0.t) - f_Z(0.s) = \sum_{i=1}^\infty t[i]2^{-(i+c(i)+1)} - \sum_{i=1}^\infty s[i]2^{-(i+c(i)+1)} \\
&= \sum_{i=k}^{n-1} (t[i] - s[i])2^{-(i+c(i)+1)} = 2^{-(k+c(k)+1)} - \sum_{i=k+1}^{n-1}2^{-(i+c(i)+1)} \\
&\ge 2^{-(k+c(k)+1)} - \sum_{i=k+1}^{n-1}2^{-(i+c(k)+1)} = 2^{-c(k)}\left(2^{-k-1} - \sum_{i=k+1}^{n-1}2^{-i-1}\right) \\
&= 2^{-c(k)-n} \ge 2^{-c(n)-n}\;,
\end{align*}
which establishes Equation~(\ref{eqn:fZ-strong-ratio}).
\end{proof}

\section{P-Measure}
\label{sec:measure}

There is a close connection between resource-bounded measure and resource-bounded randomness, and so it stands to reason that our results about the latter have some bearing on the former.  This is indeed the case, at least with regard to p-measure and p-randomness, as given in Theorem~\ref{thm:measure}, below.

We start with what is now the standard definition of ``p-measure $0$'' and some basic facts related to it.  See, for example, Lutz~\cite{Lutz:measure,Lutz:measure-survey} and Ambos-Spies, Terwijn, \& Zheng~\cite{ATZ:weak-complete}.

\begin{definition}\rm
A set $X\subseteq\seqs$ has \emph{p-measure $0$} (written $\mu_\textup{p}(X) = 0$) iff there exists a p-computable martingale $d$ such that, for all $s\in X$, \ $d$ succeeds on $s$.

A set $X\subseteq [0,1]$ has \emph{p-measure $0$} iff $\{x\in\seqs \mid 0.x \in X\}$ has p-measure $0$ in the sense above.
\end{definition}

\begin{observation}
A sequence $s\in\seqs$ is p-random if and only if $\{s\}$ does not have p-measure $0$.
\end{observation}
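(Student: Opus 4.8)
The plan is simply to unwind the two definitions and observe that they match up on the singleton set $\{s\}$. Recall that $s$ is p-random precisely when no p-computable martingale succeeds on $s$, and that a set $X\subseteq\seqs$ has p-measure $0$ precisely when there is a single p-computable martingale $d$ that succeeds on every element of $X$.

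First I would handle the case $X = \{s\}$: since $\{s\}$ has exactly one element, the condition ``$d$ succeeds on every element of $\{s\}$'' is literally the condition ``$d$ succeeds on $s$.'' Hence $\{s\}$ has p-measure $0$ if and only if there exists a p-computable martingale that succeeds on $s$. Negating both sides, $\{s\}$ does not have p-measure $0$ if and only if no p-computable martingale succeeds on $s$, which by Definition~\ref{def:p-random} is exactly the statement that $s$ is p-random. This gives both directions of the biconditional at once.

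I do not anticipate any real obstacle here; this is a definitional observation whose only content is the trivial remark that quantifying over the elements of a one-element set is the same as instantiating at that element. No appeal to Proposition~\ref{prop:conservative} or to any of the machinery of Section~\ref{sec:main} is needed, though one could optionally remark that by that proposition the martingale witnessing $\mu_{\textup p}(\{s\}) = 0$ may be taken conservative, which is sometimes convenient in applications.
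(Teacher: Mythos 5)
Your proof is correct and is exactly the definitional unwinding the paper intends (the Observation is stated without proof precisely because it amounts to noting that "some p-computable martingale succeeds on every element of $\{s\}$" is the same as "some p-computable martingale succeeds on $s$," whose negation is the definition of p-randomness). Nothing further is needed.
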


The next proposition follows from the fact that, for every $k\in\nums$, there exists $\ell\in\nums$ and an $n^\ell$-computable martingale $d$ that succeeds on all non-$n^k$-random sequences \cite{Lutz:measure,ATZ:weak-complete}.

\begin{proposition}\label{prop:measure-vs-random}
A set $X\subseteq\seqs$ has p-measure $0$ if and only if there exists $k\in\nums$ such that $X$ contains no $n^k$-random sequences.  The same holds \textit{mutatis mutandis} for $X\subseteq [0,1]$.
\end{proposition}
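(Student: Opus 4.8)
The plan is to prove both implications directly, with essentially all the substance of the proposition concentrated in the cited ``universal martingale'' fact (from Lutz and from Ambos-Spies--Terwijn--Zheng).

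For the forward direction, I would assume $\mu_\textup{p}(X) = 0$, witnessed by some p-computable martingale $d$ that succeeds on every $s \in X$. Since $d$ is p-computable it is in fact $n^k$-computable for some fixed $k \in \nums$. By Definition~\ref{def:p-random}, a sequence on which an $n^k$-computable martingale succeeds cannot be $n^k$-random; hence no $s \in X$ is $n^k$-random. That is precisely the assertion that there exists $k$ (namely this one) such that $X$ contains no $n^k$-random sequences.

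For the reverse direction, I would assume there is a $k \in \nums$ such that every $s \in X$ fails to be $n^k$-random. Now I invoke the cited fact: there is an $\ell \in \nums$ and an $n^\ell$-computable martingale $d$ that succeeds on every non-$n^k$-random sequence. In particular $d$ succeeds on every $s \in X$, and $d$ is p-computable, so $\mu_\textup{p}(X) = 0$ by definition. For the ``mutatis mutandis'' claim about $X \subseteq [0,1]$, I would simply note that both notions in play were defined for subsets of $[0,1]$ by transferring to the associated set of sequences $\{x \in \seqs \mid 0.x \in X\}$, respectively to the underlying sequences of the real numbers, so the $[0,1]$ equivalence is an immediate consequence of the $\seqs$ equivalence; the two-to-one ambiguity of the correspondence on $\dyads$ is harmless, since dyadic reals are never p-random and so contribute nothing on either side.

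I do not anticipate a genuine obstacle here: the argument is pure bookkeeping once the universal martingale fact is in hand, and the mild points to be careful about are only (i) recording that ``p-computable'' unpacks to ``$n^k$-computable for some $k$,'' and (ii) the harmlessness of the dyadic ambiguity in the $[0,1]$ reformulation.
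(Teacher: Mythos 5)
Your proposal is correct and follows exactly the route the paper intends: the forward direction is immediate from the definitions (a p-computable witness is $n^k$-computable for some $k$), and the reverse direction invokes the same cited fact---for every $k$ there is an $n^\ell$-computable martingale succeeding on all non-$n^k$-random sequences---that the paper states just before the proposition as its entire justification. Your added remark on the harmlessness of the dyadic ambiguity in the $[0,1]$ transfer is a fine (and correct) bookkeeping point that the paper leaves implicit.
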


The next theorem follows immediately from Lemma~\ref{lem:main} and Proposition~\ref{prop:measure-vs-random}, and it implies Corollary~\ref{cor:main}.

\begin{theorem}\label{thm:measure}
Let $X\subseteq [0,1]$ have p-measure $0$.  Suppose $\map{f}{[0,1]}{[0,1]}$ is weakly p-computable, is monotone ascending on $[0,1]$, and strongly increases at each $x \in f^{-1}(X)$.  Then $f^{-1}(X)$ has p-measure $0$.
\end{theorem}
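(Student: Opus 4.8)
The plan is to read Theorem~\ref{thm:measure} off of Lemma~\ref{lem:main} together with Proposition~\ref{prop:measure-vs-random}: Lemma~\ref{lem:main} carries essentially all the content, and the rest is bookkeeping with resource bounds. First I would apply Proposition~\ref{prop:measure-vs-random} to $X$: since $X$ has p-measure $0$, there is a single $k\in\nums$ such that $X$ contains no $n^k$-random point. Since $f$ is weakly p-computable, fix $j\in\nums$ with $f$ weakly $n^j$-computable. Applying Lemma~\ref{lem:main} to this pair $j,k$ produces a $q\in\nums$ with the property that, for every weakly $n^j$-computable monotone ascending $\map{g}{[0,1]}{\reals}$ and every $x_0\in[0,1]$ at which $g$ strongly increases on $[0,1]$, if $g(x_0)$ is not $n^k$-random then $x_0$ is not $n^q$-random.

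Next I would show that $f^{-1}(X)$ contains no $n^q$-random point. Let $x_0\in f^{-1}(X)$ be arbitrary; then $f(x_0)\in X$, so $f(x_0)$ is not $n^k$-random. By hypothesis $f$ is weakly $n^j$-computable, monotone ascending on $[0,1]$, and strongly increases at $x_0$. If the hypothesis only supplies strong increase at $x_0$ on some open neighbourhood $N$ of $x_0$ (cf.\ Definition~\ref{def:strongly-varies}) rather than on all of $[0,1]$, I would first perform the routine upgrade to strong increase at $x_0$ on $[0,1]$: since the domain has diameter $1$ and $f$ is monotone ascending, for $x\in[0,1]-N$ the difference quotient $\bigl(f(x)-f(x_0)\bigr)/(x-x_0)$ is bounded below by the positive variation of $f$ across a fixed sub-interval of $N$, divided by at most $1$, so a smaller positive constant still works globally. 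With this in hand, Lemma~\ref{lem:main} applies directly and gives that $x_0$ is not $n^q$-random. Since $x_0\in f^{-1}(X)$ was arbitrary, $f^{-1}(X)$ contains no $n^q$-random point, and the other direction of Proposition~\ref{prop:measure-vs-random} then gives that $f^{-1}(X)$ has p-measure $0$, as required.

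The one point that deserves care --- and the reason Lemma~\ref{lem:main} is stated in the uniform ``for any $j,k$ there exists $q$'' form rather than merely as Corollary~\ref{cor:main} --- is that $f^{-1}(X)$ may be uncountable, so we need a single resource bound $n^q$ witnessing the failure of randomness simultaneously at every point of $f^{-1}(X)$; the independence of $q$ from $f$ and from the particular point $x_0$ is exactly what supplies this. Apart from this observation and the trivial local-to-global upgrade of the strong-increase hypothesis, there is no real obstacle in the argument.
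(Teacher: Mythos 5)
Your proof is correct and follows essentially the same route as the paper: Proposition~\ref{prop:measure-vs-random} to extract a single $k$, the uniform ``for all $j,k$ there exists $q$'' form of Lemma~\ref{lem:main} to obtain one exponent $q$ that works at every point of $f^{-1}(X)$, and Proposition~\ref{prop:measure-vs-random} again to conclude. Your explicit monotonicity argument upgrading strong increase on a neighbourhood of $x_0$ to strong increase at $x_0$ on all of $[0,1]$ correctly fills in a step the paper leaves implicit (and is consistent with the paper's remark that no uniform constant $C$ is needed).
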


\begin{proof}
By Proposition~\ref{prop:measure-vs-random}, there exists $k\in\nums$ such that no $x\in X$ is $n^k$-random.  Suppose that $f$ is weakly $n^j$-computable, for some $j$.  Then by Lemma~\ref{lem:main}, there exists $q$ such that no $x_0 \in f^{-1}(X)$ is $n^q$-random.  So again by Proposition~\ref{prop:measure-vs-random}, $f^{-1}(X)$ has p-measure $0$.
\end{proof}

It is interesting to note that in Theorem~\ref{thm:measure} we only require $f$ to strongly increase on \emph{some} neighborhood of each point $x \in f^{-1}(X)$.  We require no uniform choice of constant $C$ in Definition~\ref{def:strongly-varies}.  So for example, the theorem applies to functions such as
\[ f(x) := \left\{ \begin{array}{ll}
(e^{1-1/x}+1)/2 & \mbox{if $x>0$,} \\
0 & \mbox{if $x=0$,}
\end{array} \right. \]
which strongly increases at all points in $[0,1]$ but for which no single constant $C$ suffices.

%
%Breutzmann \& Lutz work with nonuniform measures.  The following standard definition is adapted from \cite{BL:nonuniform}.
%
%\begin{definition}\rm
%A \emph{probability measure} on $\seqs$ is a function $\map{\nu}{\strs}{[0,1]}$ such that $\nu(\emptystr) = 1$, and for all $w\in\strs$,
%\[ \nu(w) = \nu(w0) + \nu(w1)\;. \]
%Further,
%\begin{enumerate}
%\item
%We say that $\nu$ is \emph{strongly positive} if there exists a $\delta > 0$ such that, for all $w\in\strs$ and $b\in\two$, \ $\nu(wb)/\nu(w) \ge \delta$.
%\item
%We say that $\nu$ is a \emph{product measure} (or \emph{coin-toss measure}) if there exist $\beta_0,\beta_1,\beta_2,\ldots\in [0,1]$ such that, for all $w\in\strs$,
%\[ \nu(w) = \prod_{i=0}^{|w|-1} (\beta_i \, w[i] + (1 - \beta_i)(1 - w[i]))\;. \]
%\item
%A \emph{$\nu$-martingale} is a function $\map{d}{\strs}{[0,\infty)}$ such that, for all $w\in\strs$,
%\[ d(w)\nu(w) = d(w0)\nu(w0) + d(w1)\nu(w1)\;. \]
%\end{enumerate}
%\end{definition}
%
%
%
%\begin{theorem}
%Let $\nu$ be a probability measure on $\seqs$, and let $f = f_\nu$ be its cumulative function.
%\begin{enumerate}
%\item
%If $\nu$ is strictly positive, then $f$ is continuous.
%\item
%If $\nu$ is strictly positive, then for all $x\in [0,1]$, \ $f$ is strongly increasing at $x$ on $[0,1]$.
%\item
%$\nu$ is p-computable if and only if $f$ is weakly p-computable.
%\end{enumerate}
%\end{theorem}
%
%Incidentally, the converse to (2.) above is false, witnessed by the function $f(x) := \frac{1}{2}(e^{1-1/x}+1)$ for $x>0$ and $f(0) := 0$.

\section{Further research}
\label{sec:open}

%We get as corollaries of Theorem~\ref{thm:main} that addition by a
%p-computable number and multiplication by a nonzero p-computable
%number both preserve p-randomness.
%Also to prove: well-behaved
%functions are closed under multiplication and reciprocal.  This
%immediately gives polynomials and rational functions with p-computable
%coefficients as well-behaved.
P-randomness-preserving functions are clearly closed under composition.  Are well-behaved functions closed this way also?
%Generally, if $f$ is nonconstant and analytic about a p-computable
%point $x_0$, and the power series coefficients are uniformly
%p-computable, then $f$ is well-behaved in a neighborhood of $x_0$.
%This gives us all the usual transcendental functions as well-behaved
%(exp, log, trig, etc.).

Theorem~\ref{thm:tight} notwithstanding, we are at a loss to prove a converse to Theorem~\ref{thm:main}.  Is there even a partial converse?  For example, consider the following conjecture about monotone functions:

\begin{conjecture}
If $f$ is weakly p-computable and monotone in a neighborhood of $r\in\reals$ but is not strongly varying at $r$, then $f(r)$ is not p-random.
\end{conjecture}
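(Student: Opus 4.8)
The plan is to construct, uniformly and in polynomial time, a martingale that succeeds on $0.f(r)$, using only the weak p-computability and monotonicity of $f$ together with its flatness at $r$; crucially the strategy never gets to ``see'' $r$ itself. First I would make the rescaling reductions of Section~\ref{sec:main} (Observations~\ref{obs:scale-shift-random} and~\ref{obs:scale-shift}): since $f$ is monotone on a neighbourhood of $r$, that neighbourhood contains a dyadic interval, which we rescale to $[0,1]$, and replacing $f$ by $-f$ if necessary we may assume $\map{f}{[0,1]}{[0,1]}$ is monotone ascending and weakly p-computable on $[0,1]$. For an ascending function ``not strongly varying at $r$'' is just ``not strongly increasing at $r$,'' that is, $\liminf_{z\to r}(f(z)-f(r))/(z-r)=0$. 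One sub-case is trivial: if $f$ is constant on some interval having $r$ as an endpoint or interior point, then $f(r)=f(q)$ for a dyadic $q$ in that interval, so $f(r)$ is p-computable and hence not p-random. So I would assume $f$ is non-constant on every one-sided neighbourhood of $r$, i.e.\ the flatness is first-order only.

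The main tool I would use is the pushforward measure $\nu$ on $[0,1]$ defined by taking $\nu(E)$ to be the length of the interval $f^{-1}(E)$ (it is an interval because $f$ is monotone), together with its Radon--Nikodym martingale: after normalising so that $\nu([0,1])=1$ and using half-open dyadic intervals, set $d_\nu(w):=2^{|w|}\nu(\Gamma_w)$. This is a martingale with $d_\nu(\emptystr)=1$, and the key point is that $d_\nu(w)$ is exactly the \emph{stretch factor} of $f$ over $\Gamma_w$, the length of $f^{-1}(\Gamma_w)$ divided by $2^{-|w|}$. The conjecture then reduces to two claims: (i) $d_\nu$ is p-computable, and (ii) $\limsup_n d_\nu(w_n)=\infty$, where $w_n$ is the length-$n$ prefix of $0.f(r)$. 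For~(ii) the heuristic is that if $f$ is flat at $r$ at some scale $\delta$, meaning $\eta:=|f(r\pm\delta)-f(r)|=o(\delta)$ for a suitable sign, then $f$ crams an interval of length $\approx\delta$ around $r$ into an interval of length $\approx\eta$ about $f(r)$; choosing $n\approx-\lg\eta$, the $O(1)$ dyadic intervals of length $2^{-n}$ meeting this $\eta$-interval must together absorb $\nu$-mass $\approx\delta$, so one of them has stretch factor $\gtrsim\delta/\eta\to\infty$ along a sequence of scales.

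The main obstacle --- and, I expect, the reason this is only a conjecture --- is that the dyadic interval carrying the large $\nu$-mass need not be $\Gamma_{w_n}$ itself: if $0.f(r)$ happens to lie near an edge of its dyadic interval at the bad scales, the mass sits in a neighbouring cylinder on which a martingale confined to $\Gamma_{w_n}$ cannot bet. But that is precisely the situation in which $0.f(r)$ has a long run of $1$s (or $0$s) beginning at position $n$, so what is really wanted is a \emph{look-ahead} strategy: on a prefix $w$, inspect $\nu(\Gamma_{w1}),\nu(\Gamma_{w11}),\dots$ and their $0$-analogues, and when a sharp concentration toward an edge is detected, shift most of the capital onto the corresponding run of bits. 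Packaging these local decisions into one valid p-computable martingale --- and proving it genuinely succeeds along $0.f(r)$ rather than merely along nearby sequences --- is the crux. (One also has to control the cost of inverting $f$ when computing $\nu(\Gamma_w)$; I would handle this either by assuming $f^{-1}$ is itself weakly p-computable or by observing that when $f$ is so flat that inversion is expensive, $d_\nu$ is already enormous, so crude estimates suffice.) I would therefore first prove the conjecture for the restricted class of $f$ whose inverse is weakly p-computable and whose ``flat'' scales form a p-time-decidable set --- which already subsumes every $f_Z$ of Section~\ref{sec:tight} (there the clean martingale $d_\nu$ works directly, with $d_\nu(w_n)\approx 2^{c(n)}$) and every nonconstant p-analytic function --- and only then attempt to remove the restrictions via the look-ahead refinement.
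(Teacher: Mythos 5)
The statement you are trying to prove is stated in the paper as a \emph{conjecture}: the paper offers no proof, and explicitly identifies the obstruction (``the violations to strong variation may come in places that are difficult to detect by a martingale''), noting that even its own Theorem~\ref{thm:tight} falls short. Your proposal does not close this gap, and you say so yourself; so what you have is a program, not a proof. Two of its steps are genuinely unproven. First, claim~(ii) is exactly the hard part: the mass that $f$ concentrates near $f(r)$ at a flat scale need not land in the cylinder $\Gamma_{w_n}$ of the prefix of $0.f(r)$, and your proposed fix --- a look-ahead strategy that detects ``sharp concentration toward an edge'' and rides the ensuing run of bits --- is precisely a strategy for detecting the flat scales. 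If those scales are sparse and not polynomial-time discoverable (the general case), nothing in the proposal shows a single p-computable martingale can find them, which is the very difficulty the paper cites as the reason the statement remains open. The restricted case you would prove first (flat scales p-time decidable, $f^{-1}$ weakly p-computable) essentially reproduces the $f_Z$ construction of Section~\ref{sec:tight} in pushforward-measure language, and the paper already observes that this restricted evidence does not yield the conjecture.

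Second, claim~(i) is also not established and is problematic for the same reason. Weak p-computability of $f$ only gives $2^{-r}$-approximations to $f$ at dyadic points; it gives no access to $f^{-1}$, and $f$ need not be continuous off $\dyads$. Computing $\nu(\Gamma_w)=$ (Lebesgue measure of $f^{-1}(\Gamma_w)$) requires locating preimages of the dyadic endpoints of $\Gamma_w$ by binary search on $\hat f$, and that search stalls exactly where $f$ varies by less than the approximation error --- i.e., exactly in the flat regions that make $d_\nu$ large and that the conjecture is about. Your remark that ``when inversion is expensive, $d_\nu$ is already enormous, so crude estimates suffice'' does not rescue this: the definition of p-computability requires approximating the martingale's actual values to within $2^{-r}$ (or else you must redefine the martingale as the computed quantity and re-verify the exact averaging condition $d(w)=(d(w0)+d(w1))/2$), and neither is done. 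So the proposal should be read as a reasonable reduction of the conjecture to a concrete open problem (a p-computable look-ahead strategy exploiting edge concentration of the pullback measure), not as a proof; as it stands it proves at most a special case already covered by the spirit of Section~\ref{sec:tight}.
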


Theorem~\ref{thm:tight} falls short of proving this conjecture, because it assumes that the set $Z$ is easy to compute.  In general, however, if $f$ is not strongly varying, then the violations to strong variation may come in places that are difficult to detect by a martingale.

%More generally, for what functions $\map{f}{[0,1]}{[0,1]}$ can we show that if $r$ is p-random then $f(r)$ is p-random?
%Nonmonotone $f$ may be problematic, so let's restrict attention to monotone $f$.  Maybe $f$ needs to be p-computable in some sense, e.g., $f$ restricted to the dyadic rationals should be p-computable.  Then $f$ should have some continuity properties, e.g., either left-continuous or right-continuous.  Maybe that's not enough, and we have to assume that $f$ is continuous, or further that $f$ can't vary too quickly (some weak Lipschitz-like condition, perhaps).

%We may also need to slow down the success of the original martingale $d$ more than we are, making it more conservative in places.

\bigskip

The current work may have some connections with previous work of Breutzmann \& Lutz~\cite{BL:nonuniform}, who are chiefly concerned with the resource-bounded measure of complexity classes under certain \emph{nonuniform} measures.

\begin{definition}\rm
A \emph{probability measure} on $\seqs$ is a function $\map{\nu}{\strs}{[0,1]}$ such that $\nu(\emptystr) = 1$, and for all $w\in\strs$,
\[ \nu(w) = \nu(w0) + \nu(w1)\;. \]
\end{definition}

Breutzmann \& Lutz generally show that probability measures that are sufficiently similar give rise to the same notion of resource-bounded measure $0$, at least among complexity classes with weak closure properties.  The following two definitions and propositions suggest that there may be a link between our work and theirs:

\begin{definition}\rm
Let $\nu$ be a probability measure on $\seqs$.  We let the \emph{cumulative function} of $\nu$ be the map $\map{f_\nu}{[0,1]}{[0,1]}$ defined as follows: For any $x\in [0,1]$,
\[ f_\nu(x) := \lim_{n\rightarrow\infty} \sum_{y\in\two^n\;:\;0.y < x} \nu(y)\;. \]
Note that $f_\nu$ is monotone ascending and that $f_\nu(0) = 0$ and $f_\nu(1) = 1$.
\end{definition}

\begin{definition}\rm
Let $\map{f}{[0,1]}{[0,1]}$ be any monotone ascending function such that $f(0) = 0$ and $f(1) = 1$.  Define the \emph{differential probability measure} of $f$ to be the map $\map{\nu_f}{\strs}{[0,1]}$ such that, for all $w\in\strs$,
\[ \nu_f(w) := f(0.w + 2^{|w|}) - f(0.w)\;. \]
Note that $\nu_f$ is a probability measure on $\seqs$.
\end{definition}

\begin{proposition}
For any probability measure $\nu$ and monotone ascending function $\map{f_\nu}{[0,1]}{[0,1]}$ such that $f(0) = 0$ and $f(1) = 1$,
\[ \nu = \nu_f \iff f = f_\nu\;. \]
\end{proposition}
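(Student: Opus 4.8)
The plan is to prove the two ``round trip'' identities separately and then derive both directions of the biconditional from them. Write (A): for every probability measure $\nu$ on $\seqs$ one has $\nu_{f_\nu}=\nu$; and (B): for every monotone ascending $\map{f}{[0,1]}{[0,1]}$ with $f(0)=0$ and $f(1)=1$ and every $x\in[0,1]$, one has $f_{\nu_f}(x)=\lim_{n\to\infty}f(\ceiling{2^n x}\,2^{-n})$, which equals $f(x)$ when $x$ is dyadic and $\lim_{t\to x^+}f(t)$ when $x$ is not; in particular $f_{\nu_f}=f$ exactly when $f$ is right-continuous at every non-dyadic point (true of every continuous $f$, and also, e.g., of the functions $f_Z$ of Section~\ref{sec:tight}). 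Given (A) and (B): if $f=f_\nu$ then $\nu_f=\nu_{f_\nu}=\nu$ by (A), which is the backward direction; conversely, if $\nu=\nu_f$ then $f_\nu=f_{\nu_f}$, and by (B) this equals $f$ precisely when $f$ is right-continuous at its non-dyadic points, which yields the forward direction under that reading. So the proposition reduces to (A) and (B), each a telescoping computation over the dyadic partition.

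For (A), fix $w\in\strs$. By definition $\nu_{f_\nu}(w)=f_\nu(0.w+2^{-|w|})-f_\nu(0.w)$, and since the index set $\{y\in\two^n:0.y<0.w\}$ is contained in $\{y\in\two^n:0.y<0.w+2^{-|w|}\}$, the difference of the two cumulative sums is
\[
\nu_{f_\nu}(w)=\lim_{n\to\infty}\ \sum_{\substack{y\in\two^n\\ 0.w\le 0.y<0.w+2^{-|w|}}}\nu(y).
\]
For every $n\ge|w|$ the strings in this index set are exactly those of the form $wv$ with $v\in\two^{\,n-|w|}$ — a length-$n$ string has real value in the half-open interval $[0.w,\,0.w+2^{-|w|})$ iff $w$ is a prefix of it — and iterating the additivity relation $\nu(u)=\nu(u0)+\nu(u1)$ gives $\sum_{v\in\two^{\,n-|w|}}\nu(wv)=\nu(w)$. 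Thus the partial sum equals $\nu(w)$ for every $n\ge|w|$, so the limit is $\nu(w)$ and $\nu_{f_\nu}=\nu$.

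For (B), fix $x\in[0,1]$ and expand $\sum_{y\in\two^n\,:\,0.y<x}\nu_f(y)$. The length-$n$ strings $y$ with $0.y<x$ are precisely those with $0.y\in\{0,2^{-n},\ldots,(\ceiling{2^n x}-1)\,2^{-n}\}$, and since $\nu_f(y)=f(0.y+2^{-n})-f(0.y)$ the sum telescopes to $f(\ceiling{2^n x}\,2^{-n})-f(0)=f(\ceiling{2^n x}\,2^{-n})$. The numbers $\ceiling{2^n x}\,2^{-n}$ lie in $[x,\,x+2^{-n})$ and are non-increasing in $n$, hence decrease to $x$, so by monotonicity of $f$ the limit is $f(x)$ when $x$ is dyadic (the sequence is eventually constant) and $\lim_{t\to x^+}f(t)$ otherwise. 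The main obstacle is exactly this: the composite $f\mapsto\nu_f\mapsto f_{\nu_f}$ returns the right-continuous regularization of $f$ rather than $f$ itself, so the forward implication requires (and the backward implication automatically supplies) right-continuity of $f$ at its non-dyadic points — the biconditional is to be read with this mild regularity understood. Everything else is routine bookkeeping with the dyadic intervals $\Gamma_w$.
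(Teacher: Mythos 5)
Your route—establishing the two round-trip identities $\nu_{f_\nu}=\nu$ and $f_{\nu_f}(x)=\lim_{n\to\infty}f(\ceiling{2^n x}\,2^{-n})$ and reading the biconditional off them—is sound, and there is nothing in the paper to compare it against: the proposition is stated in the further-research section without proof. Your computation of (A) is correct (for $n\ge|w|$ the length-$n$ strings $y$ with $0.y\in[0.w,\,0.w+2^{-|w|})$ are exactly the extensions of $w$, and iterated additivity collapses the block to $\nu(w)$), as is the telescoping in (B) and the silent correction of the obvious typo $2^{|w|}$ to $2^{-|w|}$ in the definition of $\nu_f$. More importantly, your caveat is not a hedge but a genuine correction of the statement: as literally written, the forward direction fails. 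Take $x_0$ non-dyadic and let $f(x)=0$ on $[0,x_0]$ and $f(x)=1$ on $(x_0,1]$; this $f$ is monotone ascending with $f(0)=0$ and $f(1)=1$, and with $\nu:=\nu_f$ we trivially have $\nu=\nu_f$, yet $f_\nu(x_0)=1\ne 0=f(x_0)$, since for every $n$ the unique $y\in\two^n$ with $0.y<x_0<0.y+2^{-n}$ contributes $\nu(y)=1$ to the partial sum while all other admissible $y$ contribute $0$. So the equivalence holds exactly under your reading—$f$ right-continuous at every non-dyadic point—and that hypothesis is self-consistent, since $f_\nu$ has this property for every probability measure $\nu$ (for non-dyadic $x$ and $t_m=\ceiling{2^m x}2^{-m}$ one checks $f_\nu(x)\le f_\nu(t_m)\le\sum_{y\in\two^m:\,0.y<x}\nu(y)\to f_\nu(x)$); the backward direction, as you note, needs no regularity at all. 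Your write-up would serve as a proof of the corrected proposition essentially as is.
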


\begin{proposition}
Let $\nu$ be a probability measure on $\seqs$, and let $f = f_\nu$ be its cumulative function.  Then $\nu$ is p-computable if and only if $f$ is weakly p-computable.
\end{proposition}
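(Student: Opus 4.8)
The plan is to exhibit explicit finite formulas tying values of $f_\nu$ at dyadic rationals to values of $\nu$ on strings, in both directions, and then to observe that each formula can be evaluated to within $2^{-r}$ in time polynomial in the input length plus $r$, given a p-approximator for the other object.

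The first ingredient is the elementary identity
\[ f_\nu(0.w) = \sum_{i\,:\,w[i]=1} \nu\bigl(w[0\ldots i-1]\,0\bigr) \qquad (w\in\strs)\;. \]
To see this I would unwind the defining limit: for $n\ge|w|$, a string $y\in\two^n$ satisfies $0.y<0.w$ exactly when its length-$|w|$ prefix is lexicographically smaller than $w$; sorting those prefixes by the first position $i$ at which they disagree with $w$ (necessarily a position with $w[i]=1$, the common part up to that position being $w[0\ldots i-1]0$) and collapsing each subtree via $\nu(u)=\nu(u0)+\nu(u1)$ yields the stated sum, which in particular does not depend on $n$. Dually, passing from $w$ to its length-$|w|$ lexicographic successor adds exactly the single term $\nu(w)$ to that same defining sum, so $\nu(w)=f_\nu(0.w+2^{-|w|})-f_\nu(0.w)$ whenever $w\ne 1^{|w|}$; and a similar direct computation gives $\nu(1^m)=1-f_\nu(0.1^m)$ for the remaining case.

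For the direction ``$\nu$ p-computable $\Rightarrow$ $f_\nu$ weakly p-computable'', I would fix a p-approximator $\hat\nu$ for $\nu$ and set $\hat f(w,0^r):=\sum_{i:\,w[i]=1}\hat\nu\bigl(w[0\ldots i-1]0,\,0^{r+\ceiling{\lg(|w|+1)}}\bigr)$. This is a sum of at most $|w|$ rationals, each produced by one call to $\hat\nu$ on a string of length $\le|w|$ with accuracy parameter $O(r+\lg|w|)$, so it is computable in time polynomial in $|w|+r$; the total error is at most $|w|\cdot 2^{-r-\ceiling{\lg(|w|+1)}}\le 2^{-r}$ by the first identity. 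For the reverse direction, I would fix a weak p-approximator $\hat f$ for $f_\nu$ and define $\hat\nu(w,0^r):=\hat f(u,0^{r+1})-\hat f(w,0^{r+1})$ when $w\ne 1^{|w|}$, where $u$ is the easily computed length-$|w|$ lexicographic successor of $w$ so that $0.u=0.w+2^{-|w|}$, and $\hat\nu(1^m,0^r):=1-\hat f(1^m,0^r)$ otherwise; each value uses at most two calls to $\hat f$ on strings of length $\le|w|$, hence runs in time polynomial in $|w|+r$, and the error is at most $2^{-r}$ by the dual identities.

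I do not anticipate a real obstacle here; the care needed is entirely in the bookkeeping. The one genuinely delicate point is getting the elementary identities right with respect to strict versus nonstrict inequalities in the definition of $f_\nu$ and the two-to-one correspondence between $\seqs$ and $[0,1]$ at dyadic points --- this is precisely why $f_\nu$ is defined via $\sum_{0.y<x}\nu(y)$ rather than $\sum_{0.y\le x}\nu(y)$, and why the boundary case $w=1^m$ (where $0.w+2^{-|w|}=1$ is not of the form $0.u$) must be handled separately. Everything else is a routine propagation of approximation errors together with a routine count of the number and size of the subcalls.
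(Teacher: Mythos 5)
Your proof is correct: the two identities $f_\nu(0.w)=\sum_{i:\,w[i]=1}\nu(w[0\ldots i-1]0)$ and $\nu(w)=f_\nu(0.w+2^{-|w|})-f_\nu(0.w)$ (with the $w=1^m$ case handled via $f_\nu(1)=1$) are exactly right, and the error/time bookkeeping goes through. The paper states this proposition without proof, and your argument is precisely the routine telescoping/differencing one it implicitly intends, so there is nothing to reconcile.
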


It would be interesting to pursue these connections further to see if our ideas can provide improvements to their results.

\subsubsection*{Acknowledgments}
I would like to thank Jack Lutz for suggesting this problem and for many interesting and valuable discussions.  The presentation was also helped significantly from an earlier draft by comments from anonymous referees.  I would also like to thank Lance Fortnow for his guidance at a crucial point.

\bibliography{/Users/steve/research/bib/master.bib}

\end{document}